\definecolor{weborange}{rgb}{.8,.3,.3}
\definecolor{webblue}{rgb}{0,0,.8}
\definecolor{internallinkcolor}{rgb}{0,.5,0}
\definecolor{externallinkcolor}{rgb}{0,0,.5}
\providecommand{\remove}[1]{}
\newcommand{\Draft}[1]{\ifdefined\IsDraft\texttt{ #1} \fi}
\titleclass{\subsubsubsection}{straight}[\subsection]
\newcounter{subsubsubsection}[subsubsection]
\renewcommand\thesubsubsubsection{\thesubsubsection.\arabic{subsubsubsection}}
\renewcommand\paragraph{\@startsection{paragraph}{5}{\z@}%
	{3.25ex \@plus1ex \@minus.2ex}%
	{-1em}%
	{\normalfont\normalsize\bfseries}}
\renewcommand\subparagraph{\@startsection{subparagraph}{6}{\parindent}%
	{3.25ex \@plus1ex \@minus .2ex}%
	{-1em}%
	{\normalfont\normalsize\bfseries}}
\def\toclevel@subsubsubsection{4}
\def\toclevel@paragraph{5}
\def\toclevel@paragraph{6}
\def\l@subsubsubsection{\@dottedtocline{4}{7em}{4em}}
\def\l@paragraph{\@dottedtocline{5}{10em}{5em}}
\def\l@subparagraph{\@dottedtocline{6}{14em}{6em}}
\newcommand{\sdotfill}{\textcolor[rgb]{0.8,0.8,0.8}{\dotfill}} 
\newenvironment{protocol}{\begin{proto}}{\end{proto}}
\newcommand{\aka} {also known as }
\newcommand{\ie}{i.e.,\xspace}
\newcommand{\eg}{e.g.,\xspace}
\newcommand{\wrt} {with respect to\xspace}
\newcommand{\wlg} {without loss of generality\xspace}
\newcommand{\abs}[1]{\left\lvert #1 \right\rvert}
\newcommand{\set}[1]{\ens{#1}}
\newcommand{\eqdef}{:=}
\newcommand{\N}{{\mathbb{N}}}
\newcommand{\zo}{\set{0,1}}
\renewcommand{\cref}{\Cref}
\newtheorem{theorem}{Theorem}[section]
\newaliascnt{lemma}{theorem}
\newtheorem{lemma}[lemma]{Lemma}
\crefname{lemma}{Lemma}{Lemmas}
\newaliascnt{claim}{theorem}
\newtheorem{claim}[claim]{Claim}
\crefname{claim}{Claim}{Claims}
\newaliascnt{corollary}{theorem}
\crefname{corollary}{Corollary}{Corollaries}
\newaliascnt{construction}{theorem}
\crefname{construction}{Construction}{Constructions}
\newaliascnt{fact}{theorem}
\newtheorem{fact}[fact]{Fact}
\crefname{fact}{Fact}{Facts}
\newaliascnt{proposition}{theorem}
\newtheorem{proposition}[proposition]{Proposition}
\crefname{proposition}{Proposition}{Propositions}
\newaliascnt{conjecture}{theorem}
\crefname{conjecture}{Conjecture}{Conjectures}
\newaliascnt{definition}{theorem}
\newtheorem{definition}[definition]{Definition}
\crefname{definition}{Definition}{Definitions}
\newaliascnt{notation}{theorem}
\newtheorem{notation}[notation]{Notation}
\crefname{notation}{Notation}{Notation}
\newaliascnt{assertion}{theorem}
\crefname{assertion}{Assertion}{Assertion}
\newaliascnt{assumption}{theorem}
\crefname{assumption}{Assumption}{Assumption}
\newaliascnt{remark}{theorem}
\newtheorem{remark}[remark]{Remark}
\crefname{remark}{Remark}{Remarks}
\newaliascnt{example}{theorem}
   \newtheorem{example}[example]{Example}
\crefname{exmaple}{Example}{Examples}
\crefname{equation}{Equation}{Equations}
\newaliascnt{proto}{theorem}
\newtheorem{proto}[proto]{Protocol}
\crefname{proto}{protocol}{protocols}
\newaliascnt{expr}{theorem}
\newtheorem{expr}[expr]{Experiment}
\crefname{experiment}{experiment}{experiments}
\newcommand{\stepref}[1]{Step~\ref{#1}}
\def\FullBox{$\Box$}
\def\qed{\ifmmode\qquad\FullBox\else{\unskip\nobreak\hfil
\penalty50\hskip1em\null\nobreak\hfil\FullBox
\parfillskip=0pt\finalhyphendemerits=0\endgraf}\fi}
\def\qedsketch{\ifmmode\Box\else{\unskip\nobreak\hfil
\penalty50\hskip1em\null\nobreak\hfil$\Box$
\parfillskip=0pt\finalhyphendemerits=0\endgraf}\fi}
\let\xx@thm\@thm
\newenvironment{algorithm}{\begin{algo}}{\vspace{-\topsep}\end{algo}}
\newaliascnt{algo}{theorem}
\newtheorem{algo}[algo]{Algorithm}
\crefname{algo}{algorithm}{algorithms}
\newcommand{\Ensuremath}[1]{\ensuremath{#1}\xspace}
\newcommand{\MathAlg}[1]{\mathsf{#1}}
\newcommand{\MathAlgX}[1]{\Ensuremath{\MathAlg{#1}}}
\newcommand{\tth}[1]{\Ensuremath{#1^{\rm th}}}
\newcommand{\ith}{\tth{i}}
\newcommand{\cI}{\mathcal{I}}
\newcommand{\cM}{{\cal{M}}}
	\newcommand{\cS}{{\cal{S}}}
\newcommand{\cx}{{\cal{X}}}
\newcommand{\cY}{{\cal{Y}}}
\newcommand{\cX}{{\cal{X}}}
\newcommand{\cZ}{{\cal{Z}}}
\newcommand{\1}{\mathds{1}}
\DeclareMathOperator{\loglog}{loglog}
\newcommand{\Ac}{\MathAlgX{A}}
\newcommand{\Bc}{\MathAlgX{B}}
\newcommand{\Cc}{\MathAlgX{C}}
\newcommand{\Pc}{\MathAlgX{P}}
\newcommand{\tPi}{\widetilde{\Pi}}
\newcommand{\getsr}{\leftarrow}
\let\originalleft\left
\let\originalright\right
\renewcommand{\left}{\mathopen{}\mathclose\bgroup\originalleft}
\renewcommand{\right}{\aftergroup\egroup\originalright}
\providecommand{\remove}[1]{}
\newcommand{\nfrac}[2]{\nicefrac{#1}{#2}}
\crefname{ineq}{Inequality}{Inequalities}
\crefname{eq}{Equality}{Equalities}
\newcommand{\Inqref}[1]{Inequality (\ref{#1})}
\newcommand{\Eqref}[1]{Equality (\ref{#1})}
    \newcommand{\drafteddata}[1]{{#1}}
    \newcommand{\drafteddata}[1]{}
\newcommand{\suchthat}{{: \;}}
\newcommand{\cond}{\,\vert\,}
\newcommand{\brackets}[1]{\left(#1\right)}
\newcommand{\bracketss}[2]{#1\brackets{#2}}
\newcommand{\braces}[1]{\left\{#1\right\}}
\newcommand{\ubraces}[1]{\{#1\}}
\renewcommand{\set}[1]{\braces{#1}}
\newcommand{\minnn}[1]{\bracketss{\min}{#1}}
\newcommand{\size}[1]{\left|#1\right|}
\newcommand{\ssize}[1]{\bigl|#1\bigl|}
\newcommand{\Ex}{\mathbb{E}}
\newcommand{\Var}{\mathrm{Var}}
\renewcommand{\Pr}{\mathbb{P}}
\newcommand{\pr}[1]{\Pr\bigl[#1\bigr]}
\newcommand{\expec}[2]{{\underset{#1}{\Ex}\bigl[#2\bigr]}}
\newcommand{\texpec}[2]{{\textstyle \Ex_{#1}\bigl[#2\bigr]}}
\newcommand{\expecc}[1]{\expec{}{#1}}
\newcommand{\varian}[1]{{\Var\bigl[#1\bigr]}}
\newcommand{\var}[1]{\Var[#1]}
\DeclareMathOperator{\Supp}{Supp}
\newcommand{\supp}[1]{\bracketss{\Supp}{#1}}
\newcommand{\KL}{\operatorname{KL}}
\newcommand{\kld}[2]{D_{\KL}\bigl({#1 \,\|\, #2} \xspace\bigr)}
\newcommand{\SD}{\mathsf{SD}}
\newcommand{\sdist}[2]{\bracketss{\SD}{#1, #2}}
\newcommand{\logg}[1]{\bracketss{\log}{#1}}
\newcommand{\ind}[1]{\mathbbm{1}_{#1}}
\newcommand{\restr}[2]{#1 \big\vert_{#2}}
\newcommand{\reals}{\mathbb{R}}
\newrobustcmd*{\bigop@dot}[2]{%
   \setbox0=\hbox{$\m@th#1#2$}%
   \vbox{%
     \lineskiplimit=\maxdimen
     \lineskip=-0.7\dimexpr\ht0+\dp0\relax
     \ialign{%
       \hfil##\hfil\cr
       $\m@th\cdot$\cr
       \box0\cr
     }%
   }%
}
\providerobustcmd*{\bigcupdot}{%
   \mathop{%
     \mathpalette\bigop@dot\bigcup
   }%
}
\newcommand{\bigOmega}[1]{\bracketss{\Omega}{#1}}
\newcommand{\bigO}[1]{\bracketss{O}{#1}}
\newcommand{\eps}{\varepsilon}
\newcommand{\Tableofcontents}{
	\thispagestyle{empty}
	\pagenumbering{gobble}
	\clearpage
	\tableofcontents
	\thispagestyle{empty}
	\clearpage
	\pagenumbering{arabic}
}
  \newcommand{\prettyformat}[1]{#1}
  \newcommand{\prettyformat}[1]{}
\let\tinput\input 
\renewcommand{\input}[1]{
	\tinput{#1} 
}
\newcommand{\numparties}{n}
\newcommand{\lenproto}{\ell}
\newcommand{\condMsg}{Q}
\newcommand{\condMsgg}[1]{\condMsg_{#1}}
\DeclareMathOperator{\Msg}{Msg}
\newcommand{\Msgg}[1]{\Msg_{#1}}
\newcommand{\Msggg}[2]{\Msg^{#2}_{#1}}
\newcommand{\tMsg}{\widetilde{\Msg}}
\newcommand{\tMsgg}[1]{\tMsg_{#1}}
\newcommand{\CorruptedMsg}{\Msg^{\Ac}}
\newcommand{\CorruptedMsgg}[1]{\CorruptedMsg_{#1}}
\newcommand{\HonestMsg}{\Msg^{\Pi}}
\newcommand{\HonestMsgg}[1]{\HonestMsg_{#1}}
\DeclareMathOperator{\msg}{msg}
\newcommand{\msgg}[1]{\msg_{#1}}
\newcommand{\hatmsg}{\widehat{\msg}}
\newcommand{\aprot}[2]{{#1}_{#2}}
\newcommand{\Party}{\Pc}
\newcommand{\partition}{\Party}
\newcommand{\party}{\mathsf{NxtParty}}
\newcommand{\Length}[1]{\bracketss{\mathrm{NumMsgs}}{#1}}
\newcommand{\BadParty}{\MathAlgX{NonRbst}}
\newcommand{\Idx}{\mathrm{SentBy}}
\newcommand{\Speaker}{\mathrm{Speaker}}
\newcommand{\diff}{\mathrm{jump}}
\newcommand{\protinc}[1]{\diff^{#1}}
\newcommand{\protincc}[2]{\bracketss{\protinc{#1}}{#2}}
\DeclareMathOperator{\Biased}{Biased}
\newcommand{\biased}[3]{\Biased_{#1}^{#2}\bigl(#3\bigr)}
\newcommand{\cbiased}[4]{\biased{#1}{#2}{#3 \cond #4}}
\newcommand{\GoodJumps}{\mathrm{RobustJumps}}
\newcommand{\SmallJumps}{\mathrm{SmallJumps}}
\newcommand{\SmallParties}{\mathrm{SmallParties}}
\newcommand{\SmallContribParties}{\mathrm{SmallContribParties}}
\newcommand{\ContribParties}{\mathrm{ContribParties}}
\newcommand{\LargeJumps}{\mathrm{LargeJumps}}
\newcommand{\LargeParties}{\mathrm{LargeParties}}
\newcommand{\CorruptedParties}{\mathrm{CorruptedParties}}
\title{
	A Tight Lower Bound on \\
	Adaptively Secure Full-Information Coin Flip \\
	\Draft{\\{\small \sc Working Draft: Please Do Not Distribute}}
}
\author{
	Iftach Haitner \thanks{Stellar Development Foundation and The Blavatnik School of Computer Science, Tel Aviv University. Email: \texttt{iftachh@gmail.com}.}~\thanks{Research supported by ERC starting grant 638121, and Israel Science Foundation grants 666/19 and 836/23. }
	\and Yonatan Karidi-Heller\footnotemark[2]~\thanks{The Blavatnik School of Computer Science, Tel Aviv University. Email: \texttt{ykaridi@gmail.com.}}
}
\begin{document}
\sloppy
\maketitle

\begin{abstract}
In a distributed {coin-flipping} protocol, \citeauthor{Blum83} [ACM Transactions on Computer Systems '83], the parties try to output a common (close to) uniform bit, even when some adversarially chosen parties try to bias the common output. In an {adaptively secure full-information coin flip}, \citeauthor{BenOrL85} [FOCS '85], the parties communicate over a broadcast channel, and a {computationally unbounded} adversary can choose which parties to corrupt \emph{along} the protocol execution. \citeauthor{BenOrL85} proved that the $\numparties$-party majority protocol is resilient to $O(\sqrt{\numparties})$ corruptions (ignoring poly-logarithmic factors), and conjectured this is a tight upper bound for any $\numparties$-party protocol (of any round complexity). Their conjecture was proved to be correct for \textit{single-turn} (each party sends a single message) \textit{single-bit} (a message is one bit) protocols \citeauthor*{lichtensteinLS89} [Combinatorica '89], \textit{symmetric} protocols \citeauthor*{GoldwasserKP15} [ICALP '15], and recently for (arbitrary message length) single-turn protocols \citeauthor*{TaumanKR18} [DISC '18]. Yet, the question of many-turn protocols was left entirely open.

In this work, we close the above gap, proving that \emph{no} $\numparties$-party protocol (of any round complexity) is resilient to $\omega(\sqrt{\numparties})$ (adaptive) corruptions.
\end{abstract}

\allowdisplaybreaks

\Tableofcontents

\section{Introduction}
In a distributed (\aka collective) {coin-flipping} protocol, \citet{Blum83},
the parties try to output a common (close to) uniform bit, even when some adversarially chosen parties try to bias the output. 
Coin-flipping protocols are fundamental primitives in cryptography and distributed computation, 
allowing distrustful parties to agree on a common random string (\eg public randomness) to be used in their joint computation. 
More generally, almost any random process/protocol/algorithm encapsulates (some form of) a coin-flipping protocol. 
Consider a random process whose Boolean output is far from being fixed (\eg has noticeable variance). 
Such a process can be thought of as a coin-flipping protocol: the common coin is the output, and the process's randomness corresponds to the parties' messages. 
Thus, lower bounds on coin-flipping protocols induce limitations on the stability of random processes (see \cref{sec:intro:Poisoning} for a concrete example).

The focus of this work is \textit{full-information} coin-flipping protocols, \citet{BenOrL85}. 
In this variant, the parties communicate solely over a single broadcast channel, and the Byzantine adversary\footnote{Once it corrupts a party, it completely controls it and can send arbitrary messages on its behalf.} is assumed to be computationally \emph{unbounded}. 
Two types of such adversaries are considered: 
A \textit{static} adversary that chooses the parties it corrupts \emph{before} the execution begins, and an \textit{adaptive} adversary that can choose the parties it wishes to corrupt \emph{during} the protocol execution (\ie as a function of the messages seen so far). 
For static adversaries, full-information coin flips are well understood, and almost matching upper (protocols) and lower (attackers) bounds are known, see \cref{sec:intro:RelatedWork}.
Much less is understood about adaptive adversaries, which are the focus of this work, and significant gaps exist between the upper and lower bounds. 
\citet*{BenOrL85} proved that the $\numparties$-party majority protocol is resilient to $O(\sqrt{\numparties})$ corruptions (ignoring poly-logarithmic factors in $\numparties$), and conjectured that this is a tight upper bound for any $\numparties$-party protocol (\ie of any round complexity). 
The works of \citet*{lichtensteinLS89, GoldwasserKP15} made progress towards proving the conjecture for \textit{single-turn} (each party sends a single message) protocols, a case that was eventually proved by \citet*{TaumanKR18}. 
Yet, the question of many-turn protocols was left entirely open.

\subsection{Our Results} \label{sec:intro:Results}
We solve this intriguing question, showing that the output of any $\numparties$-party protocol can be \emph{fully biased} by an \emph{adaptive} adversary corrupting $O(\sqrt{\numparties})$ parties (ignoring poly-logarithmic factors).

\begin{theorem} [Biasing full-information coin-flipping protocols, informal] \label{thm:biasing_coinflips:Inf}
	For any $\numparties$-party full-information coin-flipping protocol, there exists $b\in\zo$ and an (unbounded) adversary that, by adaptively corrupting $O(\sqrt{\numparties})$ of the parties, forces the outcome of the protocol to $b$, except with probability $o(1)$.  
\end{theorem}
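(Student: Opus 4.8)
The plan is to first recast the statement as a claim about martingales. Given an $\numparties$-party full-information coin-flipping protocol $\Pi$, let $X_t$ be the probability that $\Pi$ outputs $1$ conditioned on the first $t$ broadcast messages of an honest execution; then $X_0$ is a constant, $(X_t)_{t=0}^{T}$ is a martingale with respect to the transcript filtration, and $X_T \in \zo$ at termination ($T$ the number of rounds). We may assume $X_0 \in [1/3, 2/3]$, the extreme cases being strictly easier. The feature we exploit is that whenever the (a priori fixed) party $\party(t)$ that speaks at round $t$ has already been corrupted, the adversary may choose its message, thereby replacing the honest jump $X_t - X_{t-1}$ either by $\max_m X_t(m) - X_{t-1} \ge 0$ (pushing $X$ towards $1$) or by the symmetric non-positive quantity (pushing towards $0$). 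So it suffices to exhibit, for one target $b \in \zo$, an adaptive adversary corrupting $\bigOtilde{\sqrt{\numparties}}$ parties that drives $X_T$ to within $\smallO{1}$ of $b$.

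Next I would fix the target $b = 1$ (the $b = 0$ case is symmetric, and a win--win argument below explains why one of the two must work) and describe the adversary: it scans the rounds in order while maintaining a set $C$ of corrupted parties. At round $t$, if $\party(t) \in C$ it plays the message maximizing $X_t$; if $\party(t) \notin C$ it decides, by a threshold rule, whether to spend one unit of its corruption budget on $\party(t)$ right now --- roughly, it corrupts $\party(t)$ exactly when the gain it can still harvest from that party over the remaining execution, measured by an appropriate single-party \emph{push potential}, exceeds a threshold $\tau$. The two quantities to control are then $\size{C}$ at the end and the terminal value $X_T$.

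For the budget, the basic resource inequality is the quadratic-variation identity $\sum_{t} \expecc{(X_t - X_{t-1})^2} = X_0(1 - X_0) \le 1/4$, together with conditional and stopped refinements: a fixed party can be responsible for a jump of size $\ge \delta$ only $\bigO{1/\delta^2}$ times in expectation, and more generally the total push any strategy can extract is $\bigO{1}$. Taking $\tau = \widetilde\Theta(1/\sqrt{\numparties})$ then gives $\size{C} = \bigOtilde{\sqrt{\numparties}}$ by a potential argument (each corruption is charged at least $\tau$ units of consumed variation/push, and the total is $\bigO{1}$), while keeping $\tau$ small enough that the harvested pushes still accumulate: every party left uncorrupted has push potential below $\tau$, and feeding this into a concentration bound (Freedman's inequality for the martingale formed by the adversary's running pushes) together with a union bound over $\mathrm{polylog}(\numparties)$ dyadic jump-size scales shows the process cannot get stuck strictly between $0$ and $1$. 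The win--win is then: if under the push-to-$1$ strategy $X_T$ stayed bounded away from $1$ with constant probability, then the symmetric analysis of the push-to-$0$ strategy would force the honest protocol to consume more than $1/4$ units of quadratic variation spread over more than $\numparties$ distinct parties --- impossible. Hence for one of $b \in \zo$ the corresponding strategy drives $X_T$ to $b$ except with probability $\smallO{1}$, using $\bigOtilde{\sqrt{\numparties}}$ corruptions.

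The hard part will be the interplay between adaptivity and many rounds. Unlike the single-turn setting, a party may speak polynomially often; its push potential is not a fixed scalar but drifts as other parties get corrupted and as $X$ moves; and the adversary must commit to corruptions online, with no knowledge of the future transcript. Making the threshold rule robust to all of this is where the real work lies: cleanly separating the ``large'' single-message jumps (for which there is a hard $\bigO{1/\delta^2}$ count) from the ``small'' jumps that must be patiently accumulated, controlling the $\mathrm{polylog}$ losses incurred by the scale-by-scale union bound and by the concentration inequalities, and --- most delicately --- showing that the greedy, online threshold rule is competitive with the best offline choice of which $\bigOtilde{\sqrt{\numparties}}$ parties to corrupt.
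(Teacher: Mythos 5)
There is a genuine gap, and it is exactly the obstacle this paper identifies as the crux of the many-turn case. Your adversary has a corrupted party play, in every round, the message maximizing $X_t$, and you budget corruptions against the \emph{honest} execution's quadratic variation $\sum_t \expecc{(X_t-X_{t-1})^2}=X_0(1-X_0)\le \nfrac14$. Both halves fail for many-turn protocols. First, greedy per-round maximization can destroy a corrupted party's future influence: consider the paper's motivating example, an $\numparties$-party majority-style protocol in which each party sends $\numparties$ bits and the protocol ignores a party's coins once they look suspicious (say, a $1$-run of length $\log^2 \numparties$). Under your rule a corrupted party immediately triggers the punishment, harvesting only $O(\log^2\numparties/\numparties)$ of push, so $\widetilde{O}(\sqrt{\numparties})$ corruptions yield total push $o(1)$ --- and the failure is symmetric in the target bit, so the win--win does not rescue it. Second, the $\nfrac14$ budget and the statement ``every uncorrupted party has push potential below $\tau$'' are facts about the honest martingale, but your threshold rule is evaluated along the \emph{attacked} execution; under aggressive maximization that execution's transcript distribution can be statistically far from honest, and nothing in the proposal transfers the honest-execution bounds to it. The paper's resolution of both problems is the one idea missing from your sketch: the attack is \emph{gentle} --- a party is corrupted with probability proportional to the conditional standard deviation of its jump (or $\nfrac1{\sqrt{\numparties}}$), and a corrupted message is only tilted by a $\Biased$ reweighting with parameter inverse to that standard deviation --- so the attacked execution stays KL-close to the honest one (\cref{lem:smallvar,lemma:smallkl}), which is precisely what lets the honest variance budget and the per-party bookkeeping (the normal-form split into pseudo-parties, \cref{def:normal_protocol}) carry over. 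You defer the online-vs-offline competitiveness of the greedy rule as ``the real work,'' but it is not a deferred detail: the specific rule you propose is already insufficient against the punishing-majority protocol.

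The selection of $b$ is also not established. Your justification (``otherwise the push-to-$0$ analysis would force more than $\nfrac14$ units of variation over more than $\numparties$ parties'') is an assertion, not an argument, and it is unclear what it would even mean once the process is no longer the honest martingale. The paper's win--win has a different and concrete structure: it isolates large \emph{negative} jumps (support reaching below $-\supportthreshold$), repeatedly composes a deterministic one-shot adversary (\cref{alg:adversary_badjumps}) that forces such a jump whenever one is available, and shows that after at most $\widetilde{O}(\sqrt{\numparties})$ iterations either the expected outcome has dropped below $\eps$ (bias to $0$, one corruption per iteration) or the resulting protocol is robust, at which point the gentle attack of \cref{thm:biasing_zero_robust_protocols} biases it to $1$ via a coupling of attacked and honest jumps plus a Chebyshev bound (no Freedman-type inequality or dyadic union bound is needed). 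Without a mechanism of this kind, your proposal never proves that at least one of the two directions must succeed.
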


The above lower bound matches (up to poly-logarithmic factors) the upper bound achieved by the $\numparties$-party majority protocol \cite{BenOrL85}. 
The bound extends to biased protocols, \ie protocols with expected outcome (in an all-honest execution) different from $\nfrac12$.
We also remark that the one-side restriction (only possible to bias the protocol outcome to some $b\in\zo$) is inherent, as there exists, for instance, an $\numparties$-party (single-turn) protocol that is resilient to $\Theta(\numparties)$ corruptions trying to bias its outcome towards one.\footnote{
	{\label{fn:1} Consider the $\numparties$-party single-turn protocol in which each party broadcasts a ${(\nfrac 1\numparties,1- \nfrac 1\numparties)}$-biased bit (\ie equals zero with probability $\nfrac 1\numparties$) and the protocol output is set to the AND of these bits. 
	It is clear that the protocol expected outcome is $(1-\nfrac 1\numparties)^\numparties \approx 1/e$ (can be made $\nfrac 12$ by slightly changing the distribution), and that even $\numparties/2$ adaptive corruptions cannot change the protocol outcome to a value larger than $(1-\nfrac 1\numparties)^{\numparties/2} \approx \sqrt{1/e}$.}
}

Although the one-sidedness is inherent in the adaptive case, we can overcome this by allowing the adversary to perform \emph{strongly adaptive} corruptions, \ie the adversary can decide whether to corrupt a party \emph{after} seeing the message it is about to send.

\begin{theorem} [Biasing full-information coin-flipping protocols via strongly adaptive attacks, informal]
	For any $\numparties$-party full-information coin-flipping protocol and {\sf any non-insignificant output} $b\in\zo$, there exists an (unbounded) adversary that by performing at most $O(\sqrt{\numparties})$ strongly-adaptive corruptions, forces the outcome of the protocol to $b$, except with probability $o(1)$.
\end{theorem}

\subsection{Related Work} \label{sec:intro:RelatedWork}

\subsubsection{Full-Information Coin Flip}
We recall the main known results for $\numparties$-party full-information coin-flipping protocols. 
\paragraph{Adaptive adversaries.}
In the following, we ignore poly-logarithmic factors in $\numparties$.
\begin{description}
	\item[Upper bounds (protocols).] \citet*{BenOrL85} proved that the majority protocol is resilient to $O(\sqrt{\numparties})$ corruptions.
	
	\item[Lower bounds (attacks).] \citet*{lichtensteinLS89} proved that no \textit{single-bit} (messages are one bit) single-turn protocol is resilient to $\bigOmega{\sqrt{n}}$ adaptive corruptions (hence, majority is optimal for such protocols). \citet{Dodis2001} proved that it is impossible to create a coin-flipping protocol resilient to $\bigOmega{\sqrt{n}}$ adaptive corruptions by \emph{sequentially repeating} another coin-flipping protocol, and then applying a deterministic function to the outcomes. \citet*{GoldwasserKP15} proved that that no \emph{symmetric} single-turn (many-bit) protocol is resilient to $\bigOmega{\sqrt{n}}$ adaptive corruptions. Their result extends to \emph{strongly adaptive} attacks (the attacker can decide to corrupt a party \emph{after} seeing the message it is about to send) on single-turn protocols. \citet*{TaumanKR18} fully answered the single-turn case by proving that no single-turn protocol is resilient to $\bigOmega{\sqrt{n}}$ adaptive corruptions. Lastly, \citet*{EOMSM19} presented an \emph{efficient} strongly adaptive attack on protocols of certain properties (\eg public coins).
\end{description}

\paragraph{Static adversaries.}
The case of static adversaries is well studied and understood. 
\begin{description}
	\item[Upper bounds (protocols).] \citet{BenOrL85} presented a protocol that tolerates $\bigO{\numparties^{0.63}}$ corrupted parties (an improvement on the $O(\sqrt{\numparties})$ corrupted parties it takes to bias the majority protocol). \citet{AjtaiL93} presented a protocol that tolerates $\bigO{\nfrac{\numparties}{\log^2 \numparties}}$ corruptions. \citet{Saks89} presented a protocol that tolerates $\bigO{\nfrac{\numparties}{\log \numparties}}$ corruptions. The protocol of \cite{Saks89} was later improved by \citet{AlonN1993} to tolerate a constant fraction of corrupted parties. Shortly afterwards, \citet{boppana2000perfect} presented an optimal protocol resilient to $(\nfrac12 - \delta)\numparties$ corruptions for any $\delta>0$.
	
	\item[Lower bounds (attacks).] \citet*{KKL88} proved that no single-bit single-turn protocol can tolerate $\bigOmega{\nfrac{\numparties}{\log \numparties}}$ corruptions. \citet*{RSZ02} proved that a protocol tolerating $\bigOmega{\numparties}$ corruptions is either many-bit or has $\Omega(\nfrac12 - o(1))\cdot \log^* (\numparties)$ rounds.
\end{description}

The reader is referred to \citet{Dodis06} for a more elaborated, somewhat outdated survey on full-information coin flip and friends.

\subsubsection{Data-Poisoning Attacks} \label{sec:intro:Poisoning}
Consider a learning algorithm that tries to learn a hypothesis from a training set of samples from different sources. The random process corresponding to this learning task can be naturally viewed as a coin-flipping protocol. Moreover, as first noticed by \citet{MahloujifarM17}, an attacker on the latter coin flip induces a so-called \textit {data-poisoning attack}: increasing the probability of a desired property (\ie poisoning the training data) by tampering with a small number of sources. For this application, however, the attacker would better be able to force a predetermined output (rather than forcing some output, as our attack achieves). Hence, the attack we apply on the coin-flipping protocol should be \emph{bidirectional} (have the ability to (almost-) fully determine the coin, rather than biasing it to some arbitrary value). While this goal is unachievable in some models (see \cref{fn:1}), it is achievable in some important ones (\eg \cite{EtesamiMM20}).

Materializing their observation, \citet{MahloujifarM17} translated a two-directional \emph{static} attack into a static data-poisoning attack on learning algorithms. Their attack was further improved in \cite{MahloujifarMM2018,MahloujifarDM19a}. \citet{MahloujifarM2019} translated the two-directional \emph{adaptive} attack of \cite{TaumanKR18} on single-bit, single-turn coin-flipping protocols into an adaptive data-poisoning attack. Finally, \citet{EtesamiMM20} facilitated their strongly adaptive attack on single-turn coin-flipping protocols (see \cref{sec:intro:RelatedWork}) to obtain a strongly-adaptive data-poisoning attack.

The adaptive adversaries above cast attacks on \emph{single-turn} coin-flipping protocols into data-poisoning attacks that tamper some samples. With the tools we present (see \cref{thm:biasing_coinflips:Inf}), it is now possible to discuss data-poisoning attacks that scale with the amount \emph{sources}, rather than the amount of samples (which we may have orders of magnitude more of).

\subsection*{Open Questions}
In this work, we show that the expected outcome of \emph{any} $\numparties$-party full-information coin-flipping protocol can be biased to either $o(1)$ or to $1-o(1)$, using $O(\sqrt{\numparties})$ corruptions. The above $o(1)$, however, stands for $1/\loglog(\numparties)$, and it remains an intriguing question whether it can be pushed to $2^{-\polylog(\numparties)}$ as can be achieved, for instance, when attacking the $\numparties$-party majority protocol. Such attacks are known for \emph{uniform} single-bit single-turn protocols (a secondary result of \cite{TaumanKR18}) and for \emph{strongly adaptive} attackers against single-turn protocols \cite{EOMSM19}.

\subsection*{Paper Organization}
A rather elaborate description of our attack on coin-flipping protocols is given in \cref{sec:Technique}. Basic notations, definitions, and facts are given in \cref{sec:Preliminaries}. We also present there some useful manipulations of coin-flipping protocols. In \cref{sec:AttackingRobustProtocols}, we show how to attack protocols of certain structure, that we refer to as \textit{robust}, and in \cref{sec:arbitrary_coinflip} we extend this attack to arbitrary protocols.
Finally, in \cref{sec:StronglyAdaptive} we present a \emph{bidirectional} strongly adaptive attack (a bidirectional attack is impossible in the standard adaptive model).
\section{Our Technique}\label{sec:Technique}
This section gives a rather elaborate description and analysis of our adaptive attack on full-information coin-flipping protocols. Let $\Pi$ be an $\numparties$-party, $\lenproto$-round, full-information coin-flipping protocol. We prove that one can either bias the expected outcome of $\Pi$ to less than $\eps \eqdef 1/\loglog(\numparties)$, or to more than $1-\eps$.

Similarly to previous adaptive attacks on full-information coin-flipping protocols, our attack makes use of the ``jumps'' in the protocol's expected outcome; 
assume \wlg (see \cref{sec:prelim:fullcoin} for justification) that in each round only a \emph{single} party sends a message, and let $\Msg = (\Msgg{1},\ldots, \Msgg{\lenproto})$ denote the protocol transcript (\ie parties' messages) in a random all-honest execution of $\Pi$. 
For $\msg\in \Supp(\Msg)$, let $\Pi(\msg)$ denote the final outcome of the execution described by $\msg$. For $\msgg{\le i}\in \Supp(\Msgg{\le i} \eqdef (\Msgg{1},\ldots,\Msgg{i}))$ let $\Pi(\msgg{\le i}) \eqdef \expecc{\Pi(\Msg) \mid \Msgg{\le i} = \msgg{\le i})}$ be the expected outcome given a partial transcript, 
and let
\[
	\diff(\msgg{\le i}) \eqdef \Pi(\msgg{\le i}) - \Pi(\msgg{<i})
\]
be the ``jump'' in the expected outcome induced by the \ith message. 
Accordingly, we refer to $\diff(\Msgg{\le i})$ as the \ith jump in the protocol execution. 
Our attack manipulates these jumps in a very different manner than what previous attacks did. 
First, the decision whether to corrupt a given message is based on the (conditional) \emph{variance} of the jumps ($L_2$ norm), a more subtle measure than the \emph{maximal} possible change ($L_\infty$ norm) considered by previous attacks. 
Second, even when the attacker decides that the next message is useful for biasing the protocol's outcome, it only \emph{gently} alters the message: 
It corrupts the party about to send the message only with a certain probability, and even when corrupting, only moderately changes the message distribution. Being gentle allows the attack to bypass the main obstacles in attacking many-turn protocols. 
The gentleness also makes analyzing the attack's performance easier; 
the transcript of the gently attacked execution is not ``too different'' from the all-honest (un-attacked) execution of the protocol. 
Consequently, the analysis requires only a good understanding of the all-honest execution, and not of the typically very complicated execution the attack induces.\footnote{
	Gentle attacks, in the general sense that the attacker does not try to maximize the effect of the attack in each round but instead keeps the attacked execution similar to the all-honest one, were found helpful in many other settings. A partial list includes attacking different types of coin-flipping protocols \cite{Maji10,HaitnerOmri14,BermanHT18,BHMO18}, and proving parallel repetition of computationally sound proofs \cite{HaastadPWP10,HaitnerPR13,BermanHT20}.
} 
Details below.

We start by describing an attack on \emph{robust} protocols: for some $b\in \zo$, the protocol has \emph{no} $\nfrac{1}{\sqrt{\numparties}}$ jumps towards $b$,\footnote{
	An almost accurate example of a (bidirectional) robust protocol is the single-bit, single-turn, $\numparties$-party \textit{majority} protocol: in each round, a single party broadcasts an unbiased coin, and the protocol's final output is set to the majority of the coins. 
	It is well known that the absolute value of most jumps is (typically) order of $\nfrac{1}{\sqrt{\numparties}}$.
} and for concreteness, we assume $b=0$. That is, for every $i$ (w.p.\ one)
\begin{align}\label{eq:intro:Technique:1}
\Pi(\Msgg{\le i}) \ge \Pi(\Msgg{< i}) - \nfrac{\eps^2}{\sqrt{\numparties}}
\end{align}

We first consider single-turn robust protocols and then generalize to many-turn (robust) protocols. The extension to arbitrary (non-robust) protocols is described in \cref{sec:intro:Technique:NonRobust}.

\subsection{Attacking Robust Single-Turn Coin Flip}\label{sec:intro:Technique:SingleRobust}

Our attack gently biases the expected outcome of the protocol by carefully manipulating the message distributions of the corrupt parties. This manipulation (to be applied to the party's next message distribution) is defined as follows.

\begin{definition}[$\Biased$ distribution]
	For a distribution $P$,  $\alpha \ge 0$ and mapping $f \colon \supp{P} \mapsto [-\nfrac{1}{\alpha}, \infty)$ with $\expecc{f(P)} = 0$, let $\biased{\alpha}{f}{P}$ be the distribution defined by 
	\[
		\pr{\biased{\alpha}{f}{P} = e} \eqdef \pr{P = e} \cdot \brackets{1 + \alpha \cdot f(e)}\enspace
	\]
\end{definition}

That is, the distribution $P$ is ``nudged'' towards larger values of $f$, \ie increasing the probability of positive elements (causing $\expecc{\bracketss{f}{\biased{\alpha}{f}{P}}} \ge \expecc{f(P)}$), and the larger $\alpha$ is the larger the bias is.
Equipped with this definition, our attacker on a ($\numparties$-party, single-turn, $\numparties$-round, robust) coin-flipping protocol $\Pi$ is defined in as follows:

\begin{algorithm}[Single-turn attacker, informal]\label{alg:intro:Technique:Single}
	\smallskip
	For $i=1$ to $\numparties$, do the following \emph{before} the \ith message is sent:
	\begin{enumerate}
		\item Let $\msgg{< i}$ be the previously sent messages. Let $\condMsgg{i} \eqdef \Msgg{i}|_{\Msgg{<i} = \msgg{< i}}$, $\diff_i \eqdef \diff(\msgg{< i}, \cdot)$, let $v_i \eqdef \var{\diff_i(\condMsgg{i})}$.

		\item If $v_i \ge \nfrac{\eps^4}{\numparties}$, corrupt the \ith party with probability $\nfrac{1}{\eps^4} \cdot \sqrt{v_i}$. If corrupted, instruct it to send the next message according to $\biased{\nfrac{1}{\sqrt{v_i}}}{\diff_i}{\condMsgg{i}}$. \label{alg:intro:Technique:Single:corruption}
	\end{enumerate}
\end{algorithm}

That is, a party is corrupted with probability proportional to the (conditional) standard deviation it induces on the expected outcome of $\Pi$ (\ie messages inducing larger variance on the protocol outcome are more likely to be corrupted). 
If corrupted, the message distribution is modified so that the change it induces on the expected outcome of $\Pi$ is biased towards one, where the bias is proportional to the inverse of the standard deviation (\ie messages with smaller variance are leveraged more aggressively, to ``compensate'' for the fact they have small variance).

\begin{example}[Attacking single-turn majority]
	If $\Pi$ is the single-turn, $\numparties$-party, single-bit, majority protocol, then (typically) each $v_i$ is of (absolute) order $\nfrac{1}{\numparties}$. 
	Thus, in expectation, the above attack corrupts $\nfrac{1}{\eps^3} \cdot \sqrt{\numparties}$ parties. 
	If corrupted, the party's bit message is set to $1$ with probability $\approx \nfrac12 \cdot(1 + \sqrt{\numparties}\cdot \nfrac{1}{\sqrt{\numparties}}) = 1$.
\end{example}

In the following, we argue that the attacker indeed biases the expected outcome of $\Pi$ to $1 - \eps$ and that the expected number of corruptions is $\widetilde{O}(\sqrt{\numparties})$. 
Therefore, a Markov bound yields the existence of the required attacker. 
We prove the success of our attack by showing that the attacked protocol has too little ``liveliness'' to resist the attacker's bias. Consequently, the outcome is (with high probability) the value the attacker biases towards. 
Our notion of liveliness is the \emph{conditional variance} of some underlying distribution induced by the attack.

Let $\CorruptedMsg = (\CorruptedMsgg{1},\ldots, \CorruptedMsgg{\numparties})$ be the message distribution induced by the above attack.
For $i\in [\numparties]$, let $\condMsgg{i}$ be the value of $\condMsgg{i}$ in the attacked execution, determined by $\CorruptedMsgg{i-1}$, and sample $Y_i \leftarrow \diff(\CorruptedMsgg{<i}, \condMsgg{i})$. 
$\condMsgg{i}$ correspond to an honest message distribution, therefore, $\expecc{Y_i \mid \CorruptedMsgg{<i}} = 0$. Hence, the sequence $Y = (Y_1,\ldots,Y_n)$ is a \textit{martingale difference sequence} \wrt $(\CorruptedMsgg{i}, Y_i)_{i=1}^\numparties$.\footnote{
	The $Y_i$'s are sampled such that they are independent of each other even when conditioned on $\CorruptedMsg$.}  
This martingale can be seen as an honest execution based on a corrupted history. 

We show that $Y_1, \dots, Y_n$ have little ``liveliness'': their overall impact on the outcome is small. It follows protocol's outcome is determined solely by the adversary's manipulations. Those manipulations push the outcome towards $1$, so we conclude that the protocol outcome must be $1$. 
The core of our analysis lies in the following lemma, where we argue about the liveliness of $Y_1, \dots, Y_n$ (defined as the sum of conditional variances of $Y_1, \dots, Y_n$).

\begin{lemma}\label{lem:intro:Technique:Single}
	$\expecc{\sum_{i=1}^\numparties \var{Y_i \mid \CorruptedMsgg{<i}}}\le \eps^3$.
\end{lemma}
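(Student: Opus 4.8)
The plan is to show that, conditioned on any partial attacked transcript $\hMsgg{<i}$, the expected one-step increment of the submartingale $S$ equals $\nfrac1{\eps^3}\cdot\var{Y_i \mid \hMsgg{<i}}$; the lemma then follows by summing over $i$ and telescoping. The first (easy) step is the identity $\var{Y_i \mid \hMsgg{<i}} = v_i$, where $v_i$, $\diff_i = \diff(\hMsgg{<i},\cdot)$, the honest distribution $Q_i$, the corruption probability $p_i$ and the bias parameter $\al_i$ are all evaluated along $\hMsgg{<i}$ and hence are functions of it. Indeed, conditioned on $\hMsgg{<i}$ the variable $Y_i$ is distributed as $\diff_i(Q_i)$, and since $Q_i$ is the \emph{honest} round-$i$ message distribution we have $\expecc{\diff_i(Q_i)} = \Pi(\hMsgg{<i}) - \Pi(\hMsgg{<i}) = 0$, so $\var{Y_i\mid\hMsgg{<i}} = \expecc{\diff_i(Q_i)^2} = v_i$. (This also verifies the hypothesis $\expecc{f(P)} = 0$ in the definition of $\Biased$, and the robustness assumption~\cref{eq:intro:Technique:1}, i.e.\ $\diff_i \ge -\nfrac1{\sqrt{\numparties}}$, is exactly what makes $\biased{\al_i}{\diff_i}{Q_i}$ a valid distribution in both branches of the attacker.)

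Next I would compute the drift. Since $S_i - S_{i-1} = \diff_i(\hMsgg{i})$ and, given $\hMsgg{<i}$, the message $\hMsgg{i}$ is honest (distributed as $Q_i$) with probability $1-p_i$ and distributed as $\biased{\al_i}{\diff_i}{Q_i}$ with probability $p_i$, the defining property of $\Biased$, namely $\expecc{\diff_i(\biased{\al_i}{\diff_i}{Q_i})} = \expecc{\diff_i(Q_i)} + \al_i\expecc{\diff_i(Q_i)^2} = \al_i v_i$, gives
\begin{align*}
\expecc{S_i - S_{i-1} \mid \hMsgg{<i}} = (1-p_i)\cdot 0 + p_i\cdot \al_i v_i = p_i\,\al_i\, v_i .
\end{align*}
It then only remains to observe that $p_i\al_i = \nfrac1{\eps^3}$ in both cases of the attack: if $v_i \ge \nfrac1{\numparties}$ then $p_i = \nfrac{\sqrt{v_i}}{\eps^3}$ and $\al_i = \nfrac1{\sqrt{v_i}}$; and if $v_i < \nfrac1{\numparties}$ then $p_i = \nfrac1{\eps^3\sqrt{\numparties}}$ and $\al_i = \sqrt{\numparties}$. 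Hence $\expecc{S_i - S_{i-1} \mid \hMsgg{<i}} = \nfrac{v_i}{\eps^3} = \nfrac1{\eps^3}\cdot\var{Y_i\mid\hMsgg{<i}}$.

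Finally, summing over $i\in[\numparties]$, applying the tower rule to telescope the conditional expectations, and plugging in $S_0 = \nfrac12$ and $S_\numparties\in\zo$, we obtain
\begin{align*}
\expecc{\sum_{i=1}^\numparties \var{Y_i\mid\hMsgg{<i}}} = \eps^3\cdot\expecc{\sum_{i=1}^\numparties (S_i - S_{i-1})} = \eps^3\brackets{\expecc{S_\numparties} - \nfrac12} \le \nfrac{\eps^3}{2} \le \eps^3 ,
\end{align*}
as claimed. I do not expect a genuine obstacle here: the attacker's corruption probabilities and bias parameters were reverse-engineered precisely so that each per-round drift equals $\nfrac{v_i}{\eps^3}$. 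The only points needing care are the well-definedness of $\Biased$ (which is where robustness enters) and being scrupulous about the conditioning/filtration --- this is exactly why the excerpt sets up $Y$ as a martingale difference sequence with respect to $(\CorruptedMsgg{i}, Y_i)$ --- so that the telescoping of conditional expectations is legitimate.
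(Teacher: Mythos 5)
Your proposal is correct and takes essentially the same route as the paper: you compute the per-round drift $\expecc{S_i - S_{i-1} \mid \hMsgg{<i}} = p_i\alpha_i v_i = \nfrac{v_i}{\eps^3} = \nfrac{1}{\eps^3}\cdot\var{Y_i \mid \hMsgg{<i}}$ and telescope against $S_\numparties - S_0 \le 1$, exactly as the paper does (the paper merely packages the mixture step via the identity $p \cdot \biased{\alpha}{f}{P} + (1-p)\cdot P \equiv \biased{p\alpha}{f}{P}$ before applying $\expecc{f(\biased{\alpha}{f}{P})} = \alpha\cdot\varian{f(P)}$, which is the same computation as your $(1-p_i)\cdot 0 + p_i\cdot\alpha_i v_i$). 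Your observation that $p_i\alpha_i = \nfrac{1}{\eps^3}$ in both branches is precisely the paper's choice $p = \nfrac{1}{\eps^3}\cdot\sqrt{V_i'}$, $\alpha = \nfrac{1}{\sqrt{V_i'}}$, and your final bound is even slightly tighter ($\nfrac{\eps^3}{2}$) by using $S_0 = \nfrac12$.
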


The proof of \cref{lem:intro:Technique:Single} is sketched below, but we first use it to analyze the attack's quality. Think of $\var{\sum_{i = 1}^\numparties Y_i}$ as the protocol ``liveliness'' described before. 
$Y$ is a martingale difference sequence \wrt $\CorruptedMsgg{i}$, \ie $\expecc{Y_i \mid \CorruptedMsgg{< i}} =0$, and so it is easy to verify that
\begin{align*}
	\varian{\sum_{i = 1}^\numparties Y_i} &= \sum_{i = 1}^\numparties \varian{Y_i} = \sum_{i = 1}^\numparties (\expecc{\varian{Y_i \cond \CorruptedMsgg{< i}}} + \var{\expecc{Y_i \cond \CorruptedMsgg{< i}}}) \nonumber \\
	&= \sum_{i = 1}^\numparties \expecc{\varian{Y_i \cond \CorruptedMsgg{< i}}} = \expecc{\sum_{i = 1}^\numparties \varian{Y_i \cond \CorruptedMsgg{< i}}} \le \epsilon^3.
\end{align*}
The first equality follows by \cref{fact:martingales_pythagoras} (martingale increments are orthogonal), the second by the law of total variance (\cref{fact:total_variance}), and the last inequality by \cref{lem:intro:Technique:Single}.
Thus, by Chebyshev's inequality
\begin{align} \label{technique:bound_yi}
	\pr{\ssize{{\sum_{i=1}^\numparties Y_i}} \ge \eps} \le \eps
\end{align}
Consider the \emph{sub}-martingale $S = (S_0,\ldots,S_n)$ \wrt $\braces{\CorruptedMsgg{i}}_{i=1}^\numparties$ defined by $S_i \eqdef \Pi(\CorruptedMsgg{\le i})$, \ie the expected honest outcome given $\CorruptedMsgg{\le i}$. 
By definition, $S_0 = \expecc{\Pi (\Msgg{{}})} = \nfrac12$ and $S_n = \Pi (\CorruptedMsg) \in \zo$.
In addition, the attack only \emph{increases} the conditional expectation $\expecc{S_{i + 1} - S_i \cond \CorruptedMsgg{\le i}}$ and originally the protocol jumps have (conditional) expectation of zero,
hence, it holds that $\expecc{S_{i + 1} - S_i \cond \CorruptedMsgg{\le i}} \ge 0$. So indeed, $S$ constitutes a \emph{sub}-martingale sequence. 
Since ($S_{i + 1} - S_i$) is a ``biased towards one'' variant of $Y_i$, there exists a (rather) straightforward coupling between $S$ and $Y$ for which
\begin{align*}
	S_i- S_{i-1} \ge Y_i.
\end{align*}
By definition $S_0= 1/2$, and thus $\pr{S_n \le 0} = \pr{\sum_{i = 1}^\numparties S_i - S_{i - 1} \le -\nfrac12}$. By the properties of the aforementioned coupling, $\pr{\sum_{i = 1}^\numparties (S_i - S_{i - 1}) \le -\nfrac12} \le \pr{\sum_{i=1}^\numparties Y_i  \le -\nfrac12}$, and by \cref{technique:bound_yi}, $\pr{\sum_{i = 1}^\numparties (S_i - S_{i - 1}) \le -\nfrac12} \le \eps$. 
Finally, $S_n \in \zo$ therefore, $\pr{S_n = 1} = 1 - \pr{S_n \le 0} \ge 1 - \eps$. Namely, the output of the attacked protocol is $1$ with a probability of at least $ 1-\eps$.

We conclude the attack analysis by bounding the expected number of corruptions. By construction, the probability the attacker corrupts the \ith party is at most
\begin{align*}
	\nfrac1{\eps^4} \cdot \sqrt{\var{Y_i \mid \CorruptedMsgg{<i}}} &= \nfrac1{\eps^4} \cdot \nfrac{\var{Y_i \mid \CorruptedMsgg{<i}}}{\sqrt{\var{Y_i \mid \CorruptedMsgg{<i}}}} \nonumber \\
	&\le \nfrac{1}{\eps^4} \cdot (\var{Y_i \mid \CorruptedMsgg{<i}} \cdot \sqrt{\nfrac{\numparties}{\eps^4}}) = \nfrac{\sqrt \numparties}{\eps^6} \cdot \var{Y_i \mid \CorruptedMsgg{<i}}.\nonumber
\end{align*}

Overall, the total amount of corruptions is at most
\begin{align*}
	\expecc{\sum_{i=1}^\numparties \nfrac{\sqrt \numparties}{\eps^6} \cdot \var{Y_i \mid \CorruptedMsgg{<i}}} = \nfrac{\sqrt \numparties}{\eps^6} \cdot \expecc{\sum_{i=1}^\numparties \var{Y_i \mid \CorruptedMsgg{<i}}}.
\end{align*}

Consequently, by \cref{lem:intro:Technique:Single}, the expected number of corruptions is at most $\nfrac{\sqrt{\numparties}}{\eps^6} \cdot \eps^3 = \nfrac{\sqrt{\numparties}}{\eps^3} = \widetilde{O}(\sqrt{\numparties})$.

\paragraph{Proving \cref{lem:intro:Technique:Single}.}
By definition of $\Biased$, for any $p \in [0, 1]$ it holds that
\begin{align} \label{eq:lem:intro:Technique:Single:2}
	\brackets{p \cdot \biased{\alpha}{f}{P} + \brackets{1 - p} \cdot P} \equiv \biased{p\cdot \alpha}{f}{P}
\end{align}

Also note that for any distribution $P$, constant $\alpha \ge 0$ and function $f\colon \supp{P} \mapsto [-\nfrac{1}{\alpha}, \infty)$ with $\expecc{f(P)} =0$, it holds that 
\begin{align} \label{eq:lem:intro:Technique:Single:1}
	 & \expecc{f\bigl(\biased{\alpha}{f}{P}\bigr)} = \sum_{x \in \supp{P}} f(x) \cdot \pr{\biased{\alpha}{f}{P} = x} \\
	 & = \sum_{x \in \supp{P}} f(x) \cdot \pr{P = x} \cdot (1 + \alpha \cdot f(P))\nonumber \\
	 & = \expecc{f(P) \cdot (1 + \alpha \cdot f(P))} = \expecc{f} + \alpha \cdot \expecc{f^2(P)} = 0 + \alpha \cdot \varian{f(P)}.\nonumber
\end{align}

\def\maybecorrupt{C}
Let $V_i$ be the value of the variable $v_i$ in the execution of the attack (determined by $\CorruptedMsgg{< i}$), and let $\maybecorrupt_i$ be the event $\braces{V_i \ge \nfrac{\eps^4}{\numparties}}$, {\ie whether the adversary \emph{even considers} to corrupt the current party (see \stepref{alg:intro:Technique:Single:corruption} of the attacker)}. 
Using the above notation, proving \cref{lem:intro:Technique:Single}  translates to proving that
\begin{align}
\expecc{\sum_{i = 1}^n V_i} \le \eps^3
\end{align}
Let $\msgg{< i} \in \Supp(\CorruptedMsgg{< i})$ be a partial transcript. 
Denote $\nu_i = \var{Y_i \cond \CorruptedMsgg{<i} = \msgg{< i}}$ (\ie matching $v_i$ or $V_i$). 
For  transcripts satisfying $\nu_i \ge \nfrac{\eps^4}{\numparties}$ (\ie the event $\maybecorrupt_i$ holds), applying \cref{eq:lem:intro:Technique:Single:1,eq:lem:intro:Technique:Single:2} \wrt 
\begin{align*}
P \eqdef  \restr{\Msgg{i}}{\Msgg{< i} = \msgg{< i}},\; p \eqdef \nfrac{1}{\eps^4} \cdot \sqrt{\nu_i}, \; \alpha \eqdef \nfrac{1}{\sqrt{\nu_i}} \text{~~and~~}  \diff_i \eqdef \diff(\msgg{< i}, \cdot),
\end{align*}
  yields
\begin{align}
	\expecc{S_i - S_{i-1} \mid \CorruptedMsgg{< i} = \msgg{< i}} & = \expecc{\diff_i\bigl(\biased{\nfrac{1}{\eps^4}}{\diff_i}{\restr{\Msgg{i}}{\Msgg{< i} = \msgg{< i}}}\bigr)} \label{eq:intro:Technique:Single:biased_single} \\
	& = \nfrac{1}{\eps^4} \cdot \varian{\diff(\Msgg{\le i}) \cond \Msgg{< i} = \msgg{< i}} \nonumber \\
	& = \nfrac{1}{\eps^4} \cdot \varian{Y_i \cond \CorruptedMsgg{< i} = \msgg{< i}}. \nonumber
\end{align}

Hence,
\begin{align}
	\expecc{S_i - S_{i-1} \mid \CorruptedMsgg{< i}} = \ind{\maybecorrupt_i} \cdot \expecc{S_i - S_{i-1} \mid \CorruptedMsgg{< i}}
	= \ind{\maybecorrupt_i} \cdot \nfrac{1}{\eps^4} \cdot V_i
\end{align}
The first equality holds by construction (if $\maybecorrupt_i$ doesn't hold, the party is not corrupted and thus expectation is zero). The second equality by \cref{eq:intro:Technique:Single:biased_single}.
Therefore,
\begin{align}
	\expecc{S_\numparties- S_0} = \expecc{\sum_{i=1}^\numparties (S_i - S_{i - 1})} = \expecc{\sum_{i=1}^\numparties \expecc{S_i - S_{i-1} \mid \CorruptedMsgg{< i}}} \le \nfrac{1}{\eps^4} \cdot \expecc{\sum_{i=1}^\numparties \ind{\maybecorrupt_i} \cdot V_i}
\end{align}
The second equality follows by the law of total expectation (\cref{fact:total_expectation}). Since $S_0$ and $S_\numparties$ take values in $[0,1]$, it follows that 
\begin{align}
	\expecc{\sum_{i=1}^\numparties \ind{\maybecorrupt_i} \cdot V_i} \le \eps^4 \label{technique:single:expec_maybecorrupted}
\end{align} 

In addition, by definition of $\maybecorrupt_i$, for any $i$ it holds that
\begin{align}
	\ind{\neg \maybecorrupt_i} \cdot V_i \le \nfrac{\eps^4}{\numparties} \label{technique:single:expec_noncorrupted}
\end{align}

Combining \cref{technique:single:expec_maybecorrupted,technique:single:expec_noncorrupted}, we deduce that $\expecc{\sum_{i=1}^\numparties V_i} \le \eps^4 + \nfrac{\eps^4}{\numparties} \cdot \numparties < \eps^3$.

\subsection{Attacking Robust Many-Turn Coin Flip}\label{sec:intro:Technique:ManyRobust}
When attacking many-turn coin-flipping protocols, one encounters two additional problems:
\begin{description}
	\item[Identify influential parties.] In many-turn protocols, each message might have \emph{very little influence on the protocol's outcome}. 
	So, it is unclear for an online attacker to decide which parties to corrupt; 
	\eg a party that sends insignificant messages at the beginning of the protocol might turn out to be influential in the future, or on the flip side a party that had a significant influence on the protocol will not necessarily have significant influence in the rest of the execution. 
	
	\item[Preserve corrupted parties' influence.] Even if we have successfully identified potentially influential parties in the protocol, we must not alter their behavior in a way that makes it obvious they are corrupted. 
	If the corrupted parties' messages differ vastly from the honest execution, it might no longer be the case that those parties stand a chance at significantly influencing the protocol's outcome. 
\end{description}
Let us exemplify the above obstacles using the following two examples, respectively. 

\begin{example} [Shrinking majority]
	Consider a ``shrinking'' $\numparties$-party $\numparties^2$-round majority protocol: a majority protocol consisting of $\numparties$ \textit{super}-rounds in which every player sends a single bit and in addition, a random (determined by the rounds' coins) party is cast out (meaning from this super-round on, its bits are ignored). In such a case, the attacker must decide whether to corrupt a party without being certain that it will ``survive'' for many rounds.
\end{example}

\begin{example}[Punishment mechanism]
	Consider the $\numparties$-party $\numparties^2$-round majority protocol, \ie each party sends $\numparties$ bits, that is equipped with the following ``punishment'' mechanism: once a party's coins are ``too suspicious'', say contain a $1$-run of length $\log^2 \numparties$, its coins are ignored from this point on. So, once corrupting a party, the attack should not attempt to bluntly maximize the effect of the messages it sends. 
\end{example}

\noindent We tackle both problems (respectively) following two general ideas.

\begin{description}
	\item[Corrupt parties at random based on their past influence.]
	Our attacker decides at random whether to corrupt a party, based on their past influence. 
	That is, the corruption process can be viewed as a lottery: a party starts with a single ticket (\ie chance to become corrupted), and every time it contributes a certain amount of influence on the protocol it gets another ticket (\ie another chance to become corrupted).\footnote{
		In the actual attack, we formalize this approach by partitioning the parties into \emph{pseudo-parties} with bounded influence and corrupt each at random \emph{independently} of each other.
	} While the above approach does not identify an influential party before it starts affecting the outcome, it does so before the party significantly affects the outcome.

	\item[Gently modify corrupted parties.]
	Once deciding to corrupt a party, our attacker only \textit{subtly} alters its messages, like in the single-turn case where we use the $\Biased$ transformation to subtly alter the messages of the corrupted parties; such a gentle attack assures that we do not ``drift'' too much from a ``typical'' execution, and in particular, influential parties remain influential even in the attacked protocol. 
	
	\item[Highly influential messages.]
	In addition to that above, we bias highly influential messages exactly like the single-turn case, \ie the probability we corrupt the matching party is only related to the influence of the message---without taking into consideration the number of parties.
\end{description}

\subsubsection{The Attack}
The above intuition takes form as the following attacker (against a $\numparties$-party, $\lenproto$-round, robust protocol $\Pi$). 

\begin{algorithm}[Many-turn protocols attacker, informal]\label{alg:intro:Many}~
	\smallskip
	
	For $i=1$ to $\lenproto$, do the following \emph{before} the \ith message is sent:
	\begin{enumerate}
		\item Let $\msgg{< i}$ be the previously sent messages, let $\condMsgg{i} \eqdef \Msgg{\le i}|_{\Msgg{<i} = \msgg{< i}}$, let $\diff_i \eqdef \diff(\msgg{< i}, \cdot)$, let $v_i \eqdef \var{\diff_i(\condMsgg{i})}$. 
		
		\item If $v_i \ge \nfrac{\eps^4}{\numparties}$, corrupt the party sending the \ith message with probability $\nfrac{1}{\eps^4} \cdot \sqrt{v_i}$. If corrupted, instruct it to send its next message according to $\biased{\nfrac1{\sqrt{v_i}}}{\diff_i}{\condMsgg{i}}$. (\ie like in the singe-turn case.)
		
		Else, if the \ith message is the first message to be sent by the party, or the overall contribution of the messages it sent since the last time it was considered for corruption exceeded $\nfrac{\eps^4}{\numparties}$,\footnote{
			In the main body of the paper, we transform (in the attacker's head) arbitrary protocols into \textit{normal} protocols in which the overall influence of a party's small-jump messages is limited. Thus, each party needs to be tested for corruption only once.
		}
		corrupt this party with probability $\nfrac{1}{\eps^4} \cdot \nfrac{1}{\sqrt{\numparties}}$ (if it was decided not to corrupt the party, we consider this party honest from now on).

		If corrupted (at the last decision point), instruct it to send its next message according to $\biased{\sqrt{\numparties}}{\diff_i}{\condMsgg{i}}$.
	\end{enumerate}
\end{algorithm}
That is, a {large-jump party} is treated like in the single-turn case. In contrast, a small-jumps party is corrupted with (fixed) probability proportional to $\nfrac{1}{\sqrt{\numparties}}$ (when corrupted, \emph{all} messages of the party are modified).

\begin{example} [Attacking many-turn majority]
	Consider the $\numparties$-party, $\numparties^2$-round, single-bit majority protocol (in which each party sends $\numparties$ bits). 
	Typically, the change induced by any given message is of order $\nfrac{1}{\numparties}$. Consequently, each $v_i$ is of order $\nfrac{1}{\numparties}^2$, and each party will be independently corrupted with probability $\nfrac{1}{\eps^3} \cdot \nfrac{1}{\sqrt{\numparties}}$. Thus, in expectation, the above attack corrupts $\nfrac{1}{\eps^3} \cdot \sqrt{\numparties}$ parties. 
	If corrupt, each of the single-bit messages the party sends is $1$ with probability $\approx \nfrac12 \cdot(1 + \sqrt{\numparties}\cdot \nfrac{1}{\numparties}) = \nfrac12 + \nfrac1{2\sqrt{\numparties}}$
\end{example}

\paragraph{Analysis.}

The analysis of the above attack is similar to the single-turn case. Let $\CorruptedMsg = (\CorruptedMsgg{1},\ldots, \CorruptedMsgg{\lenproto})$, $S = (S_0,\ldots,S_\lenproto)$ and $Y = (Y_1,\ldots,Y_\lenproto)$ be as in the single-turn case. Similarly to the single-turn case, the core of the proof lies in the following lemma.

\begin{lemma}\label{lem:intro:Technique:Many}
	$\expecc{\sum_{i=1}^\lenproto \var{Y_i \mid \CorruptedMsgg{<i}}} = O(\eps^4)$.
\end{lemma}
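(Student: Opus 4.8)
The plan is to mimic the single-turn telescoping argument from the proof of Lemma~\ref{lem:intro:Technique:Single}, but to account for the fact that a party in a normal protocol may be corrupted either via the "large-jump" clause (first \textbf{if} of Step~2) or via the fixed-probability "lottery" (the \textbf{else} clause), and that when it is corrupted via the lottery \emph{all} of its messages get biased with parameter $\sqrt{\numparties}$. As before, I would work with the submartingale $S_i \eqdef \Pi(\hMsgg{\le i})$ and the difference sequence $Y_i \eqdef \diff(\hMsgg{<i}, q)$ for $q \gets Q_i$, so that $S_0 = \nfrac12$, $S_\lenproto \in \zo$, and the goal is to show $\expecc{S_\lenproto - S_0} = \nfrac{1}{\eps^3}\cdot \expecc{\sum_i \var{Y_i \mid \hMsgg{<i}}}$ (up to the $O(\cdot)$), which, since $S_0, S_\lenproto \in [0,1]$, forces $\expecc{\sum_i \var{Y_i \mid \hMsgg{<i}}} = O(\eps^3)$.

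First I would fix a partial transcript $\msgg{<i}$ and compute $\expecc{S_i - S_{i-1} \mid \hMsgg{<i} = \msgg{<i}}$ as a function of which corruption regime the party sending the $i$-th message is in, using the two observations recalled in the single-turn proof: $\expecc{f(\biased{\alpha}{f}{P})} = \alpha \cdot \var{f(P)}$, and the mixture identity $p\cdot\biased{\alpha}{f}{P} + (1-p)\cdot P \equiv \biased{p\alpha}{f}{P}$. In the large-jump regime ($v_i \ge \nfrac1\numparties$) the computation is verbatim the single-turn one: corruption probability $\nfrac1{\eps^3}\sqrt{v_i}$ times bias $\nfrac1{\sqrt{v_i}}$ gives an expected jump of $\nfrac1{\eps^3}\var{Y_i \mid \hMsgg{<i}}$. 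In the small-jumps regime I would split on whether the party is \emph{already} flagged-for-corruption from an earlier round: if not yet flagged and this is its first message, it is flagged w.p. $\nfrac1{\eps^3}\cdot\nfrac1{\sqrt\numparties}$ and then biased with parameter $\sqrt\numparties$, so the expected jump is again $\nfrac1{\eps^3}\cdot\nfrac1{\sqrt\numparties}\cdot\sqrt\numparties\cdot\var{Y_i \mid \hMsgg{<i}} = \nfrac1{\eps^3}\var{Y_i \mid \hMsgg{<i}}$; if it is already flagged, every subsequent message is biased with parameter $\sqrt\numparties$, contributing $\sqrt\numparties \cdot \var{Y_i \mid \hMsgg{<i}}$ to the conditional jump. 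The point is that because the protocol is normal, a small-jumps party's \emph{total} conditional variance is at most $\nfrac1\numparties$, so once we are conditioning on the (probability $\nfrac1{\eps^3\sqrt\numparties}$) event that it got flagged on its first message, the contribution of \emph{all} its later messages, summed and then multiplied back by that flagging probability, is at most $\nfrac1{\eps^3}\cdot\nfrac1{\sqrt\numparties}\cdot\sqrt\numparties\cdot\nfrac1\numparties = \nfrac1{\eps^3\numparties}$ per party — a negligible over-count compared to the honest-variance terms we are trying to bound. Summed over the (at most $\numparties$) pseudo-parties this is $\nfrac1{\eps^3}$, hence the $O(\cdot)$ rather than an exact equality. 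Telescoping $\sum_i \expecc{S_i - S_{i-1}} = \expecc{S_\lenproto - S_0} \le 1$ then yields $\expecc{\sum_i \var{Y_i\mid\hMsgg{<i}}} \le \eps^3 \cdot (1 + o(1)) = O(\eps^3)$.

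The main obstacle, and the place where real care is needed, is the bookkeeping around the "flagged in a previous round" case: one must be careful that the coupling between $S$ and $Y$ still gives $S_i - S_{i-1} \ge Y_i$ (or at least that $\expecc{S_i - S_{i-1} \mid \hMsgg{<i}} \ge \nfrac1{\eps^3}\var{Y_i \mid \hMsgg{<i}}$ on the relevant events), and that the over-counting from already-flagged parties is genuinely controlled by the normality bound $\sum_{i\colon \party(\Msgg{<i})=\partition}\var{\diff(\Msgg{\le i})\mid\Msgg{<i}} \le \nfrac1\numparties$ rather than by the per-message bound $\nfrac1\numparties$ (which alone would be too weak, since a party may send up to $\lenproto$ messages). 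Relatedly, since the excerpt warns that only a \emph{relaxed} variant of normality is actually guaranteed, I would phrase the argument so that it tolerates, say, a constant-factor or polylog slack in the normality bounds, which is exactly what the $O(\eps^3)$ conclusion (as opposed to $\le\eps^3$) is absorbing. Once this is in place, the derivation of the bias and corruption-count bounds from Lemma~\ref{lem:intro:Technique:Many} goes through exactly as in the single-turn case, with $\lenproto$ in place of $\numparties$ in the martingale sums.
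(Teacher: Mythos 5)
There is a genuine gap, and it sits exactly at the point the paper identifies as the crux of the many-turn case. Your treatment of large jumps and of the \emph{first} message of a small-jumps party is fine, but for a \emph{later} message of an already-flagged small-jumps party you implicitly multiply the conditional-given-corrupted drift $\sqrt{\numparties}\cdot\var{Y_i\mid\hMsgg{<i}=\msgg{<i}}$ by the \emph{unconditional} flagging probability $\nfrac{1}{\eps^3\sqrt{\numparties}}$. Conditioned on the transcript $\msgg{<i}$, however, the probability that this party is corrupted need not equal $\nfrac{1}{\eps^3\sqrt{\numparties}}$: its earlier (biased) messages appear in $\msgg{<i}$ and leak its corruption status, so the mixture identity $p\cdot \biased{\alpha}{f}{P}+(1-p)\cdot P\equiv\biased{p\alpha}{f}{P}$ cannot be invoked with $p=\nfrac{1}{\eps^3\sqrt{\numparties}}$, and the relation $\expecc{S_i-S_{i-1}\mid\hMsgg{<i}}\ge\nfrac{1}{\eps^3}\cdot\var{Y_i\mid\hMsgg{<i}}$ may simply fail on those rounds. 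Worse, your bookkeeping then treats the later-message drift as an ``over-count'' to be bounded \emph{above} and discarded; but after discarding it, the telescoping $1\ge\expecc{S_\lenproto-S_0}$ only controls the conditional variances of large jumps and of first small-jump messages, leaving the variances of all later small-jump messages --- generically the bulk of $\sum_i\var{Y_i\mid\hMsgg{<i}}$ (in the $\numparties^2$-round majority example a party's first message carries only a $\nfrac{1}{\numparties}$ fraction of its variance) --- unbounded. An upper bound on extra drift is useless here, since nonnegative drift only helps the $\le 1$ bound; what is needed is a \emph{lower} bound, on average, of the form drift $\ge\Omega(\nfrac{1}{\eps^3})\cdot$ variance also for those later rounds.

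Supplying that lower bound is the content the paper adds and your plan omits: because each message of a corrupted small-jumps party is only slightly biased and the party's total conditional variance is $O(\nfrac{1}{\numparties})$, a KL-divergence argument (with Pinsker and Doob's maximal inequality; see \cref{claim:smallvar:Small,claim:contributional_party}) shows the transcript leaks little about the corruption status, so the conditional corruption probability stays comparable to the nominal one often enough. The paper even modifies the attack itself for this purpose --- in \cref{alg:adversary_normal} a corrupted small-jumps party reverts to honest sampling whenever its conditional corruption probability exceeds $16\lambda^2/\sqrt{\numparties}$ --- a safeguard your per-round accounting has no analogue of. Two further inaccuracies: normality bounds only the number of \emph{unfulfilled} pseudo-parties by $\numparties$, not the number of all participating small pseudo-parties, so the step ``summed over the (at most $\numparties$) pseudo-parties'' is unjustified; and the resulting total of $\nfrac{1}{\eps^3}$ is in any case far from ``negligible.''
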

The challenge in proving \cref{lem:intro:Technique:Many} is that, unlike the single-turn proof, it might be that the following does not hold:
\begin{align*}
	\expecc{S_i - S_{i-1} \mid \CorruptedMsgg{< i}=\msgg{<i}} \ge \nfrac{1}{\eps^4} \cdot \var{Y_i\mid \CorruptedMsgg{< i}=\msgg{<i}}.
\end{align*}
Indeed, let $V_i$ be the value of the variables $v_i$ in the execution of the attack described by $\CorruptedMsg$. 
Assume that conditioned on $\CorruptedMsgg{< i}= \msgg{<i}$, it holds that $V_i < \nfrac{\eps^4}{\numparties}$ and that a party \Party is about to send the \ith message. 
Unlike the single-turn case, the conditional probability that \Party is corrupted is no longer guaranteed to match the (non-conditional) probability that \Party is corrupted: 
the previous messages sent by \Party in $\msgg{< i}$ might \emph{leak} whether \Party is corrupted or not. 
If the latter happens, then (by the same argument we used for proving the lemma in the single-turn case) it might be that $\expecc{S_i - S_{i-1} \mid \CorruptedMsgg{< i}=\msgg{<i}} < \nfrac{1}{\eps^4} \cdot \var{Y_i\mid \CorruptedMsgg{< i}=\msgg{<i}}$. 
Fortunately, since we only slightly modify each small-jump message and assume no party's small messages are too influential (which holds for normal protocols), 
a KL-divergence argument yields that on average for such messages it holds that $\expecc{S_i - S_{i-1} \mid \CorruptedMsgg{i-1}=\msgg{<i}} = \Omega(\nfrac{1}{\eps^4} \cdot \varian{Y_i\mid \CorruptedMsgg{i-1}=\msgg{<i}})$, 
which suffices for the proof of the lemma to go through.

\subsection{Attacking Non-Robust Coin Flip}\label{sec:intro:Technique:NonRobust}
The high-level idea of attacking a non-robust protocol (a protocol that has large jumps in both directions) is to attempt biasing the protocol towards zero in such a way that if this bias fails, the protocol will be robust for $b=0$ (no large jumps downward)---prime for applying the attack on robust protocols.
More formally, assume that with probability at least $\nfrac 1{\log \numparties}$, the protocol $\Pi$ has a large negative jump, \ie less than $-\nfrac 1{\sqrt{\numparties}}$, and consider the following ``one-shot'' attacker on $\Pi$:\footnote{Assuming the next-message function of $\Pi$ is efficiently samplable, \eg $\Pi$ is public-coin, the following attacker is the only reason for the inefficiency of our attack.}

\begin{algorithm}[Negative jumps attacker] For $i=1$ to $\numparties$, do the following \emph{before} the \ith message is sent:

	\begin{enumerate}
		\item Let $\msgg{< i}$ be the previously sent messages.

		\item If there exists $m_i^{-} \in \Supp(\Msgg{i}|_{\Msgg{<i} = \msgg{< i}})$ such that $\Pi( \msgg{< i},m_i^{-}) < \Pi(\msgg{< i}) -\nfrac{1}{\sqrt{\numparties}}$, and no party was corrupted yet, corrupt the party sending the \ith message and instruct it to send $m_i^{-}$.
	\end{enumerate}
\end{algorithm}

It is clear that above adversary biases the outcome of $\Pi$ toward zero by at least $\nfrac{1}{\sqrt{\numparties}\cdot \log(\numparties)}$. Let $\Pi_1$ be the \emph{protocol} induced by the above (deterministic) attack: all parties emulate the attacker in their head, and when it decides to (deterministically) corrupt a party, the corrupted party follows its (deterministic) instructions. If the protocol $\Pi_1$ has a large negative jump with probability larger than $\nfrac1{\log \numparties}$, apply the above attack on $\Pi_1$ resulting in the protocol $\Pi_2$, and so on. Let $t\le \sqrt{\numparties} \cdot \log(\numparties)$ denote the number of times we applied the attack in this manner. If the expected outcome of $\Pi_t$ is at most $\eps$, then we are done: the implied $t$-adaptive adversary makes $\Pi$ output $0$ with probability $1-\eps$. Otherwise, $\Pi_t$ has the following property:
\begin{align}
	\pr{\exists j\in [\numparties] \colon \Pi_t(\tMsgg{\le j}) < \Pi_t(\tMsgg{< j}) -\nfrac{1}{\sqrt{\numparties}}} \le 1/\log(\numparties)
\end{align}
letting $\tMsg$ be the messages of a random execution of $\Pi_t$. If the above happens, we apply the attack on robust protocols (\cref{alg:intro:Many}) on $\Pi_t$, instructing the adversary to halt if it encounters a large negative jump.

With careful analysis (and slight modifications to the attacker), one can show that, due to the \emph{gentleness} of the attack, the property of encountering large negative jumps only with negligible probability is preserved.
Hence, the attack carries as if there are no large negative jumps, meaning we have successfully biased the expected output of $\Pi_t$ to $1-O(\eps)$. Composing the attack that transforms $\Pi$ into $\Pi_t$ with the attack on robust protocols
(on $\Pi_t$) yields the required attack on $\Pi$.\footnote{In \cref{sec:intro:Technique:SingleRobust} we proved the quality of our attack for protocols with $\expecc{\Pi} = \nfrac12$. Still, the proof can be easily adapted to the case of $\expecc{\Pi} \ge \nfrac{1}{\polylog(\numparties)}$.}

\section{Preliminaries} \label{sec:Preliminaries}

\subsection{Notations}
We use calligraphic letters to denote sets, uppercase for random variables, and lowercase for values and functions. All logarithms considered here are base $2$. For $n\in \N$, let $[n] \eqdef \set{1,\ldots,n}$. Given a Boolean statement $S$ (\eg $X \geq 5$), let $\1_{S}$ be the indicator function that outputs $1$ if $S$ is a true statement and $0$ otherwise. For a distribution $X$, let $x\getsr X$ denote that $x$ was sampled according to $X$.

\subsection{Distributions and Random Variables} \label{sec:prelim:dist}

The support of a distribution $P$ over a discrete set $\cx$, denoted $\Supp(P)$, is defined by $\Supp(P) \eqdef \set{x\in \cx: P(x)>0}$. For random variables $X$ and $ Y$, let the random variable $\supp{X \cond Y}$ denote the conditional support of $X$ given $Y$. In addition, we define the random variables $\expecc{X \cond Y}$ and $\varian{X \cond Y}$ as (deterministic) functions of $Y$, by $\bracketss{\expecc{X \cond Y}}{y} \eqdef \expecc{X \cond Y = y}$ and $\bracketss{\varian{X \cond Y}}{y} \eqdef \varian{X \cond Y = y}$, respectively.

The {\sf statistical distance} (\aka {\sf variation distance}) of two distributions $P$ and $Q$ over a discrete domain $\cx$ is defined by $\sdist{P}{Q} \eqdef \max_{\cS\subseteq \cx} \size{P(\cS)-Q(\cS)} = \frac{1}{2} \sum_{x \in \cS}\size{P(x)-Q(x)}$. Statistical distance enjoys a data-processing inequality.
\begin{fact} [Data-processing inequality for statistical distance] \label{fact:dataprocess}
	For distributions $P$ and $Q$ and function $f$ over a discrete domain $\cx$, it holds that $\sdist{f(P)}{f(Q)} \le \sdist{P}{Q}$.
\end{fact}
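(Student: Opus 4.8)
The plan is to unwind the definition of statistical distance through preimages. Write $\cY$ for the (discrete) range of $f$, so that $f(P)$ and $f(Q)$ are distributions over $\cY$ with $f(P)(\cT) = P(f^{-1}(\cT))$ and $f(Q)(\cT) = Q(f^{-1}(\cT))$ for every $\cT \subseteq \cY$. First I would fix an arbitrary $\cT \subseteq \cY$ (one approaching the maximum in $\sdist{f(P)}{f(Q)} = \max_{\cT \subseteq \cY} \size{f(P)(\cT) - f(Q)(\cT)}$) and set $\cS \eqdef f^{-1}(\cT) \subseteq \cx$. Then $\size{f(P)(\cT) - f(Q)(\cT)} = \size{P(\cS) - Q(\cS)} \le \max_{\cS' \subseteq \cx} \size{P(\cS') - Q(\cS')} = \sdist{P}{Q}$. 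Taking the maximum over $\cT$ yields the claim. This argument mirrors the $\max$-over-subsets form of the definition used elsewhere in the paper, so I would present it as the main proof.

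As an alternative I would give the direct $\ell_1$ computation, which matches the second expression for $\SD$ recorded just above the statement: $\sdist{f(P)}{f(Q)} = \tfrac12 \sum_{y \in \cY} \size{f(P)(y) - f(Q)(y)} = \tfrac12 \sum_{y \in \cY} \bigl| \sum_{x \in f^{-1}(y)} (P(x) - Q(x)) \bigr|$, then apply the triangle inequality inside each outer summand and use that the preimages $\set{f^{-1}(y)}_{y \in \cY}$ partition $\supp{P} \cup \supp{Q}$, so the double sum collapses to $\tfrac12 \sum_{x \in \cx} \size{P(x) - Q(x)} = \sdist{P}{Q}$.

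There is essentially no obstacle here; the only point requiring a moment's care is that the statement is for a deterministic function $f$, which is exactly what makes the preimage sets well-defined and pairwise disjoint. (The same inequality for a randomized channel would follow by averaging the deterministic case over the channel's internal randomness and invoking convexity of $\size{\cdot}$, but this is not needed for the Fact as stated.) I would therefore write up the preimage-of-a-set argument in two or three lines and omit the randomized extension.
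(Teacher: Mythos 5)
Your proof is correct: the paper states this Fact without proof (as a standard known result), and your preimage-of-a-set argument (or equivalently the $\ell_1$/triangle-inequality computation) is exactly the standard justification one would supply. Nothing further is needed.
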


The {\sf KL-divergence} (\aka {\sf ~Kullback-Leibler divergence} and {\sf relative
		entropy}) between two distributions $P$ and $Q$ over a discrete domain $\cX$,
is defined by
\begin{align*}
	\kld{P}{Q} \eqdef \sum_{x\in\cX}P(x)\log\frac{P(x)}{Q(x)} = \Ex_{x \getsr P}\log\frac{P(x)}{Q(x)},
\end{align*}
where $0\cdot\log\frac00 = 0$, and $\kld{P}{Q}\eqdef \infty$ if there exists $x\in\cX$ such that
$P(x)>0$ but $Q(x)=0$. KL-divergence is convex in the following sense:
\begin{fact} [Convexity of KL-divergence] \label{fact:kl_convex}
	For finite distributions $P_1, P_2, Q_1, Q_2$, and $\lambda \in [0, 1]$ it holds that $\kld{\lambda \cdot P_1 + (1 - \lambda) \cdot P_2}{\lambda Q_1 + (1 - \lambda) \cdot Q_2} \le \lambda \cdot \kld{P_1}{Q_1} + (1 - \lambda) \cdot \kld{P_2}{Q_2}$.
\end{fact}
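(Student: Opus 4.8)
The plan is to verify the inequality $\kld{\lambda P + (1-\lambda) Q}{Q} \le \lambda \kld{P}{Q}$ by a direct pointwise argument, invoking the log-sum inequality (equivalently, the convexity of $t \mapsto t \log t$ on $(0,\infty)$). First I would write $R \eqdef \lambda P + (1-\lambda) Q$ and expand the left-hand side as $\sum_{x \in \cX} R(x) \log \frac{R(x)}{Q(x)}$. Since this quantity is $\sum_x Q(x) \cdot g\!\left(\frac{R(x)}{Q(x)}\right)$ with $g(t) = t \log t$, the natural route is to compare, for each fixed $x$, the term $R(x)\log\frac{R(x)}{Q(x)}$ against $\lambda P(x) \log \frac{P(x)}{Q(x)}$.

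The key step is the pointwise inequality
\begin{align*}
 R(x) \log \frac{R(x)}{Q(x)} \;\le\; \lambda\, P(x) \log \frac{P(x)}{Q(x)},
\end{align*}
which I would derive from the log-sum inequality applied to the two pairs $(a_1,b_1) = (\lambda P(x), \lambda Q(x))$ and $(a_2, b_2) = ((1-\lambda)Q(x), (1-\lambda)Q(x))$: this gives $(a_1+a_2)\log\frac{a_1+a_2}{b_1+b_2} \le a_1 \log\frac{a_1}{b_1} + a_2 \log\frac{a_2}{b_2}$, and the second summand on the right vanishes since $a_2 = b_2$, while the first equals $\lambda P(x)\log\frac{P(x)}{Q(x)}$ after cancelling $\lambda$ inside the log. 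Summing this inequality over all $x \in \cX$ yields exactly the claim. I would also handle the degenerate cases up front ($\lambda \in \{0,1\}$ are trivial; if $\kld{P}{Q} = \infty$ there is nothing to prove; and wherever $Q(x) = 0$ we must have $P(x) = 0$, else the right side is already $\infty$, so the supports behave consistently and the $0\log\frac00 = 0$ convention keeps every term well-defined).

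I do not expect a genuine obstacle here — this is a standard fact — so the only care needed is bookkeeping: making sure the log-sum inequality is either cited from a standard source or proved in one line from Jensen's inequality / the convexity of $g(t) = t\log t$, and making sure the $\infty$ and $0/0$ conventions are invoked explicitly so that the term-by-term summation is valid. An equally clean alternative, if one prefers to avoid quoting the log-sum inequality, is to note $\kld{R}{Q} = \sum_x Q(x)\, g\!\left(\frac{R(x)}{Q(x)}\right)$ and apply convexity of $g$ directly to $\frac{R(x)}{Q(x)} = \lambda \frac{P(x)}{Q(x)} + (1-\lambda)\cdot 1$, using $g(1) = 0$; this gives $g\!\left(\frac{R(x)}{Q(x)}\right) \le \lambda\, g\!\left(\frac{P(x)}{Q(x)}\right)$ pointwise, and multiplying by $Q(x)$ and summing finishes the proof.
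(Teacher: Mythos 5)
Your proof is correct: the pointwise inequality $R(x)\log\frac{R(x)}{Q(x)} \le \lambda P(x)\log\frac{P(x)}{Q(x)}$ via the log-sum inequality (equivalently, convexity of $t\mapsto t\log t$ together with $1\log 1=0$), summed over $x$, is exactly the standard argument, and your handling of the $0\log\frac{0}{0}$ and $\infty$ conventions is adequate. The paper states this fact without proof (treating it as standard), so there is nothing to compare against; your argument correctly supplies the omitted proof.
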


In addition, KL-divergence enjoys a chain rule.
\begin{fact} [KL-divergence chain rule] \label{fact:kl_chain_rule}
	For distributions $P(X, Y)$ and $Q(X, Y)$ for a pair of random variables $X$ and $Y$, it holds that
	$\kld{P(X, Y)}{Q(X, Y)} = \kld{P(X)}{Q(X)} + \expec{x \leftarrow P(X)}{\kld{P(Y \cond X = x)}{Q(Y \cond X = x)}}$.
\end{fact}

The following fact (see \citet{FHT03}) relates small KL-divergence to small statistical distance:

\begin{fact} [Pinsker bound] \label{fact:pinsker_bound}
	For discrete distributions $P$ and $Q$ it holds that $\sdist{P}{Q} \le \sqrt{\frac{1}{2} \cdot \kld{P}{Q}}$.
\end{fact}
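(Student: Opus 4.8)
The plan is to follow the classical two-step route to Pinsker's inequality: first reduce the general statement to the case of two Bernoulli distributions via a coarsening of the sample space, and then settle the Bernoulli case by a short calculus argument. At the outset I would dispose of the trivial case $\kld{P}{Q}=\infty$ (the bound being vacuous there), so that henceforth $\supp{P}\subseteq\supp{Q}$.

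\textbf{Reduction to the Bernoulli case.} First I would put $A\eqdef\set{x\in\cX : P(x)\ge Q(x)}$; directly from the definition of statistical distance, $\sdist{P}{Q}=P(A)-Q(A)$. Writing $p\eqdef P(A)$ and $q\eqdef Q(A)$, so that $p\ge q$, the target becomes $(p-q)^2\le\nfrac12\cdot\kld{P}{Q}$. To lower-bound the divergence I would invoke the log-sum inequality --- a consequence of convexity of $t\mapsto t\log t$, valid in any logarithm base --- after grouping the terms of $\kld{P}{Q}$ according to membership in $A$ versus $\cX\setminus A$:
\begin{align*}
	\kld{P}{Q}\ \ge\ P(A)\log\frac{P(A)}{Q(A)}+P(\cX\setminus A)\log\frac{P(\cX\setminus A)}{Q(\cX\setminus A)}\ =\ \kld{\ber{p}}{\ber{q}}.
\end{align*}
(The degenerate sub-cases $Q(A)=0$ or $Q(\cX\setminus A)=0$ force $\sdist{P}{Q}=0$, and are immediate.) Thus it suffices to prove $(p-q)^2\le\nfrac12\cdot\kld{\ber{p}}{\ber{q}}$ for every $0\le q\le p\le 1$.

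\textbf{The Bernoulli case.} Since every logarithm in the paper is base $2$ and $\ln 2<1$, it is enough to prove the stronger natural-logarithm inequality $h(q)\ge 2(p-q)^2$, where $p$ is held fixed and $h(q)\eqdef p\ln\nfrac pq+(1-p)\ln\nfrac{1-p}{1-q}$. I would analyze $\phi(q)\eqdef h(q)-2(p-q)^2$: one has $\phi(p)=0$, and differentiating,
\begin{align*}
	\phi'(q)\ =\ \frac{q-p}{q(1-q)}-4(q-p)\ =\ (q-p)\left(\frac{1}{q(1-q)}-4\right).
\end{align*}
Since $q(1-q)\le\nfrac14$, the second factor is nonnegative, so $\phi'\le 0$ on $(0,p)$ and $\phi'\ge 0$ on $(p,1)$; hence $\phi$ attains its minimum at $q=p$, where it vanishes, \ie $\phi\ge 0$ throughout. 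Unwinding the two reductions then gives $\sdist{P}{Q}^2=(p-q)^2\le\nfrac12\cdot\kld{\ber{p}}{\ber{q}}\le\nfrac12\cdot\kld{P}{Q}$, which is the claim.

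None of this is genuinely difficult --- it is a textbook inequality --- but if I must name the step needing the most care, it is the Bernoulli computation: spotting the factorization of $\phi'(q)$ and then using $q(1-q)\le\nfrac14$ to determine its sign on each side of $p$. The only other thing to keep an eye on is the logarithm base, which here works in our favor since $\ln 2<1$.
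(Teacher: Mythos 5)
Your proposal is correct. Note that the paper does not prove this fact at all --- it is stated as a known result with a pointer to \citet{FHT03} --- so there is no in-paper argument to compare against; what you give is the standard textbook derivation of Pinsker's inequality, and it goes through. The two points that needed care are both handled properly: (i) the reduction to the Bernoulli case via the set $A=\set{x: P(x)\ge Q(x)}$, the identity $\sdist{P}{Q}=P(A)-Q(A)$, and the log-sum (equivalently, data-processing) inequality for KL, with the degenerate cases $Q(A)=0$ or $Q(\cX\setminus A)=0$ and $\kld{P}{Q}=\infty$ dispatched up front; and (ii) the logarithm base --- since the paper takes all logarithms base $2$ and $\ln 2<1$, the base-$2$ statement is weaker than the natural-log Pinsker bound, so proving $p\ln\nfrac{p}{q}+(1-p)\ln\nfrac{1-p}{1-q}\ge 2(p-q)^2$ suffices. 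The calculus step is also right: $\phi'(q)=(q-p)\bigl(\nfrac{1}{q(1-q)}-4\bigr)$ together with $q(1-q)\le\nfrac14$ shows $\phi$ is minimized at $q=p$, where it vanishes. No gaps.
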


\subsection{Martingales}
Martingales play an important role in our analysis.
\begin{definition} [Martingales] \label{def:Martingale}
	A sequence of random variables $M= (M_1, \ldots, M_n)$ is a {\sf martingale} \wrt a sequence of random variables $X_1, \ldots, X_{n}$, if $\expec{}{M_{k+1} \cond X_{\le k}} = M_k$ and $M_k$ is determined by $X_{\le k}$ (for every $k\in [n]$). The sequence $M$ is a martingale if it is a martingale \wrt itself. The {\sf increments} (\aka differences) sequence of $M$ are the random variables $\set{M_{k+1} - M_k}_{k=1}^{n-1}$.
\end{definition}

In particular, we will be interested in the so-called Doob martingales.
\begin{definition} [Doob martingales]\label{def:DoobMartingale}
	The {\sf Doob martingale} of the random variables $X = (X_1, \ldots, X_n)$ induced by the function $f\colon \supp{X} \mapsto \reals$, is the sequence $M_1, \ldots, M_n$ defined by $M_k \eqdef \expecc{\bracketss{f}{X_1, \ldots, X_n} \cond X_{\le k}}$.
\end{definition}

The proof of the following known facts is immediate.
\begin{fact} [Martingale increments are orthogonal] \label{fact:martingales_pythagoras}
	Let $X_1, \ldots, X_n$ be a sequence of random variables. If there exist random variables $Z_1, \ldots, Z_n$ such that $\expecc{X_k \cond Z_{< k}} = 0$ and $X_k$ is determined by $Z_{\le k}$ (\ie $\sum X_i$ is a martingale \wrt $Z_k$), then $\var{\sum_{i=1}^n X_i} = \sum_{i=1}^n \var{X_i}$.
\end{fact}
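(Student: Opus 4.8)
The plan is to establish the two defining features of a martingale difference sequence — that each increment $X_k$ has mean zero, and that distinct increments $X_i$ and $X_j$ are uncorrelated — and then expand the square $\bigl(\sum_{i=1}^n X_i\bigr)^2$, observing that only the $n$ diagonal terms survive.

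First I would record that $\expecc{X_k} = 0$ for every $k \in [n]$: by the tower rule this follows immediately from the hypothesis $\expecc{X_k \cond Z_{<k}} = 0$, since $\expecc{X_k} = \expecc{\expecc{X_k \cond Z_{<k}}} = \expecc{0} = 0$. Hence $\var{\sum_{i=1}^n X_i} = \expecc{\bigl(\sum_{i=1}^n X_i\bigr)^2}$ and, for each $i$, $\var{X_i} = \expecc{X_i^2}$, so it suffices to show $\expecc{\bigl(\sum_i X_i\bigr)^2} = \sum_i \expecc{X_i^2}$, i.e., that $\expecc{X_i X_j} = 0$ whenever $i \ne j$.

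Next I would kill the cross terms. Fix $i < j$. Since $i \le j-1$, the hypothesis that $X_i$ is determined by $Z_{\le i}$ implies in particular that $X_i$ is a deterministic function of $Z_{<j} = Z_{\le j-1}$, so it can be pulled out of the conditional expectation given $Z_{<j}$:
\begin{align*}
	\expecc{X_i X_j} = \expecc{\expecc{X_i X_j \cond Z_{<j}}} = \expecc{X_i \cdot \expecc{X_j \cond Z_{<j}}} = \expecc{X_i \cdot 0} = 0.
\end{align*}
Summing the surviving diagonal terms then yields $\var{\sum_{i=1}^n X_i} = \sum_{i=1}^n \expecc{X_i^2} = \sum_{i=1}^n \var{X_i}$, as desired.

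I do not expect any real obstacle here — this is the classical orthogonality of martingale increments, and the paper itself flags the proof as immediate. The only point worth stating explicitly is the implicit regularity assumption that each $X_i$ has finite second moment (so that expanding the square and interchanging expectation with the finite sum are legitimate); in all of the paper's applications the $X_i$ are bounded, so this holds automatically.
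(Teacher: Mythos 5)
Your proof is correct and is exactly the standard orthogonality argument the paper has in mind when it declares the fact "immediate" (the paper gives no explicit proof): zero means via the tower rule, cross terms killed by conditioning on $Z_{<j}$ and pulling out the $Z_{<j}$-measurable factor $X_i$, and the diagonal terms give $\sum_i \var{X_i}$. Your remark about finite second moments is a fair bit of bookkeeping that indeed holds trivially in the paper's applications, where the increments are bounded.
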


\begin{fact} [Law of total expectation] \label{fact:total_expectation}
	For two random variables $Y, X$ it holds that $\expecc{Y} = \expecc{\expecc{Y \cond X}}$.
\end{fact}

\begin{fact} [Law of total variance] \label{fact:total_variance}
	For two random variables $Y, X$ it holds that $\var{Y} = \expecc{\var{Y \cond X}} + \var{\expecc{Y \cond X}}$.
\end{fact}

\paragraph{Sub-martingales.}
We also use the related notion of sub-martingales.
\begin{definition} [Sub-martingales] \label{def:submartingale}
	A sequence of random variables $S= (S_1, \ldots, S_n)$ is a {\sf sub-martingale} \wrt a sequence of random variables $X_1, \ldots, X_{n}$, if $\expec{}{S_{k+1} \cond X_{\le k}} \ge S_k$ and $S_k$ is determined by $X_{\le k}$ (for every $k\in [n]$). The sequence $S$ is a {\sf sub-martingale} if it is a sub-martingale \wrt itself.
\end{definition}

In particular, we make use of the following known inequality. 
\begin{fact} [Doob's maximal inequality] \label{fact:doob_maximal_inequality}
	Let $S_1, \ldots, S_n$ be a non-negative sub-martingale, then for any $c > 0$ it holds that $\pr{\sup_{k} {S_k} \ge c} \le \nfrac{\expecc{S_n}}{c}$. 
\end{fact}

\subsection{Full-Information Coin Flip} \label{sec:prelim:fullcoin}
We start with the formal definition of full-information coin-flipping protocols.
\begin{definition} [Full-information coin-flipping protocols] \label{def:full_info_coinflips}
	A protocol $\Pi$ is a {\sf full-information coin-flipping protocol} if it is {\sf stateless} (\ie the parties keep no private state between the different communication rounds),\footnote{Since we consider attackers of \emph{unbounded} computational power, this assumption is \wlg: given a stateful protocol we can apply our attack on its stateless variant in which each party, samples its state \emph{conditioned on the current public transcript} before it acts. It is easy to see that an attack on the stateless variant implies an attack of the same quality on the original (stateful) protocol.} single turn (\ie each turn consists of a {\sf single} party broadcasting a string), and the parties' common output is a deterministic Boolean function of the transcript.
\end{definition}

\begin{remark}[Many messages per communication round]
	Our attack readily applies to the model in which many parties might broadcast a message in a single round, as long as the adversary controls the message arrival order in this round (as assumed in \citet{TaumanKR18}). The setting in which many messages per round are allowed, and the adversary has no control over the arrival order, is equivalent (at least under a natural formulation of this model) to the static adversary cases, in which we know that $\Theta(\numparties/\log \numparties)$ corruptions ($\numparties$ being the number of parties) are required.
\end{remark}

\begin{notation}
	We associate the the following notation with an $\numparties$-party, $\lenproto$-party, full-information coin-flipping protocol $\Pi$:

	\begin{itemize}
		\item Let $\HonestMsg = (\HonestMsgg{1},\ldots,\HonestMsgg{\lenproto})$ denote a random transcript (\ie parties' messages) of $\Pi$.

		\item For partial transcript $\msgg{\le i} \in \supp{\HonestMsgg{\le i}}$, let $\bracketss{\Pi}{\msgg{\le i}} \eqdef \expecc{\bracketss{\Pi}{\HonestMsg} \cond \HonestMsgg{\le i} = \msgg{\le i}}.$
		(I.e., the expected outcome of $\Pi$ given $\msgg{\le i}$.) We let $\expec{}{\Pi} \eqdef \Pi()$, and refer to this quantity as the {\sf expected outcome of $\Pi$}.

		\item For $\msgg{< i}\in \Supp(\Msgg{< i}^\Pi)$, let $\party(\msgg{< i}) \in [\numparties]$ be the identity of the party to send the \ith message, as determined by $\msgg{< i}$.

		\item For a party $\Party \in [\numparties]$ and transcript $\msg\in \Supp(\HonestMsg)$, let $\Idx_\Party(\msg)\eqdef \set{i\in [\lenproto] \colon \party(\msgg{<i}) = \Party}$.
		
		\item For $\msgg{\le j}\in \Supp(\Msgg{< i}^\Pi)$ and $i < j$ let $\Speaker_i(\msgg{\le j}) = \party(\msgg{< i})$.

		\item For $\msgg{\le i}\in \Supp(\Msgg{\le i}^\Pi)$, let $\protincc{\Pi}{\msgg{\le i}} \eqdef \bracketss{\Pi}{\msgg{\le i}} - \bracketss{\Pi}{\msgg{< i}}$.

		(I.e., $\protincc{\Pi}{\msgg{\le i}} $ is the increment in expectation caused by the \tth{i} message.)
	\end{itemize}
\end{notation}

\subsubsection{Adaptive Adversaries}

\begin{definition} [Adaptive adversary]
	A {\sf $t$-adaptive adversary} for a full-information coin-flipping protocol in an {\sf unbounded} algorithm that can take the following actions during the protocol execution.

	\begin{enumerate}
		\item {\sf Before} each communication round, it can decide to add the next to speak party to the corrupted party list, as long as the size of this list does not exceed $t$.

		\item In a communication round where a corrupted party is speaking, the adversary has {\sf full control} over the message it sends but is bound to send a valid message (\ie in the protocol message space support).
	\end{enumerate}
\end{definition}

We make use of the following definitions and properties for such adversaries.

\paragraph{The attacked protocol.}
\begin{definition}[The attacked protocol]
	Given a full-information coin-flipping protocol $\Pi$ and a deterministic (adaptive) adversary \Ac attacking it, let $\Pi_\Ac$ be the full-information coin-flipping protocol induced by this attack: the parties act according to $\Pi$ while emulating \Ac. Once a party realizes it is corrupted, it acts according to the instruction of (the emulated) \Ac. For non-deterministic \Ac, let $\Pi_\Ac$ be the distribution over protocols induced by the randomness of \Ac.
\end{definition}

\paragraph{Derandomization.}
\begin{proposition} [Attacker derandomization] \label{prop:attacker_determinization}
	For an adversary \Ac acting on a full-information coin-flipping protocol $\Pi$ there exist {\sf deterministic} adversaries $\Ac^+$ and $\Ac^-$ such that $\expecc{\aprot{\Pi}{\Ac^+}} \ge \expecc{\aprot{\Pi}{\Ac}}$ and $\expecc{\aprot{\Pi}{\Ac^-}} \le \expecc{\aprot{\Pi}{\Ac}}$.
\end{proposition}
\begin{proof}
	By simple expectation arguments over the randomness of \Ac.
\end{proof}

\paragraph{Composition of adaptive adversaries.}
\begin{definition} [Adaptive adversary composition] \label{def:adaptive_adversary_composition}
	Let $\Pi$ be a coin-flipping protocol, let \Ac be an adaptive adversary for $\Pi$, and let \Bc be an adaptive adversary for $\aprot{\Pi}{\Ac}$. The adversary $\Bc \circ \Ac$ on $\Pi$ is defined as follows:
	\begin{algorithm}[Adversary $\Bc \circ \Ac$ on $\Pi$]~
		\smallskip

		{\bf For} $i \eqdef 1$ {\bf to} $\Length{\Pi}$:

		\quad If $\Cc \in \set{\Ac, \Bc}$ would like to modify the \ith message, corrupt the current party (if not already corrupted), and alter its message according to $\Cc$ (giving priority to $\Bc$ over $\Ac$).
	\end{algorithm}
\end{definition}

It is clear that if \Ac is $k_\Ac$-adaptive and \Bc is $k_\Bc$-adaptive, then $\Bc \circ \Ac$ is $(k_\Ac + k_\Bc)$-adaptive.
\smallskip

\begin{proposition}\label{lem:adaptive_adversary_composition}
	Let $\Pi$, \Ac and \Bc be as in \cref{def:adaptive_adversary_composition}, then $\expecc{\aprot{\Pi}{\Bc \circ \Ac}} = \expecc{\aprot{\brackets{\aprot{\Pi}{\Ac}}}{\Bc}}$. 
\end{proposition}
\begin{proof}
	It is clear that $\aprot{\brackets{\aprot{\Pi}{\Ac}}}{\Bc}$ and $\aprot{\Pi}{\Bc \circ \Ac}$ induce the same distribution on the protocol tree of $\Pi$, and thus induce the same output distribution.
\end{proof}

\smallskip

\smallskip

\paragraph{Composition of strongly adaptive adversaries.}

\begin{definition} [Strongly adaptive adversary composition] \label{def:strongly_adaptive_adversary_composition}
	Let $\Pi$ be a coin-flipping protocol, let \Ac be a strongly adaptive adversary for $\Pi$, and let \Bc be a strongly adaptive adversary for $\aprot{\Pi}{\Ac}$. The adversary $\Bc \circ \Ac$ on $\Pi$ is defined as follows:
	\begin{algorithm}[Adversary $\Bc \circ \Ac$ on $\Pi$]~
		\smallskip

		{\bf For} $i \eqdef 1$ {\bf to} $\Length{\Pi}$:
		\begin{enumerate}
			\item Let $\msg_i$ be the message sent in the \ith round.
			\item Emulate the \ith round of \Ac, with input $\msg_i$. If \Ac wishes to alter $\msg_i$ to $\hatmsg$, corrupt the current party (if not already corrupted), alter the sent message to $\hatmsg$, and set $\msg_i \getsr \hatmsg$.
			\item Emulate the \ith round of \Bc, with input $\msg_i$. If \Bc wishes to alter $\msg_i$ to $\hatmsg$, corrupt the current party (if not already corrupted) and alter the sent message to $\hatmsg$.
		\end{enumerate}
	\end{algorithm}
\end{definition}

It is clear that if \Ac is $k_\Ac$-strongly adaptive and \Bc is $k_\Bc$-strongly adaptive, then $\Bc \circ \Ac$ is $(k_\Ac + k_\Bc)$-strongly adaptive.
\smallskip

\begin{proposition} \label{lem:strongly_adaptive_adversary_composition}
	Let $\Pi$, \Ac and \Bc be as in \cref{def:strongly_adaptive_adversary_composition}, then $\expecc{\aprot{\Pi}{\Bc \circ \Ac}} = \expecc{\aprot{\brackets{\aprot{\Pi}{\Ac}}}{\Bc}}$. 
\end{proposition}
\begin{proof}
	It is clear that $\aprot{\brackets{\aprot{\Pi}{\Ac}}}{\Bc}$ and $\aprot{\Pi}{\Bc \circ \Ac}$ induce the same distribution on the protocol tree of $\Pi$, and thus induce the same output distribution.
\end{proof}

\subsection{Useful Inequalities}
We use the following standard inequalities.
\begin{fact} \label{lem:xlogopx_bound}
	For $-\frac{1}{2} \le x$ it holds that $x\logg{1+x} \le 2x^2$.
\end{fact}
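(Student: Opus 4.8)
The plan is to reduce the inequality $x\log(1+x) \le 2x^2$ (for $x \ge -\tfrac12$, all logs base $2$) to an elementary statement about the natural logarithm, and then verify it by a short case analysis. First I would convert to natural logs: writing $\log(1+x) = \ln(1+x)/\ln 2$, the claim becomes $x\ln(1+x) \le 2\ln 2 \cdot x^2$. Since $2\ln 2 > 1$, it suffices to prove the cleaner bound $x\ln(1+x) \le x^2$ on the range $x \ge -\tfrac12$, i.e.\ to show $g(x) \eqdef x^2 - x\ln(1+x) \ge 0$ there. Factoring out, $g(x) = x\bigl(x - \ln(1+x)\bigr)$, so the sign of $g$ is governed by the sign of $x$ and of $h(x) \eqdef x - \ln(1+x)$.

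The main work is the sign analysis of $h(x) = x - \ln(1+x)$ on $(-1,\infty)$. I would note $h(0) = 0$ and $h'(x) = 1 - \tfrac{1}{1+x} = \tfrac{x}{1+x}$, which is negative for $x \in (-1,0)$ and positive for $x > 0$; hence $h$ has a strict global minimum at $0$ with value $0$, so $h(x) \ge 0$ everywhere on $(-1,\infty)$, with equality only at $x=0$. Consequently, for $x \ge 0$ both factors $x$ and $h(x)$ are nonnegative, so $g(x) \ge 0$; and for $-\tfrac12 \le x < 0$ the factor $x$ is negative while $h(x) \ge 0$, so again $g(x) = x\cdot h(x) \ge 0$ (this is in fact where the $x\ge -\tfrac12$ hypothesis is harmlessly used — the argument works for all $x > -1$, but the statement only claims the restricted range). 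Thus $g(x) \ge 0$ throughout, giving $x\ln(1+x) \le x^2 \le 2\ln 2\cdot x^2$, and dividing by $\ln 2 > 0$ yields $x\log(1+x) \le 2x^2$ as claimed.

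I do not expect any real obstacle here; the only thing to be careful about is the base of the logarithm (the paper fixes base $2$), which is precisely why the slack factor $2$ appears and makes the bound comfortable — proving it with constant $1$ would fail for base $2$ near, say, $x$ large, but with constant $2$ there is room to spare since $2\ln 2 \approx 1.386 > 1$. An alternative, if one prefers to avoid calculus entirely, is to invoke the standard inequality $\ln(1+x) \le x$ for $x > -1$: then for $x \ge 0$ we get $x\log(1+x) \le x\cdot\frac{x}{\ln 2} = \frac{x^2}{\ln 2} \le 2x^2$, and for $-\tfrac12 \le x < 0$ one instead uses $\ln(1+x) \ge x$... — but that direction is false, so for negative $x$ one does need the bound $\ln(1+x) \ge \frac{x}{1+x} \ge 2x$ (valid since $1+x \ge \tfrac12$), which gives $x\log(1+x) \le \frac{2x^2}{\ln 2}$, slightly worse than claimed; hence the calculus route above, or a direct reference to $|\ln(1+x)| \le 2|x|$ on $[-\tfrac12,\infty)$ split by sign, is the cleanest. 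Either way the proof is two or three lines.
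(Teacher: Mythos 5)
Your main argument has a sign error, and it lands exactly on the part of the range where the fact is delicate. After reducing to $x\ln(1+x)\le x^2$ you factor $g(x)=x\bigl(x-\ln(1+x)\bigr)=x\,h(x)$ and correctly establish $h\ge 0$ on $(-1,\infty)$; but for $-\tfrac12\le x<0$ the product of the negative factor $x$ with the nonnegative factor $h(x)$ is $\le 0$, not $\ge 0$. Indeed the intermediate claim $x\ln(1+x)\le x^2$ is simply false there: at $x=-\tfrac12$ the left side is $\tfrac12\ln 2\approx 0.347$ while the right side is $\tfrac14$. There is no slack to absorb this either, because the paper's inequality (with $\log$ base $2$) is tight at $x=-\tfrac12$: $(-\tfrac12)\log(\tfrac12)=\tfrac12=2\cdot(\tfrac12)^2$. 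So the base-change cushion $2\ln 2>1$ you rely on does not exist on the negative side, and your closing fallback has the same problem: from $\ln(1+x)\ge 2x$, or from $|\ln(1+x)|\le 2|x|$, you only get $x\log(1+x)\le \tfrac{2}{\ln 2}\,x^2\approx 2.89\,x^2$, which, as you note yourself, is weaker than the claimed $2x^2$ --- and you then point back to ``the calculus route above'', which is the flawed one. (As a side remark, the constant-$1$ base-$2$ bound fails for small positive $x$, where $\log(1+x)\approx x/\ln 2>x$, not for large $x$.)

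The fact itself is true (the paper states it without proof as a standard inequality), and a correct two-case argument works directly in base $2$: for $x\ge 0$, $\log(1+x)\le \tfrac{x}{\ln 2}\le 2x$, and multiplying by $x\ge 0$ gives $x\log(1+x)\le 2x^2$; for $-\tfrac12\le x\le 0$ one instead needs $\log(1+x)\ge 2x$, which holds because $\phi(x)=\log(1+x)-2x$ is concave and vanishes at both endpoints $x=0$ and $x=-\tfrac12$, hence is nonnegative in between; multiplying by $x\le 0$ reverses it to $x\log(1+x)\le 2x^2$. The idea your write-up misses is that on the negative side the inequality must be proved for the base-$2$ logarithm itself (where it is an equality at $x=-\tfrac12$), rather than deduced from a natural-log bound with a smaller constant.
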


\begin{fact} \label{lem:xlogx_bound2}
	For $0 \le x \le 1$ it holds that $x\log x \ge -1$.
\end{fact}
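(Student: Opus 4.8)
The plan is to reduce this base-$2$ statement to the familiar natural-logarithm fact that $\sup_{x \in (0,1]}(-x\ln x) = 1/e$, and then observe that the constant $1/(e\ln 2)$ that pops out is already below $1$. Concretely, set $g(x) \eqdef x\log x$ on $[0,1]$, using (as elsewhere in the paper) the convention $g(0)=0$. The claim is immediate at the endpoints, since $g(0)=g(1)=0\ge -1$, so it suffices to handle $x\in(0,1)$. Writing $\log x = \ln x/\ln 2$ gives $g(x) = (x\ln x)/\ln 2$, so it is enough to upper bound $h(x) \eqdef -x\ln x$ on $(0,1)$.

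Next I would differentiate: $h'(x) = -\ln x - 1$, which vanishes exactly at $x = 1/e$, and $h''(x) = -1/x < 0$, so $x=1/e$ is the unique maximum of $h$ on $(0,1)$, with $h(1/e) = -\tfrac{1}{e}\ln\tfrac{1}{e} = 1/e$. Hence $-x\ln x \le 1/e$ for every $x\in(0,1)$, and therefore $g(x) = (x\ln x)/\ln 2 \ge -1/(e\ln 2)$.

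Finally I would note that $e\ln 2 > 1$ (indeed $e > 2.7$ and $\ln 2 > 0.69$, so $e\ln 2 > 1.86$), which yields $g(x) \ge -1/(e\ln 2) > -1$ and completes the proof. There is no genuine obstacle here; the only point requiring a little care is the base-$2$ versus natural-log conversion factor, since the cruder estimate $-x\ln x \le 1-x$ coming from $\ln t \le t-1$ would only give $g(x) \ge -1/\ln 2 \approx -1.44$, which is not enough — one really does need the sharper $1/e$ maximum obtained from the first-order condition.
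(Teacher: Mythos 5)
Your proof is correct. The paper states this fact without proof (it is listed among the ``standard inequalities''), so there is no in-paper argument to compare against; your calculus computation --- the maximum of $-x\ln x$ on $(0,1)$ is $1/e$ at $x=1/e$, hence $x\log x \ge -1/(e\ln 2) > -1$, with the endpoints handled by the convention $0\log 0 = 0$ --- is exactly the natural way to verify it, and your side remark about the base-$2$ conversion (that the crude bound $-x\ln x \le 1-x$ only yields $-1/\ln 2$) is accurate.
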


\newcommand{\epsnt}{\eps_n}
\newcommand{\deltant}{\delta_n}
\newcommand{\lambdant}{\lambda_n}

\def\oneoverepsvalue{\sqrt[50]{\log \log \numparties}}
\def\epsvalue{\nfrac{1}{\oneoverepsvalue}}

\def\amtcorruptions{\bigO{\sqrt{\numparties} \cdot \log \numparties}}

\def\oneoversupportthresholdnt{\lambdant \sqrt{\numparties}}
\def\supportthresholdnt{\nfrac{1}{\oneoversupportthresholdnt}}
\def\variancethresholdnt{\nfrac{1}{\lambdant\numparties}}

\def\oneoversupportthreshold{\lambda \sqrt{\numparties}}
\def\supportthreshold{\nfrac{1}{\oneoversupportthreshold}}
\def\variancethreshold{\nfrac{1}{\lambda\numparties}}

\newcommand{\ycondvar}[1]{\varian{Y_{#1} \cond \CorruptedMsgg{< {#1}}}}

\section{Biasing Robust Coin Flip} \label{sec:AttackingRobustProtocols}
In this section, we present an attack for biasing robust coin-flipping protocols. To simplify notation, we focus on robustness towards $0$; see below. In the following, $n$ typically represents the number of parties of the robust (``non-normal'') protocol.
We make use of the following notation.

\def\epslamdelDef{
	  For $n \in \N$, let $\epsnt \eqdef \epsvalue$, $\lambdant \eqdef \nfrac{100}{\epsnt^5} = 100 \cdot \sqrt[10]{\log \log \numparties}$ and $\deltant \eqdef \nfrac{1}{\log^2 \numparties}$.
}

\begin{notation} \label{notation:epslamdel}
    \epslamdelDef
\end{notation}

The main result of this section is stated below.
\begin{definition}[Robust coin-flipping protocols] \label{def:robustProtocols}
	An $\lenproto$-round, full-information, coin-flipping protocol $\Pi$ is {\sf $(\alpha,\beta)$-robust}, if $\pr{\exists i \in [\lenproto] \colon \minnn{\supp{\protincc{\Pi}{\HonestMsgg{\le i}} \cond \HonestMsgg{< i}}} \le -\alpha} \le \beta$.
\end{definition}

\begin{theorem} [Biasing robust coin-flipping protocols] \label{thm:biasing_zero_robust_protocols}
    Let $\Pi$ be an $\numparties$-party, $(\nfrac{1}{(\lambdant \cdot \sqrt{n})}, \deltant)$-robust, full-information coin-flipping protocol such that $\expecc{\Pi} \ge \epsnt$. Then there exists an $\amtcorruptions$-adaptive adversary \Ac such that $\expecc{\Pi_\Ac} \ge 1 - \epsnt$.
\end{theorem}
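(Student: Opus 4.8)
The plan is to follow the outline from \cref{sec:intro:Technique:Many}, making each step rigorous. First, I would reduce to a convenient normal form: using the partitioning operation sketched in the introduction, rewrite $\Pi$ as a protocol $\tPi$ on pseudo-parties so that each pseudo-party either sends a single ``large-jump'' message (one with conditional variance $\ge \variancethreshold$) or sends only ``small-jump'' messages whose total conditional variance along any branch is at most $\nfrac{1}{\numparties}$ (up to the relaxation the footnote warns about). Crucially the partitioning at most increases the number of pseudo-parties by a $\polylog$ factor, so an $\amtcorruptions$-adversary on $\tPi$ yields an $\amtcorruptions$-adversary on $\Pi$, and it does not affect robustness. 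I would state and prove this normalization as a separate lemma.

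Next I would formally define the attacker ($\MathAlg{A}$) exactly as in \cref{alg:intro:Many}: before the \ith message, set $Q_i \eqdef \restr{\Msgg{\le i}}{\Msgg{<i}=\msgg{<i}}$, $\diff_i \eqdef \diff(\msgg{<i},\cdot)$, $v_i \eqdef \var{\diff_i(Q_i)}$; if $v_i \ge \variancethreshold$ corrupt with probability $\nfrac{1}{\eps^3}\sqrt{v_i}$ and nudge by $\biased{1/\sqrt{v_i}}{\diff_i}{Q_i}$; otherwise, if this is the pseudo-party's first message, enroll it in the ``lottery'' with probability $\nfrac{1}{\eps^3}\cdot\nfrac{1}{\sqrt{\numparties}}$, and if enrolled nudge every one of its messages by $\biased{\sqrt{\numparties}}{\diff_i}{Q_i}$. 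Using the robustness bound (\cref{def:robustProtocols}) one must check that the relevant $\biased{}{}{\cdot}$ distributions are valid (the function $\diff_i$ is bounded below by the negative support threshold outside a $\delta$-probability event, which is absorbed into the final $\bigO{\eps}$); I would handle the bad event by a clean coupling/halting argument, exactly as in the negative-jumps discussion. Then I would define $\hMsgg{}$, the submartingale $S_i \eqdef \Pi(\hMsgg{\le i})$, the martingale-difference sequence $Y_i \eqdef \diff(\hMsgg{<i},q)$ for $q\gets Q_i$ (orthogonal by \cref{fact:martingales_pythagoras}), prove the liveliness bound $\expecc{\sum_i \var{Y_i\mid\hMsgg{<i}}} = \bigO{\eps^3}$ (the analogue of \cref{lem:intro:Technique:Many}), and deduce from it via \cref{fact:martingales_pythagoras}, Chebyshev, the coupling $S_i - S_{i-1}\ge Y_i$, and $\expecc{\Pi}=S_0\ge\eps$ that $\pr{S_\lenproto=1}\ge 1-\bigO{\eps}$, and via Markov that the expected corruption count $\bigO{\sqrt{\numparties}/\eps^3} = \amtcorruptions$ concentrates.

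The heart of the argument — and the main obstacle — is the liveliness bound for small-jump pseudo-parties, because (unlike the single-turn case) the earlier messages of a corrupted small-jump party \emph{leak} the fact that it is corrupted, so the conditional corruption probability is no longer the fixed $\nfrac{1}{\eps^3\sqrt{\numparties}}$ and the clean per-step identity $\expecc{S_i - S_{i-1}\mid\hMsgg{<i}} = \nfrac{1}{\eps^3}\var{Y_i\mid\hMsgg{<i}}$ fails. The plan to overcome this is a KL-divergence accounting: fix a small-jump pseudo-party $\partition$; along the all-honest execution its messages form a distribution $P_\partition$, and conditioning on $\partition$ being corrupted replaces it by the nudged distribution $\hat P_\partition$. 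Since each message is nudged proportionally to the (tiny) conditional variance it induces and the total conditional variance of $\partition$'s messages is $\le\nfrac{1}{\numparties}$, a round-by-round bound using \cref{lem:xlogopx_bound} gives $\kld{\hat P_\partition}{P_\partition} = \bigO{\numparties\cdot\sum_i v_i^2}= \bigO{\sum_i v_i}$-type control, hence by Pinsker the corrupted-vs-honest transcripts of $\partition$ are statistically close; this lets me charge the ``missing'' drift back, showing that \emph{on average over the branch} $\expecc{S_i - S_{i-1}\mid\hMsgg{<i}} = \Omega(\nfrac{1}{\eps^3}\var{Y_i\mid\hMsgg{<i}})$ up to an additive $\bigO{\eps^3}$ loss summed over all parties. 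Summing over pseudo-parties and using $0\le S_0\le S_\lenproto\le 1$ then yields $\expecc{\sum_i \var{Y_i\mid\hMsgg{<i}}} = \bigO{\eps^3}$. The large-jump parties are handled verbatim as in the single-turn analysis, and the negative-jump bad event contributes only $\bigO{\delta}=\smallO{\eps}$. Assembling these pieces gives \cref{thm:biasing_zero_robust_protocols}.
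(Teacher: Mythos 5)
Your outline follows the paper's route at a high level (normalization into pseudo-parties, the gentle attacker of \cref{alg:intro:Many}, the coupling $X_i \ge Y_i$, a variance/liveliness bound with a per-party KL--Pinsker argument for small-jumps parties), but there is a genuine gap in how you treat the robustness bad event, and it is not a technicality. Robustness only says that \emph{under the honest measure} the event that some round admits a jump below $-\supportthreshold$ has probability at most $\delta$. Your attack tilts every message (by factors as large as $1+\diff_i/\sqrt{v_i}$ in large-jump rounds), so over $\lenproto$ rounds the likelihood ratio between attacked and honest executions is a priori unbounded and the bad event could become likely under the attacked measure; moreover the validity of the $\Biased$ distributions (the requirement $\diff_i \ge -\nfrac{1}{\alpha}$) fails exactly on those branches. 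Asserting that this is ``absorbed into the final $\bigO{\eps}$'' or handled by ``a clean coupling/halting argument'' begs the question: you must show the halting event is rare \emph{under the attacked distribution}. The paper does this by (i) routing every round with a possible negative jump to a dedicated pseudo-party $\BadParty$ in the normalization (\cref{proto:NormalTrans}), which the adversary never touches, and (ii) proving a \emph{global} bound $\kld{\CorruptedMsgg{}}{\Msggg{}{\Pi}} \le 16^3\lambda^3$ (\cref{lemma:smallkl}), from which a KL-on-indicators computation yields $\pr{\Idx_\BadParty(\CorruptedMsgg{}) \neq \emptyset} \le \nfrac{\eps}{4}$. Crucially, that global bound forces a modification of the attacker which your proposal omits: a corrupted small-jumps party reverts to \emph{honest} sampling whenever the posterior probability that it is corrupted, given the public transcript, exceeds $\nfrac{16\lambda^2}{\sqrt{\numparties}}$ (\stepref{alg:adversary:samplehonestly} of \cref{alg:adversary_normal}). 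Without this cap the effective per-message tilt is $\biased{p\sqrt{\numparties}}{}{\cdot}$ with $p$ an uncontrolled posterior, and the chain-rule KL sum can be of order $\numparties \cdot \expecc{\sum_i \var{Y_i \cond \CorruptedMsgg{<i}}}$, which is useless; with the cap each term is $\bigO{\lambda^4 v_i}$ and the bound follows. (The cap must then also be accounted for in the drift-recovery step, which is why the paper introduces the event $H$ and Doob's maximal inequality in \cref{claim:contributional_party}.) Your per-party KL--Pinsker step for the liveliness bound is essentially the paper's and is sound, modulo an exponent slip: the per-message divergence is $\bigO{\numparties \cdot v_i}$, not $\bigO{\numparties \cdot v_i^2}$, which still sums to $\bigO{\nfrac{1}{\lambda}}$ by normality.

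A smaller but real issue is the corruption count. The normalization does \emph{not} keep the number of pseudo-parties within a $\polylog$ factor of $\numparties$ (\cref{proto:NormalTrans} creates $2\lenproto\numparties+1$ of them), and that quantity is not what controls the budget anyway. What the paper uses is that the greedy partition leaves at most $\numparties$ \emph{unfulfilled} small pseudo-parties in any transcript, which together with the liveliness bound gives $\expecc{\size{\SmallParties}} \le 3\numparties$ and hence $\bigO{\lambda^2\sqrt{\numparties}}$ expected small-party corruptions, while large-jump corruptions are charged to the variance sum via $\sqrt{v_i} \le \sqrt{\lambda\numparties}\cdot v_i$. Your normalization lemma should state and use this ``at most $\numparties$ unfulfilled parties'' property rather than a bound on the total number of pseudo-parties.
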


We start, \cref{sec:NormalRobustCF}, by proving a variant of \cref{thm:biasing_zero_robust_protocols}
for ``normal'' coin-flipping protocols. Informally, in a \emph{normal} coin-flipping protocol, parties participating in multiple rounds don't have ``too large of an influence'' on the protocol's outcome, though a party sending a single message may greatly influence the outcome. In \cref{sec:NormalRobustCF:4}, we leverage this attack for proving \cref{thm:biasing_zero_robust_protocols} by transforming any given protocol to a normal protocol, and show that the guaranteed attack on the latter protocol yields an attack of essentially the same quality on the original protocol.\footnote{It is worth mentioning that the shift from attacking arbitrary protocols to normal protocols is merely done for notational convince, and nothing exciting is hidden under the hood of the transformation above.}

\subsection{Biasing Normal Robust Coin Flip} \label{sec:NormalRobustCF}

\def\altNP{t}
Normal coin-flipping protocols are coin-flipping protocols of a concrete message-ownership structure. In \cref{sec:NormalRobustCF:4}, we show that an arbitrary coin-flipping protocol can be viewed, with some parameter adaptation, as a normal coin-flipping protocol. 

\begin{notation}[Parties classification] \label{not:normal_protocol}~

	\begin{description}
		\item[Large-jump parties.] A party $\Party$ has {\sf a large jump in $\msg$}, if $\exists i \in \Idx_\Party(\msg)$ s.t. 
		
		$\varian{\protincc{\Pi}{\HonestMsgg{\le i}} \cond \HonestMsgg{< i} = \msgg{< i}} \ge 1/( n \lambdant)$.

		\item[Small-jump parties.] A party $\Party$ has {\sf small jumps in $\msg$} if it participates (sends at least one message) but has no large jumps in $\msg$.

		\item[Unfulfilled parties.] A small-jumps party is {\sf unfulfilled in $\msg$} if 
		
		$\sum_{i \in \Idx_\Party(\msg)}\varian{\protincc{\Pi}{\HonestMsgg{\le i}} \cond \HonestMsgg{< i} = \msgg{< i}} <1/( n \lambdant)$.

	\end{description}
\end{notation}

Similarly, we refer to jumps as small or big jumps, based on their variance, \ie large jumps are jumps of variance $\ge 1/(n\lambdant)$ and small jumps are jumps of variance $< 1/(n\lambdant)$.

\begin{definition} [Normal coin-flipping protocols] \label{def:normal_protocol}
    Let $\Pi$ be a $\altNP$-party, $\lenproto$-round, full-information coin-flipping protocol and let $\numparties \in \N$. We say $\Pi$ is {\sf $\numparties$-normal}, if the following hold: 

    \begin{quote}
    \begin{description}
        \item[Large-jump parties send a single message.] $\size{\Idx_{\Party}(\msg)} = 1$, for every large-jump party \Party in $\msg\in \Supp(\HonestMsg)$.
        
        \item[Small-jump party has bounded overall variance.] \phantom{.}\\
		$\sum_{i \in \Idx_\Party(\msg)} \varian{\protincc{\Pi}{\HonestMsgg{\le i}} \cond \HonestMsgg{< i} = \msgg{< i}} \le 2/( n \lambdant)$, for every small-jumps party \Party in $\msg \in \Supp(\HonestMsg)$.

        \item[Bounded number of unfulfilled parties.] 
        In every transcript $\msg \in \Supp(\HonestMsg)$ there are at most $\numparties$ {\sf unfulfilled} (participating) parties.

        \item[At most one non-robust party.]   
        There exists at most one party $\BadParty \in [\altNP]$ such that     for every    transcript $\msg \in \Supp(\HonestMsg)$ and $i\in [\ell]\setminus \Idx_{\BadParty}(\msg)$:
         \[ 
        \minnn{\supp{\protincc{\Pi}{\HonestMsgg{\le i}} \cond \HonestMsgg{< i} = \msgg{< i}}} > - 1/(\lambdant  \sqrt{n})
         \]
        
        (Namely, only $\BadParty$ can cause large negative jumps.)
    \end{description}
    \end{quote}
\end{definition}
Intuitively, a protocol is $n$-normal if no small-jump parties have too large influence (as a function of $n$).  In \cref{sec:NormalRobustCF:4}, we show how to turn any  $n$-party protocol into a $t$-party, $n$-normal protocol (for some $t > n$).

We present an attack on normal protocols that either corrupts (essentially) \emph{all} messages sent by a party or corrupts \emph{none} of them.
\begin{theorem} [Biasing normal robust coin-flipping protocols] \label{thm:biasing_normal_robust_coinflips}
    For every  $\numparties$-normal, $\lenproto$-round, $(\nfrac{1}{(\lambdant\sqrt{n})}, \deltant)$-robust, full-information coin-flipping protocol $\Pi$ with $\expecc{\Pi} \ge \epsnt$,  there exists  $\amtcorruptions$-adaptive adversary \Ac with $\expecc{\Pi_\Ac} \ge 1 - \epsnt$.
\end{theorem}

That is, if $\Pi$ is $\numparties$-normal and $(\nfrac{1}{(\lambdant\sqrt{n})}, \deltant)$-robust, it can be biased by an $\amtcorruptions$-adaptive adversary.  

\subsubsection{Gently Biasing a Distribution}
We begin by introducing a method of biasing distributions to increase their expectation under some utility function. Our adversary will then use this technique to modify the messages of the corrupted parties, with the utility function being the change induced in the protocol's expected outcome.

\begin{definition} [$\Biased$ distribution] \label{def:properties_biased}
    Let $P$ be a distribution, let $\alpha > 0$ and let $f\colon \supp{P} \mapsto [-\nfrac{1}{\alpha}, \infty)$ be such that $\expecc{\bracketss{f}{P}} = 0$.
    The distribution $\biased{\alpha}{f}{P}$ is defined by $\pr{\biased{\alpha}{f}{P} = e} = \pr{P = e} \cdot \brackets{1 + \alpha \cdot \bracketss{f}{e}}$.
\end{definition}

It is easy to verify this is indeed a distribution.
If $f$ is the identity function, we sometimes omit it from the above notation. We use the following properties of the $\Biased$ distribution (proven in \cref{sec:properties_biased}).

\def\BiasedLemma
{
	For any $P$, $\alpha$ and $f$ as in \cref{def:properties_biased}, it holds that
	\begin{enumerate}
		\item $\expecc{\bracketss{f}{\biased{\alpha}{f}{P}}} = \alpha \cdot \varian{\bracketss{f}{P}}$.
		
		\item $\kld{\biased{\alpha}{f}{P}}{P} \le 2\alpha^2 \cdot \varian{\bracketss{f}{P}}$.
		
		\item $\brackets{p \cdot \biased{\alpha}{f}{P} + \brackets{1 - p} \cdot P} \equiv \biased{p\cdot \alpha}{f}{P}$ for any $p\in [0,1]$.
		
		\item There exist a distribution $(A, B)$ which couples $P$ and $\biased{\alpha}{f}{P}$, \ie $A \equiv P$ and $B \equiv \biased{\alpha}{f}{P}$, such that for any $(a, b) \getsr (A, B)$ it holds that $\bracketss{f}{B} \ge \bracketss{f}{A}$.
	\end{enumerate}
}

\begin{lemma}[Properties of the $\Biased$ distribution]\label{claim:properties_biased}
\BiasedLemma
\end{lemma}
\subsubsection{The Attack}
Using the above tool, we define our attack on normal coin-flipping protocols. 
Fix an $\numparties$-normal, $\lenproto$-round, full-information coin-flipping protocol $\Pi$, such that $\expecc{\Pi} \ge \epsnt$ and $\pr{\Idx_\BadParty(\HonestMsg) \neq \emptyset} \le \deltant$. When clear from the context, we omit the subscript $n$ from the notations $\epsnt, \lambdant, \deltant$. The $\amtcorruptions$-adaptive attacker \Ac on $\Pi$ is defined as follows.

\begin{algorithm}[The attacker \Ac] \label{alg:adversary_robust_normal}
    \item {\bf For} $i \eqdef 1$ {\bf to} $\lenproto$, do the following \emph{before} the \ith message is sent:
    \begin{enumerate}
        \item Let \Party be the party about to send the \ith message. If $\Party = \BadParty$, abort.

        \item Let $\msgg{<i}$ denote the messages sent in the previous rounds. Let $\condMsgg{i}$ be the distribution $\restr{\HonestMsgg{i}}{\HonestMsgg{< i} = \msgg{< i}}$, let $\diff_i(x) \eqdef \protincc{\Pi}{\msgg{< i}, x}$ and let $v_i \eqdef \varian{\bracketss{\diff_i}{\condMsgg{i}}}$.

        \item {\bf If} this is the first message sent by \Party, corrupt \Party according to the following method:
              \begin{enumerate}
                  \item {\bf If} \Party is a large-jump party, \ie $v_i \ge \variancethreshold$, corrupt it with probability $\lambda^2 \cdot \sqrt{v_i}$. \footnote{$\lambda^2 \cdot \sqrt{v_i}$ is indeed in the interval $[0, 1]$: $\Party \neq\BadParty$ implies that $v_i \le \supportthreshold$.} \label{alg:largeparty:corruption}

                  \item {\bf Else} (\Party is a small-jumps party), corrupt \Party with probability $\nfrac{\lambda^2}{\sqrt{\numparties}}$. \label{alg:smallparty:corruption}
              \end{enumerate}

        \item {\bf If} \Party is in the corrupted parties pool: \label{alg:corruptedphase}
              \begin{enumerate}
                  \item {\bf If} \Party is a large-jump party, instruct \Party to broadcast its next message according to $\biased{\nfrac1{\sqrt{v_i}}}{\diff_i}{\condMsgg{i}}$. \label{alg:largeparty:alter}

                  \item {\bf Else}: \label{alg:smallparty:alter}
                        \begin{enumerate}
                            \item {\bf If} $\pr{\Party \text{ is corrupted by } \Ac \cond \Msggg{<i}{\Pi_\Ac} = \msgg{<i}} \le \nfrac{16\lambda^2}{\sqrt{\numparties}}$,\footnote{Recall that $\Pi_\Ac$ is the protocol induced by the attack of \Ac on $\Pi$, and or $\Msggg{< i}{\Pi_\Ac}$ are the first $i-1$ messages in its execution. Hence, we are defining the strategy of \Ac in the \ith round using its strategy in the first $i-1$ rounds, so this self-reference is well defined.}
                            instruct \Party to broadcast its next message according to $\biased{\sqrt{\numparties}}{\diff_i}{\condMsgg{i}}$.

                            \item {\bf Else}, instruct \Party to sample its next message \emph{honestly} (\ie according to $\condMsgg{i}$). \label{alg:adversary:samplehonestly}
                        \end{enumerate}
              \end{enumerate}
    \end{enumerate}
\end{algorithm}

The main difference between the above attacker and its simplified variant presented in \cref{sec:Technique}, is that the above attacker might decide not to modify a message of an already corrupted party (see \stepref{alg:smallparty:alter}). This change enables us to easily bound the KL-divergence between the attacked and all-honest distributions, a bound that plays a critical role in our analysis.\footnote{We are not sure whether this change is mandatory for the attack to go through or merely an artifact of our proof technique that bounds the KL divergence between the attacked and honest execution (see \cref{clm:smallvar}).} In the rest of this section, we analyze the expected outcome of $\aprot{\Pi}{\Ac}$ and the number of parties \Ac corrupts. The proof makes use of the following random variables associated with a random execution of $\Pi_\Ac$.

\begin{notation}[Random variables associated with a random execution of $\Pi_\Ac$]~
	\begin{itemize}
		\item $\CorruptedMsg = (\CorruptedMsgg{1}, \ldots, \CorruptedMsgg{\lenproto})\eqdef \Msggg{}{\Pi_\Ac}$.
        		
		\item $S_k \eqdef \bracketss{\Pi}{\CorruptedMsgg{\le k}}$.

		(Note that $S_0, \ldots, S_\lenproto$ is \emph{sub-martingale} \wrt $\CorruptedMsg$.) 
		
		\item $X_k \eqdef S_k - S_{k-1}$. 
			
		(Note that $X_1, \ldots, X_\lenproto$ are the jumps induced by the attacked execution, \ie $X_i= \protincc{\Pi}{\CorruptedMsgg{\le k}}$. Also note that $S_0 = \expecc{\Pi} \ge \eps$ and $S_\lenproto = S_0 + \sum_{i=1}^\lenproto X_i$.)
		
		\item $\CorruptedParties$: the set of parties corrupted in this execution of \Ac on $\Pi$.
		
		(Note that $\CorruptedParties$ is \emph{not} determined by $\CorruptedMsg$, as there is additional randomness involved.)

        \item $Q_1,\ldots,\condMsgg{\lenproto}$: the value of these variables as computed by \Ac.
	\end{itemize}
\end{notation}

We prove \cref{thm:biasing_normal_robust_coinflips} via three key observations. The first observation, proved in \cref{sec:NormalRobustCF:Coupling}, guarantees a per-round coupling between the change in expected outcome induced by the attack and what would have been the change in an honest execution (conditioned on previous messages).

\def\claimbiasedcoupling
{
	There exists a random variable $Y=(Y_1, \ldots, Y_\lenproto)$ jointly distributed with $\CorruptedMsg$, such that for every $i \in [\lenproto]$: 
	\begin{enumerate}
		\item $X_i \ge Y_i$.
		
		\item Conditioned on $\CorruptedMsgg{< i}$: $Y_i$ is distributed like $\protincc{\Pi}{\restr{\HonestMsgg{\le i}}{\HonestMsgg{< i} = \CorruptedMsgg{< i}}}$, and is independent of $Y_{< i}$ and $\set{\Speaker_i(\CorruptedMsgg{< i}) \in \CorruptedParties}$.
		
	\end{enumerate}
}	
\begin{claim}[Coupling honest and attacked conditional distributions] \label{clm:biased_pairing}
	\claimbiasedcoupling  
\end{claim}

That is, $Y_i$ is distributed like the (conditional) change in expected outcome induced by the \ith step \emph{if it were carried out honestly}, and is never larger than the (conditional) change induced by the \ith step of the attacked execution. 
It is easy to verify that $\expecc{Y_k \cond \CorruptedMsgg{< k}, Y_{< k}} = 0$, \ie $\sum_{i=1}^k Y_i$ is a martingale difference sequence \wrt $(\CorruptedMsgg{k}, Y_{k})$.
For the rest of this section, let $Y$ be the random variable guaranteed by \cref{clm:biased_pairing}.

Next, we consider the set of indices corresponding to robust jumps, defined by
\def\GoodJumpsE{[\lenproto]\setminus \Idx_{\BadParty}(\CorruptedMsg)}
\begin{align}
    \GoodJumps \eqdef \GoodJumpsE
\end{align}

Note that $\GoodJumps$ is a random set, determined by $\CorruptedMsg$.
The following observation (proved in \cref{sec:NormalRobustCF:1}) states that the overall conditional variance of $Y$ contributed by the robust jumps is small, which implies that the variance of $\sum Y_i$ is small. It follows that $\sum Y_i$ 
is typically not ``too small'', and since $X_i \ge Y_i$, that $\sum X_i$ is typically not too small.

\def\claimsmallvar
{
$\expecc{\sum_{i \in \GoodJumps} \ycondvar{i}} < \nfrac{2}{\lambda}$.	
}
\begin{claim}[Bounding $Y$'s conditional variance] \label{clm:smallvar}
	\claimsmallvar   
\end{claim}

Finally, in \cref{sec:NormalRobustCF:2} we prove that attacked execution does not deviate too much, in KL-divergence terms, from the honest execution. This implies that, with overwhelming probability, $\BadParty$ does not participate in the protocol (since it participated in the original protocol with very small probability).

\def\claimsmallkl
{
$\kld{\CorruptedMsg}{\HonestMsg} \le 16^3 \lambda^3$.
}	
\begin{claim}[Bounding KL-Divergence between attacked and honest executions] \label{clm:smallkl}
	\claimsmallkl
\end{claim}

\newcommand{\BadHonest}{\mathrm{Bad}^\Pi}
\newcommand{\BadAttacked}{\mathrm{Bad}^A}
\def\rcorruptions{10 \lambda^4 \sqrt{\numparties}}
Equipped with \cref{clm:biased_pairing,clm:smallvar,clm:smallkl}, we are ready to prove \cref{thm:biasing_normal_robust_coinflips}.
\begin{proof} [Proof of \cref{thm:biasing_normal_robust_coinflips}]~
    \paragraph{Expected outcome.}
    We start by analyzing the expected bias induced by \Ac. Note that

    \begin{align} \label{eq:biasing_robust_normal:varian_expectedoutcome} \textstyle
    	\varian{\sum_{i \in \GoodJumps} Y_i} = \sum_{i=1}^\numparties \varian{Y_i \cdot \ind{i \in \GoodJumps}} = \expecc{\sum_{i \in \GoodJumps} \ycondvar{i}} \le \nfrac{2}{\lambda}
    \end{align}

    The first and second equalities holds by \cref{fact:martingales_pythagoras,fact:total_variance} respectively, since $\expecc{Y_k \cond \CorruptedMsgg{< k}, Y_{< k}} = 0$ and $\ind{i \in \GoodJumps}$ is determined by $\CorruptedMsgg{< k}$. The inequality holds by \cref{clm:smallvar}. 
    
    By \cref{eq:biasing_robust_normal:varian_expectedoutcome} and Chebyshev's inequality, (remember that $\expecc{\sum_{i \in \GoodJumps} Y_i} = 0$)
    \begin{align} \textstyle
        \pr{\sum_{i \in \GoodJumps} Y_i \le \nfrac{-\eps}{2}} \le \pr{|\sum_{i \in \GoodJumps} Y_i| \ge \nfrac{\eps}{2}}
		\le \brackets{\nfrac{\nfrac{\eps}{2}} {\sqrt{2/\lambda}}}^{-2}
        = \nfrac{8}{\lambda \eps^2} = \frac{8}{\eps^2 \cdot \nfrac{100}{\eps^5}} \le \eps / 4 \label{eq:robust_coinflip:ybound}
    \end{align}
    We next show that with overwhelming probability $\GoodJumps = [\lenproto]$, namely $\BadParty$ does not participate in the execution. Let $\BadHonest$ be the event $\set{\Idx_{\BadParty}(\HonestMsg) \neq \emptyset}$, and let $\BadAttacked$ be the event $\set{\Idx_{\BadParty}(\CorruptedMsg) \neq \emptyset}$. By assumption, $\pr{\BadHonest} \le \delta$, and by \cref{clm:smallkl} and data-processing of KL-Divergence,
    \begin{align}
        \kld{\ind{\BadAttacked}}{\ind{\BadHonest}} \le \kld{\CorruptedMsg}{\HonestMsg} \le 16^3 \lambda^3
		\label{ineq:biasing_normal_robust:kld_ub}
    \end{align}
    We also note that,
    \begin{align}
         & \kld{\ind{\BadAttacked}}{\ind{\BadHonest}} = \pr{\BadAttacked} \cdot \logg{\frac{\pr{\BadAttacked}}{\pr{\BadHonest}}} + \brackets{1 - \pr{\BadAttacked}} \cdot \logg{\frac{1 - \pr{\BadAttacked}}{1 - \pr{\BadHonest}}} \label{ineq:biasing_normal_robust:kld_lb} \\
         & = \pr{\BadAttacked} \cdot \logg{\frac{\pr{\BadAttacked}}{\pr{\BadHonest}}} + \brackets{1 - \pr{\BadAttacked}} \cdot \brackets{\logg{1 - \pr{\BadAttacked}} - \logg{1 - \pr{\BadHonest}}} \nonumber \\
         & \ge \pr{\BadAttacked} \cdot \logg{\frac{\pr{\BadAttacked}}{\pr{\BadHonest}}} + (-1 + 0) \ge \pr{\BadAttacked} \cdot \logg{\frac{\pr{\BadAttacked}}{\delta}} - 1. \nonumber
    \end{align}
    where the penultimate inequality follows by \cref{lem:xlogx_bound2}. 
    We {now show} that 
    \begin{align}
		\pr{\BadAttacked} \le \nfrac{\eps}{4} \label{eq:robust_coinflip:notrobustbound}
	\end{align}
   Indeed, assuming \cref{eq:robust_coinflip:notrobustbound} does not hold, then (for sufficiently large $\numparties$), 
    \begin{align}
		\sqrt{\log \log \numparties} \le \frac{\logg{\nfrac{\log^2 \numparties}{4\oneoverepsvalue}}}{4\oneoverepsvalue} - 1 \le \phantom{.} & \kld{\ind{\BadAttacked}}{\ind{\BadHonest}} 
		\label{ineq:biasing_normal_robust:usage_kld_lb} \\
		& \kld{\ind{\BadAttacked}}{\ind{\BadHonest}} \le 16^3 \lambda^3 < \sqrt{\log \log \numparties} \label{ineq:biasing_normal_robust:usage_kld_ub}
    \end{align}
    \Inqref{ineq:biasing_normal_robust:usage_kld_lb} follows by \cref{ineq:biasing_normal_robust:kld_lb}, and \Inqref{ineq:biasing_normal_robust:usage_kld_ub} by \cref{ineq:biasing_normal_robust:kld_ub}. Overall---yielding a contraction. Combining \cref{eq:robust_coinflip:ybound,eq:robust_coinflip:notrobustbound} yields,
    \begin{align*}
         & \expecc{\aprot{\Pi}{\Ac}} = \pr{S_\lenproto = 1} = \pr{S_\lenproto > 0} = \pr{S_0 + \sum_{i=1}^\lenproto X_i > 0} = \pr{\sum_{i=1}^\lenproto X_i > -S_0} \ge \pr{\sum_{i=1}^\lenproto X_i > -\eps} \nonumber \\ 
         & \ge \pr{\textstyle \sum_{i \in \GoodJumps} Y_i > -\eps} - \pr{\Idx_{\BadParty}(\CorruptedMsg) \neq \emptyset} \ge \brackets{1 - \nfrac{\eps}{4}} - \nfrac{\eps}{4} \ge 1 - \nfrac{\eps}{2}. \nonumber
    \end{align*}
    The second inequality follows by \cref{clm:biased_pairing}. The penultimate inequality follows by \cref{eq:robust_coinflip:ybound,eq:robust_coinflip:notrobustbound}.

    \paragraph{Number of corruptions.}
    It is left to argue that \Ac does not perform too many corruptions. We calculate the \emph{expected} number of corruptions, and bound the \emph{actual} number of corruptions using Markov's inequality. We introduce several additional notations. Let $\SmallParties$ and $\LargeParties$ be the (random) sets of small-jumps and large-jump parties (that participate in the execution) \wrt $\CorruptedMsg$, respectively. Let $\SmallJumps \eqdef \set{k \in [\lenproto] \suchthat \Speaker_k(\CorruptedMsg) \in \SmallParties}$ be the set of small jumps, and let $\LargeJumps \eqdef \set{k \in [\lenproto] \suchthat \Speaker_k(\CorruptedMsg) \in \LargeParties}$ be the set of large jumps. Note that all of the above random sets are determined by $\CorruptedMsg$. We first notice that since a small-jumps party is corrupted with probability $\nfrac{\lambda^2}{\sqrt{\numparties}}$, it holds that
    \begin{align}
        \expecc{\size{\SmallParties \cap \CorruptedParties}} = \nfrac{\lambda^2}{\sqrt{\numparties}} \cdot \expecc{\size{\SmallParties}}
    \end{align}

    In addition, the definition of $\numparties$-normal protocols stipulates that for any transcript of $\Pi$, there are at most $\numparties$ unfulfilled parties. Since each fulfilled (not unfulfilled) party contributes at least $\nfrac{1}{\lambda\numparties}$ to the sum of variances, which is small by \cref{clm:smallvar}, we deduce that
    \begin{align}
        \expecc{\size{\SmallParties}} \le 3\numparties
    \end{align}

    Combining the above two observations yields the following bound on the number of corrupted small-jump parties:
    \begin{align*}
        \expecc{\size{\SmallParties \cap \CorruptedParties}} \le \nfrac{\lambda^2}{\sqrt{\numparties}} \cdot 3\numparties = 3\lambda^2\sqrt{\numparties}
    \end{align*}

    As for large-jump parties, for any $k \in [\lenproto]$, partial transcript $t = \msgg{< k}$ and large-jump party \Party sending the \tth{k} message, \Party is corrupted with probability $\lambda^2 \cdot \sqrt{\varian{Y_k \cond \CorruptedMsgg{< k} = \msgg{< k}}} \le \lambda^2 \cdot \sqrt{\lambda n} \cdot \varian{Y_k \cond \CorruptedMsgg{< k} = \msgg{< k}}$. Thus, we have that
    \begin{align}
        \lefteqn{\expecc{\size{\LargeParties \cap \CorruptedParties}}} \\
         & = \expecc{\textstyle \sum_{i \in \LargeJumps} \lambda^2 \cdot\sqrt{\ycondvar{i}}} \le \expecc{\textstyle \sum_{i \in \LargeJumps} \lambda^2 \cdot \sqrt{\lambda\numparties} \cdot \ycondvar{i}}\nonumber \\
         & \le \lambda^3 \cdot \sqrt{\numparties} \cdot \expecc{\textstyle \sum_{i \in \GoodJumps} \ycondvar{i}} \le \lambda^3 \cdot \sqrt{\numparties} \cdot \nfrac{2}{\lambda} = 2\lambda^2 \sqrt{\numparties}.\nonumber
    \end{align}
    The first inequality follows by the definition of a large jump, \ie $\varian{Y_k \cond \CorruptedMsgg{< k} = \msgg{< k}} \ge \nfrac{1}{\lambda n}$, and last inequality by \cref{clm:smallvar}. Therefore, the expected amount of corruptions is at most $5\lambda^2\sqrt{\numparties}$. Hence, by Markov's inequality, with probability at least $1 - \nfrac{\eps}{2}$ the amount of corruptions made by \Ac is at most $\nfrac{10\lambda^3\sqrt{\numparties}}{\eps} < \rcorruptions$. 
    
    \paragraph{Putting it together.}
    Consider the adversary $\Ac'$ that acts just as \Ac, but aborts (letting players continue the execution honestly) once the amount of corruptions surpasses $\rcorruptions = \amtcorruptions$. It holds that
    \begin{align*}
         & \expecc{\aprot{\Pi}{\Ac'}} = \pr{\aprot{\Pi}{\Ac'} = 1} \ge \pr{\aprot{\Pi}{\Ac} = 1 \land \size{\CorruptedParties} < \rcorruptions} \\
         & \ge \pr{\aprot{\Pi}{\Ac} = 1} - \pr{\size{\CorruptedParties} \ge \rcorruptions} \ge 1 - \nfrac{\eps}{2} - \nfrac{\eps}{2} = 1 - \eps,
    \end{align*}
    which concludes the proof of the theorem.
\end{proof}

\def\corruptedevent{C}

\subsubsection[Coupling \texorpdfstring{$X_i$}{Xi} and \texorpdfstring{$\condMsgg{i}$}{\condMsg i}]{Coupling $X_i$ and $\condMsgg{i}$, Proving \cref{clm:biased_pairing}} \label{sec:NormalRobustCF:Coupling}

\begin{claim}[Restatement of \cref{clm:biased_pairing}]
	\claimbiasedcoupling
\end{claim}
\begin{proof}
	Fix $i\in [\lenproto]$ and denote $\Party = \Speaker_i(\CorruptedMsg)$. 
	
	Let $\corruptedevent$ be the event $\braces{\Party \in \CorruptedParties}$.
	Let $\corruptedevent_L$ be the event $\braces{\corruptedevent \land \Party \in \LargeParties}$. 
	Also let $\corruptedevent_S$ be the event ${\ubraces{(\corruptedevent \land \Party \in \SmallParties) \land (\pr{\corruptedevent \cond \CorruptedMsgg{<i} = \msgg{<i}} \le \nfrac{16\lambda^2}{\sqrt{\numparties}})}}$, \ie $\Party$ is small-jump corrupted party and  \Ac instructs  it to alter the current message  (see \stepref{alg:adversary:samplehonestly} of \cref{alg:adversary_robust_normal}). 
	Finally, define the following random variable (determined by $\CorruptedMsgg{< i}$ and $\ind{\corruptedevent}$): 
	\[\alpha = \begin{cases}
		\nfrac{1}{\ycondvar{i}} & \text{if } \ind{\corruptedevent_L} = 1 \\
		\sqrt{n} & \text{if } \ind{\corruptedevent_S} = 1 \\
		0 & \text{otherwise}
	\end{cases}\]

	Now consider the (random) distribution $(A, B)$ guaranteed by \cref{claim:properties_biased}(4) \wrt $P = \condMsgg{i}$, $f = \diff_i$ and $\alpha$.\footnote{($\diff_i(\cdot) = \protincc{\Pi}{\msgg{< i}, \cdot}$)}
	It is easy to verify that by construction (\ie \cref{alg:adversary_robust_normal}): conditioned on $\CorruptedMsgg{< i}$ and $\ind{\corruptedevent}$, $B$ distributes like $\CorruptedMsgg{i}$.

	Now, conditioned on on $\CorruptedMsgg{\le i}$ and $\ind{\corruptedevent}$ we sample $\tMsgg{i} \getsr \restr{A}{B = \CorruptedMsgg{i}}$, independently of $\tMsgg{< i}$.
	
	Finally, we set $Y_i = \diff_i(\tMsgg{i})$, and from the previous observation it immediately follows that $Y_i$ is distributed like $\bracketss{\diff_i}{\restr{\HonestMsgg{\le i}}{\HonestMsgg{< i} = \CorruptedMsgg{< i}}}$.

	It is clear that conditioned on $\CorruptedMsgg{< i}$, $\tMsgg{i}$ is distributed like $Q_i$. In addition, it is independent of $\ind{\corruptedevent}$ because it is distributed the same no matter the value of $\ind{\corruptedevent}$---it is even distributed the same conditioned on $\ind{\corruptedevent}, \ind{\corruptedevent_S}, \ind{\corruptedevent_L}$. And so the same is true for $Y_i$. As for the independence from $Y_{< i}$, it follows immediately by the independence from $\tMsgg{< i}$.
	
	All that is left to show is that $X_i \ge Y_i$.
	\[
		Y_i = \diff_i(\tMsgg{i}) \equiv \diff_i(\restr{A}{B = \CorruptedMsgg{i}}) \le \diff_i(\CorruptedMsgg{i}) = X_i
	\]
	where the inequality follows by the property $f(B) \ge f(A)$ guaranteed by \cref{claim:properties_biased}(4).
\end{proof}

\subsubsection[Bounding \texorpdfstring{$Y$}{Y}'s Conditional Variance]{Bounding $Y$'s Conditional Variance, Proving \cref{clm:smallvar}}\label{sec:NormalRobustCF:1}

\begin{claim}[Restatement of \cref{clm:smallvar}]
	\claimsmallvar
\end{claim}
\begin{proof}
	Immediately follows by \cref{claim:smallvar:largejumps,claim:smallvar:smalljumps}, given below. \cref{claim:smallvar:largejumps} states that $\expecc{\sum_{i} \ycondvar{i}} \le \nfrac{1}{\lambda}$ when $i$ ranges over $\LargeJumps$, and \cref{claim:smallvar:smalljumps} state the same when $i$ ranges over $\SmallJumps$. Hence, \claimsmallvar
\end{proof}

In the following, we use a conditional variant of the biased distribution.
\begin{definition} [Conditional variant of $\Biased$]
	Let $\cY, \cZ$ and $\eta$ be jointly distributed random variables, and let $f\colon \supp{\cY} \mapsto \reals$ be a function, such that (1) $\eta$ is determined by $\cZ$; (2) $\bracketss{f}{\cY} \ge -\nfrac{1}{\eta}$; (3) $\expecc{\bracketss{f}{\cY} \cond \cZ} = 0$. 
	Define the random variable $\cbiased{\eta}{f}{\cY}{\cZ}$, jointly distributed with $\cZ$, by sampling $\cbiased{\eta}{f}{\cY}{\cZ} \getsr \biased{\eta}{f}{\restr{\cY}{\cZ = \cZ}}$.\footnote{
		Here $\restr{\cY}{\cZ = \cZ}$ means we consider the distribution induced by conditioning $\cY$ on the current value of $\cZ$.
	}
\end{definition}

We now move to proving \cref{claim:smallvar:largejumps,claim:smallvar:smalljumps}. 

\paragraph{Large jumps.}
\newcommand{\jumplargeevent}[1]{L_{#1}}
\begin{claim} \label{claim:smallvar:largejumps}
	$\expecc{\sum_{i \in \LargeJumps} \ycondvar{i}} < \nfrac{1}{\lambda}$.
\end{claim}
\begin{proof}
	Let $\jumplargeevent{k}$ be the event $\braces{k \in \LargeJumps}$. Note that $\ind{\jumplargeevent{k}}$ is determined by $\CorruptedMsgg{< k}$. Compute,
	\begin{align}
		& \ind{\jumplargeevent{k}} \cdot \expecc{X_k \cond \CorruptedMsgg{< k}} \label{eq:largejumps} \\
		& = \ind{\jumplargeevent{k}} \cdot \left( \lambda^2 \sqrt{\ycondvar{k}} \cdot \expecc{\cbiased{\nfrac{1}{\sqrt{\ycondvar{k}}}}{}{Y_k}{\CorruptedMsgg{< k}} \cond \CorruptedMsgg{< k}} \label{eq:largejumps:xkcondvar:2} \right. \\
		& \left. \phantom{=====} + \brackets{1 - \lambda^2 \sqrt{\ycondvar{k}}} \cdot \expecc{Y_k \cond \CorruptedMsgg{< k}} \right) \nonumber \\
		& = \ind{\jumplargeevent{k}} \cdot \lambda^2 \sqrt{\ycondvar{k}} \cdot \nfrac{1}{\sqrt{\ycondvar{k}}} \cdot \ycondvar{k} + 0 \label{eq:largejumps:xkcondvar:3} \\
		& = \ind{\jumplargeevent{k}} \cdot \lambda^2 \cdot \ycondvar{k}.\nonumber
	\end{align}
	\Eqref{eq:largejumps:xkcondvar:2} follows by construction (see \stepref{alg:largeparty:corruption} and \stepref{alg:largeparty:alter} of \cref{alg:adversary_robust_normal}) and \cref{clm:biased_pairing}, and \Eqref{eq:largejumps:xkcondvar:3} follows by \cref{claim:properties_biased}(1).
	Hence,
	\begin{align}
		&\textstyle \expecc{\sum_{i=1}^\lenproto X_i} = \sum_{i=1}^\lenproto \expecc{X_i} = \sum_{i=1}^\lenproto \expecc{\expecc{X_i \cond \CorruptedMsgg{< i}}} = \expecc{\sum_{i=1}^\lenproto \expecc{X_i \cond \CorruptedMsgg{< i}}} \label{eq:largejumps:totalexpec:1} \\
		&\ge \expecc{\sum_{i=1}^\lenproto \ind{\jumplargeevent{k}} \cdot \expecc{X_i \cond \CorruptedMsgg{< i}}} \label{eq:largejumps:totalexpec:2} \\
		&= \expecc{\sum_{i=1}^\lenproto \ind{\jumplargeevent{k}} \cdot \lambda^2 \cdot \ycondvar{i}} \label{eq:largejumps:totalexpec:3} \\
		&= \lambda^2 \cdot \expecc{\textstyle \sum_{i \in \LargeJumps} \ycondvar{i}}. \nonumber
	\end{align}
	\Eqref{eq:largejumps:totalexpec:1} follows by the law of total expectation (\cref{fact:total_expectation}), \Inqref{eq:largejumps:totalexpec:2} holds since $\ind{\jumplargeevent{i}}$ is determined by $\CorruptedMsgg{< i}$ and $\expecc{X_i \cond \CorruptedMsgg{< i}} \ge 0$, and \Eqref{eq:largejumps:totalexpec:3} follows by \cref{eq:largejumps}. 

	Thus,
	\[
		\textstyle \lambda^2 \cdot \expecc{\textstyle \sum_{i \in \LargeJumps} \ycondvar{i}} \le \expecc{\sum_{i=1}^\lenproto X_i} \le 1
	\]
	and so $\expecc{\textstyle \sum_{i \in \LargeJumps} \ycondvar{i}} \le \nfrac{1}{\lambda^2} < \nfrac{1}{\lambda}$.
\end{proof}

\def\indiciesparty{\cI}
\paragraph{Small jumps.}
\begin{claim}\label{claim:smallvar:smalljumps}
	$\expecc{\sum_{i \in \SmallJumps} \ycondvar{i}} < \nfrac{1}{\lambda}$.
\end{claim}

For some party \Party denote by $\indiciesparty_\Party = \bracketss{\Idx_\Party}{\CorruptedMsg}$, \ie the indices in which \Party is the speaker. Also, consider the following definition for measuring the contribution of a small-jumps party.

\begin{definition}[Contributional parties]\label{def:Contributional}
	A party \Party is said to be {\sf contributional} if
	\begin{align*} 
		\textstyle \pr{\sum_{i \in \indiciesparty_\Party} \ycondvar{i} > \frac{1}{8\lambda \numparties} \, \mid \Party \in \SmallParties} \ge \nfrac{1}{8}.
	\end{align*}
	Let $\ContribParties$ be the set of contributional parties, and let  $\SmallContribParties \eqdef \SmallParties \cap \ContribParties$.
\end{definition}
Note that being a contributional party is determined by the \emph{protocol}, \ie $\ContribParties$ does not depend on the transcript. In contrast, $\SmallContribParties$, \ie the set of contributional small-jumps parties, does depend on the transcript---because $\SmallParties$ is transcript-dependent. We make use of the following claim.

\begin{claim} \label{claim:contributional_party}
	For any contributional party \Party, it holds that
	\[
		\textstyle \expecc{\sum_{i \in \indiciesparty_\Party} X_i \cond \Party \in \CorruptedParties \cap \SmallContribParties} \ge \nfrac{1}{(256\lambda\cdot \sqrt{\numparties})}.
	\]
\end{claim}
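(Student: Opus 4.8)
The plan is to work entirely inside the world conditioned on $D \eqdef \braces{\partition \in \CorruptedParties} \cap \braces{\partition \in \SmallContribParties}$. Since $\partition$ is assumed contributional, $\braces{\partition \in \SmallContribParties}$ is just the event that $\partition$ is a small-jumps party in $\CorruptedMsgg{}$; and because by normality a large-jump party sends a single message, whether $\partition$ is small- or large-jumps is already decided by the conditional variance of $\partition$'s first message --- a function of the transcript preceding it --- so conditioning on $D$ amounts to conditioning on ``$\partition$ is a small-jumps party'' and then on the fresh, independent, probability-$\nfrac{\lambda^2}{\sqrt\numparties}$ corruption coin of \cref{alg:smallparty:corruption} landing corrupt. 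Let $\tau$ be the first index in $\Idx_\partition(\CorruptedMsgg{})$ at which the attacker reverts to honest sampling (\cref{alg:adversary:samplehonestly}), \ie the first round with $p_i \eqdef \pr{\partition \in \CorruptedParties \cond \CorruptedMsgg{<i}} > \nfrac{16\lambda^2}{\sqrt\numparties}$. For $i \in \Idx_\partition(\CorruptedMsgg{})$ with $i < \tau$ the $\ith$ message is drawn from $\biased{\sqrt\numparties}{\diff_i}{Q_i}$, so $\expecc{X_i \cond \CorruptedMsgg{<i}} = \sqrt\numparties\cdot\varian{\bracketss{\diff_i}{Q_i}}$ by \cref{claim:properties_biased}(1) (the robustness guarantee makes $\biased{\sqrt\numparties}{\diff_i}{Q_i}$ legal), whereas for $i \ge \tau$ the message is honest and $\expecc{X_i \cond \CorruptedMsgg{<i}} = 0$. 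As $D$ is resolved before any round of $\partition$, the tower property gives
\begin{align*}
	\expecc{\sum\nolimits_{i \in \Idx_\partition(\CorruptedMsgg{})} X_i \cond D} \;=\; \sqrt\numparties \cdot \expecc{\sum\nolimits_{i \in \Idx_\partition(\CorruptedMsgg{}),\, i < \tau} \varian{\bracketss{\diff_i}{Q_i}} \cond D},
\end{align*}
so it suffices to show the right-hand sum of conditional variances has expectation $\ge \nfrac{1}{256\lambda\numparties}$.

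I would get there in two steps. First, transfer the contributional hypothesis into the $D$-world: conditioned on ``$\partition$ is a small-jumps party'', $\sum_{i \in \Idx_\partition(\CorruptedMsgg{})}\varian{\bracketss{\diff_i}{Q_i}}$ exceeds $\nfrac{1}{8\lambda\numparties}$ with probability $\ge\nfrac{1}{8}$, hence has expectation $\ge\nfrac{1}{64\lambda\numparties}$; further conditioning on the corruption coin moves this by $O(\nfrac{1}{\lambda\numparties})\cdot o(1)$ (the coin has probability $o(1)$ and, by normality, the sum of variances never exceeds $\nfrac{2}{\lambda\numparties}$), and since the $\partition$-corrupted and $\partition$-honest transcript laws differ only through $\partition$'s biased messages --- whose total KL cost is $\le 2\numparties\sum_{i \in \Idx_\partition(\CorruptedMsgg{})}\varian{\bracketss{\diff_i}{Q_i}} \le \nfrac{4}{\lambda}$ by \cref{claim:properties_biased}(2), normality, and Pinsker --- these laws are $o(1)$-close in statistical distance, perturbing the expectation of the (bounded) sum of variances by a further $o(\nfrac{1}{\lambda\numparties})$; so $\expecc{\sum_{i \in \Idx_\partition(\CorruptedMsgg{})}\varian{\bracketss{\diff_i}{Q_i}} \cond D} \ge \nfrac{1}{128\lambda\numparties}$ for large $\numparties$. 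Second, bound the variance wasted on rounds $i \ge \tau$: it is at most $\nfrac{2}{\lambda\numparties}$ and is nonzero only on $\braces{\tau < \infty}$, and I would show $\pr{\tau < \infty \cond D}$ is small by exploiting that $p_i$ is a Doob martingale which is frozen once it crosses its threshold (a reverted-to-honest message leaks nothing further about the corruption) and that the likelihood ratio driving $p_i$ changes, at each biased message of a small-jumps party, by a factor $1 + \sqrt\numparties\,X_i$ which is close to $1$ except with probability $O(\numparties\,\varian{\bracketss{\diff_i}{Q_i}})$ --- a ``spike'' --- whose total over $\partition$'s rounds is $O(\numparties\sum_i\varian{\bracketss{\diff_i}{Q_i}}) = O(\nfrac{1}{\lambda}) = o(1)$. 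In any regime where $\braces{\tau < \infty}$ is not rare one instead invokes the chain-rule bound on the KL divergence between the $D$-conditioned transcript law and the $\partition$-honest law, namely $\le 2\numparties\,\expecc{\sum_{i < \tau}\varian{\bracketss{\diff_i}{Q_i}} \cond D}$, together with the fact that this divergence is bounded below whenever $\partition$'s corruption is frequently detected, to conclude directly. Either way $\expecc{\sum_{i < \tau}\varian{\bracketss{\diff_i}{Q_i}} \cond D} \ge \nfrac{1}{256\lambda\numparties}$, and multiplying by $\sqrt\numparties$ yields the claim.

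The step I expect to be the main obstacle is this wasted-variance bound, where the conditioning on $\braces{\partition \in \CorruptedParties}$ really bites: it turns the likelihood ratio between the $\partition$-corrupted and $\partition$-honest transcript laws from a martingale into a submartingale, so no direct maximal inequality keeps it under the threshold, and one must lean on the gentleness of $\biased{\sqrt\numparties}{\diff_i}{Q_i}$ for small-jumps parties --- the tiny per-message variance making the ratio's jumps both rare and mild, and its logarithmic drift small (controlled by the same KL quantity used in the transfer step) --- to argue that the honest-sampling fallback, when it is reached at all, has already certified through the chain rule that enough of $\partition$'s variance was realized before it.
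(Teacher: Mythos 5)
You reach the crux and stall exactly there. Your accounting is ``harvested $=$ total minus wasted'': your first step gives $\expecc{\sum_{i\in I_\partition}\varian{Y_i\cond\CorruptedMsgg{<i}}\cond D}\ge\nfrac{1}{128\lambda\numparties}$, and you must then show that the variance falling at or after the fallback time $\tau$ is at most $\nfrac{1}{256\lambda\numparties}$. Since a small-jumps party's total conditional variance can be as large as $\nfrac{2}{\lambda\numparties}$, this forces $\pr{\tau<\infty\cond D}\le\nfrac{1}{512}$ --- a near-certainty requirement that the posterior $\pr{\partition\in\CorruptedParties\cond\CorruptedMsgg{<k}}$ never crosses $\nfrac{16\lambda^2}{\sqrt\numparties}$, \emph{conditioned on corruption}. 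You correctly flag this as the main obstacle (under that conditioning the likelihood ratio is only a submartingale), but neither of your two suggested fixes is carried out, and the second as stated does not suffice: the chain-rule upper bound of $O(\nfrac{1}{\lambda})$ on the KL between the $D$-conditioned and honest laws, combined with ``frequent detection implies divergence'', only rules out detection probabilities noticeably above the honest-law level, and the honest-law level obtainable from Markov/Doob on the mean-one likelihood ratio at threshold $16$ is about $\nfrac{1}{16}$ --- far above $\nfrac{1}{512}$. Rescuing your route needs genuinely sharper machinery, e.g.\ a second-moment bound $\Ex[(LR-1)^2]\le e^{2/\lambda}-1=O(\nfrac1\lambda)$ for the stopped likelihood ratio under the honest law, Doob's maximal inequality for the submartingale $(LR_k-1)^2$ giving $\pr{\sup_k LR_k\ge 16}=O(\nfrac1\lambda)$ there, and then the Pinsker transfer to the corrupted side; that works, but you have not supplied it. (Also, your step-one parenthetical ``the coin has probability $o(1)$, so conditioning moves the expectation by $O(\nfrac{1}{\lambda\numparties})\cdot o(1)$'' is not a valid inference on its own --- conditioning on a rare event can shift a bounded expectation by a constant fraction of the bound; it is only your subsequent KL/Pinsker argument that makes that step sound.)

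The paper never needs such a strong bound because it does not decompose into total-minus-wasted. It works with the \emph{joint} event $H\land L$ ($H$: the fallback threshold is never reached for $\partition$; $L$: $\partition$'s total conditional variance exceeds $\nfrac{1}{8\lambda\numparties}$), proves $\pr{\neg H\cond S}\le\nfrac{1}{16}$ by a one-line Doob maximal inequality applied to the martingale $\pr{C\cond\CorruptedMsgg{\le k}}$ in the space conditioned only on $S=\braces{\partition\in\SmallParties}$ (where it genuinely is a martingale with mean $\nfrac{\lambda^2}{\sqrt\numparties}$), intersects with $\pr{L\cond S}\ge\nfrac{1}{8}$, and transfers $H\land L$ to the $S\land C$ space by the same KL/Pinsker/data-processing step you use, obtaining $\pr{H\land L\cond S\land C}\ge\nfrac{1}{32}$. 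On that event all of $\partition$'s messages are biased and at least $\nfrac{1}{8\lambda\numparties}$ of variance is harvested with the $\sqrt\numparties$ gain, while every other round contributes nonnegatively --- so a $\nfrac{1}{16}$-level control of the fallback suffices. Until you either adopt that joint-event structure or actually prove $\pr{\tau<\infty\cond D}\le\nfrac{1}{512}$, your argument has a genuine gap at its decisive step.
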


We prove \cref{claim:contributional_party} below, but first use it for proving \cref{claim:smallvar:smalljumps}.

\begin{proof} [Proof of \cref{claim:smallvar:smalljumps}]
We start by using \cref{claim:contributional_party} to lower-bound $\expecc{\sum_{i=1}^\lenproto X_i}$.
\begin{align}
	\lefteqn{\textstyle \expecc{\sum_{i=1}^\lenproto X_i} = \sum_{i=1}^\lenproto \expecc{X_i} = \sum_{i=1}^\lenproto \expecc{\expecc{X_i \cond \CorruptedMsgg{< i}}} = \expecc{\sum_{i=1}^\lenproto \expecc{X_i \cond \CorruptedMsgg{< i}}}} \label{claim:smalljumps:expec_smallcontrib_relation:1} \\
	&\ge \expecc{\textstyle \sum_{\Party \in \SmallContribParties} \sum_{i \in \indiciesparty_\Party} \expecc{X_i \cond \CorruptedMsgg{< i}}} \label{claim:smalljumps:expec_smallcontrib_relation:2} \\
	& = \expecc{\textstyle \sum_{\Party \in \SmallContribParties} \sum_{i \in \indiciesparty_\Party} X_i} \label{claim:smalljumps:expec_smallcontrib_relation:3} \\
	& = \textstyle \mathbb{E} \big[ \sum_{\Party \in \ContribParties} \expecc{\sum_{i \in \indiciesparty_\Party} X_i \cond \Party \in \CorruptedParties \cap \SmallContribParties}  \label{claim:smalljumps:expec_smallcontrib_relation:4} \\
	& \phantom{\textstyle .. \sum_{\Party \in \ContribParties} ==}\cdot \pr{\Party \in \CorruptedParties \cap \SmallContribParties} \big] \nonumber \\
	& \ge  \expecc{\textstyle \sum_{\Party \in \ContribParties} \nfrac{1}{256\lambda\sqrt{\numparties}} \cdot \pr{\Party \in \CorruptedParties \cap \SmallContribParties}} \label{claim:smalljumps:expec_smallcontrib_relation:5} \\
	& = \expecc{\textstyle \sum_{\Party \in \ContribParties} \nfrac{1}{256\lambda\sqrt{\numparties}} \cdot \brackets{\pr{\Party \in \SmallContribParties} \cdot \nfrac{\lambda^2}{\sqrt{\numparties}}}} \label{claim:smalljumps:expec_smallcontrib_relation:6} \\
	&= \nfrac{\lambda}{256\numparties} \cdot \expecc{\size{\SmallContribParties}}. \nonumber
\end{align}
\Inqref{claim:smalljumps:expec_smallcontrib_relation:1} follows by the law of total expectation (\cref{fact:total_expectation}). \Inqref{claim:smalljumps:expec_smallcontrib_relation:2} holds since $\expecc{X_i \cond \CorruptedMsgg{< i}} \ge 0$. \Eqref{claim:smalljumps:expec_smallcontrib_relation:2} follows by the law of total expectation, since $\braces{\Speaker_i(\CorruptedMsg) \in \SmallContribParties}$ is determined by $\CorruptedMsgg{< i}$. \Eqref{claim:smalljumps:expec_smallcontrib_relation:4} also follows by the law of total expectation, combined with the fact that $\expecc{\sum_{i \in \indiciesparty_\Party} X_i \cond \Party \notin \CorruptedParties \land \Party \in \SmallContribParties} = 0$. \Inqref{claim:smalljumps:expec_smallcontrib_relation:5} by \cref{claim:contributional_party}. \Eqref{claim:smalljumps:expec_smallcontrib_relation:6} by construction (see \stepref{alg:smallparty:corruption} of \cref{alg:adversary_robust_normal}).

Moving over from contributional parties, for any \emph{non-contributional} party \Party, it holds that
\[
	\expecc{\textstyle \sum_{i \in \indiciesparty_\Party} \ycondvar{i} \cond \Party \in \SmallParties} < \nfrac{1}{8} \cdot \nfrac{2}{\lambda \numparties} + \nfrac{1}{8\lambda \numparties} < \nfrac{3}{8\lambda \numparties}. \nonumber
\]
The first inequality follows by definition of a small jumps party and that of a non-contributional party, respectively. Finally, by the above inequality,
\begin{align}\label{ineq:SmallNonCon} \textstyle
	\expecc{\sum_{\Party\in \SmallParties \setminus \ContribParties} \sum_{i\in \indiciesparty_\Party} \ycondvar{i}}\le \expecc{\size{\SmallParties}} \cdot \nfrac{3}{8\lambda \numparties}
\end{align}

Denote $\gamma = \expecc{\sum_{i \in \SmallJumps} \ycondvar{i}}$. Each fulfilled (not unfulfilled) party contributes at least $\nfrac{1}{\lambda\numparties}$ to the sum of conditional variances, and so there are at most $\gamma \lambda\numparties$ such parties. Recalling that there are at most $n$ unfulfilled parties, we deduce that 
\begin{align}\label{ineq:SizeSmallParties}
 	\expecc{\size{\SmallParties}} \le \numparties + \numparties \gamma \lambda
\end{align}
Assume towards a contradiction that $\gamma \ge \nfrac{1}{\lambda}$. By \cref{ineq:SizeSmallParties}, $\expecc{\size{\SmallParties}} \le 2\numparties \gamma \lambda$, and thus by \cref{ineq:SmallNonCon}
\begin{align}\label{ineq:NonContribtParties} \textstyle
	\expecc{\sum_{\Party\in \SmallParties \setminus \ContribParties} \sum_{i\in \indiciesparty_\Party} \ycondvar{i}}\le 3\gamma/4
\end{align}
We conclude that $\expecc{\sum_{\Party\in \SmallContribParties} \sum_{i\in \indiciesparty_\Party} \ycondvar{i}}\ge \gamma/4$, and since by definition, every small jumps party contributes at most $\nfrac{2}{\lambda n}$ to the sum of conditional variances, we deduce that 
\begin{align}\label{ineq:SizeSmallContribParties}
\expecc{\size{\SmallContribParties}}\ge \gamma \lambda n/8
\end{align}
Combining \cref{claim:smalljumps:expec_smallcontrib_relation:1,ineq:SizeSmallContribParties}, yields that
\begin{align} 
	   \expecc{\sum_{i=1}^\lenproto X_i} \ge \nfrac{\gamma\lambda^2}{2048}
\end{align}
It follows that $\expecc{S_\lenproto} = \expecc{S_0 + \sum_{i=1}^\lenproto X_i} \ge \eps + \nfrac{\gamma\lambda^2}{2048}$, which is larger than $1$ for sufficiently large $\numparties$, yielding a contradiction.  Hence, $\expecc{\sum_{i \in \SmallJumps} \ycondvar{i}} = \gamma < \nfrac{1}{\lambda}$, concluding the proof. 
\end{proof}

\paragraph{Proving \cref{claim:contributional_party}.}
\begin{proof} [Proof of \cref{claim:contributional_party}]
	Fix a contributional party \Party, and consider the following events (jointly distributed with $\CorruptedMsg$):
	\begin{itemize}
		\item $C = \braces{\Party \in \CorruptedParties}$.
		
		\item $S = \braces{\Party \in \SmallParties}$. 
		
		\item $L = \braces{\sum_{i \in \indiciesparty_\Party} \ycondvar{i} > \nicefrac{1}{8\lambda\numparties}}$, \ie, \Party has large conditional variance.
		
		\item $H = \braces{\forall k \in \indiciesparty_\Party \suchthat \pr{\Party \in \CorruptedParties \cond \CorruptedMsgg{< k}} < 16 \cdot \nfrac{\lambda^2}{\sqrt{\numparties}}}$, \ie \stepref{alg:adversary:samplehonestly} of \cref{alg:adversary_robust_normal} never happens for \Party.
	\end{itemize}
	We start by proving that $\pr{H \land L \cond S}$ is large, and then use a KL-divergence argument to deduce that $\pr{H \land L \cond S \land C}$ is large, \ie \Party encounters large conditional variance even when it is a corrupted small-jumps party. This will imply that the change in expectation \Party induces (when a corrupted small-jumps party) is large.
	
	To prove that $\pr{H \land L \cond S}$ is large, we move to the conditional probability space where $S$ occurs (\ie \Party participates in the protocol as a small-jumps party). Consider the martingale $C_0, \ldots, C_\lenproto$ defined by $C_k \eqdef \expecc{\ind{C} \cond \CorruptedMsgg{\le k}}$ (that is, $C_k$ is the projection of the event $C$ on the information held by $\CorruptedMsgg{\le k}$). Under the conditioning \Party is a small-jumps party, therefore, the adversary corrupts \Party with probability $\nfrac{\lambda^2}{\sqrt{\numparties}}$, \ie $\expecc{\ind{C}} = \pr{C} = \nfrac{\lambda^2}{\sqrt{\numparties}}$. Thus by Doob's maximal inequality (see \cref{fact:doob_maximal_inequality}), it holds that
	\begin{equation}
		\pr{\neg H} = \pr{\sup \braces{C_0, \ldots, C_\ell} \ge 16 \cdot \nfrac{\lambda^2}{\sqrt{\numparties}}} \le \nfrac{1}{16}
	\end{equation}
	
	Back to the non-conditional probability space, we deduce that 
	\begin{align} \label{eq:claim:contrib:1}
		&\pr{L \land H \cond S} \ge \pr{L \cond S} - \pr{\neg H \cond S} \ge \nfrac{1}{8} - \nfrac{1}{16} = \nfrac{1}{16}
	\end{align}
	where $\pr{L \cond S} \ge \nfrac{1}{8}$ holds since \Party is contributional. We next bound $\kld{\restr{\CorruptedMsg}{S \land C}}{\restr{\CorruptedMsg}{S}}$. Letting $\tMsg$ denote the distribution $\restr{\CorruptedMsg}{S \land C}$ compute,  
	\begin{align}\label{eq:claim:contrib:2}
		\lefteqn{ \kld{\restr{\CorruptedMsg}{S \land C}}{\restr{\CorruptedMsg}{S}} = \kld{\tMsg}{\restr{\CorruptedMsg}{S}}} \\
		& = \sum_{i=1}^\lenproto \texpec{\msg \getsr \tMsg}{\kld{\restr{\CorruptedMsgg{i}}{\CorruptedMsgg{< i} = \msgg{< i} \land S \land C}}{\restr{\CorruptedMsgg{i}}{\CorruptedMsgg{< i} = \msgg{< i} \land S}}} \label{eq:claim:contrib:2:1} \\
		& = \texpec{\msg \getsr \tMsg}{\textstyle \sum_{i \in \Idx_\Party(\msg)} \kld{\restr{\CorruptedMsgg{i}}{\CorruptedMsgg{< i} = \msgg{< i} \land C}}{\restr{\CorruptedMsgg{i}}{\CorruptedMsgg{< i} = \msgg{< i}}}} \label{eq:smalljumps:kl2} \\
		& \le \texpec{\msg \getsr \tMsg}{\textstyle \sum_{i \in \Idx_\Party(\msg)} \kld{\restr{\CorruptedMsgg{i}}{\CorruptedMsgg{< i} = \msgg{< i} \land C}}{\restr{\CorruptedMsgg{i}}{\CorruptedMsgg{< i} = \msgg{< i} \land \neg{C}}}} \label{eq:smalljumps:kl3} \\
		& \le \texpec{\msg \getsr \tMsg}{\textstyle \sum_{i \in \Idx_\Party(\msg)} \kld{\biased{\sqrt{\numparties}}{\protincc{\Pi}{\msgg{< i}, \; \cdot}}{\restr{\HonestMsgg{i}}{\HonestMsgg{< i} = \msgg{< i}}}}{\restr{\HonestMsgg{i}}{\HonestMsgg{< i} = \msgg{< i}}}} \label{eq:smalljumps:kl4} \\
		& \le \texpec{\msg \getsr \tMsg}{\textstyle \sum_{i \in \Idx_\Party(\msg)} 2\numparties \cdot \varian{\protincc{\Pi}{\HonestMsgg{\le i}} \cond \HonestMsgg{< i} = \msgg{< i}}} \label{eq:smalljumps:kl5} \\
		& = 2\numparties \cdot \texpec{\msg \getsr \tMsg}{\textstyle \sum_{i \in \Idx_\Party(\msg)} \varian{\protincc{\Pi}{\HonestMsgg{\le i}} \cond \HonestMsgg{< i} = \msgg{< i}}} \\
		& \le 2\numparties \cdot \expecc{\nfrac{2}{\lambda\numparties}} \le \nfrac{4}{\lambda}. \nonumber
	\end{align}

	\Eqref{eq:claim:contrib:2:1} follows by chain-rule of KL Divergence (see \cref{fact:kl_chain_rule}). \Eqref{eq:smalljumps:kl2} follows by the fact that the conditional distribution of messages not sent by \Party is not affected by conditioning on $C$, and we can drop the conditioning on $S$ since it is determined by $\msgg{< i}$. \Inqref{eq:smalljumps:kl3} follows by convexity of KL-Divergence (\cref{fact:kl_convex}) ($\CorruptedMsgg{i}$ is a convex combination of $\restr{\CorruptedMsgg{i}}{C}$ and $\restr{\CorruptedMsgg{i}}{\neg C}$). \Inqref{eq:smalljumps:kl4} follows by construction (see \stepref{alg:smallparty:alter} of \cref{alg:adversary_robust_normal}) and the convexity of KL-Divergence---the altered messages are a convex combination of honest messages and biased messages (caused by \stepref{alg:adversary:samplehonestly} of \cref{alg:adversary_robust_normal}). \Inqref{eq:smalljumps:kl5} follows from \cref{claim:properties_biased}(2), and the penultimate inequality holds since, by assumption, the protocol $\Pi$ is $\numparties$-normal.
	
	By \cref{eq:claim:contrib:2} and the Pinsker bound (see \cref{fact:pinsker_bound}), it holds that $\sdist{\restr{\CorruptedMsg}{S \land C}}{\restr{\CorruptedMsg}{S}} \le \nfrac{2}{\sqrt{\lambda}}$. Consequently, by \cref{eq:claim:contrib:1} and the data-processing inequality of statistical distance (\cref{fact:dataprocess}), it holds that (for sufficiently large $\numparties$)
	\begin{align}
		\pr{H \land L \cond S \land C} \ge \pr{H \land L \cond S} - \nfrac{2}{\sqrt\lambda} = \nfrac{1}{16} - \nfrac{2}{\sqrt\lambda} > \nfrac{1}{32} \label{eq:smalljumps:good_event_highprob}
	\end{align}
	
	In other words, even when \Party is a corrupted small-jumps party, it still encounters large conditional variance and biases all jumps it encounters. Therefore, all that is left to do is analyze the expectancy of \Party's increments under this conditioning. For $\msg \in \supp{\CorruptedMsg}$ let $\ind{H}(\msg)$ be \emph{the value} of $\ind{H}$ as determined by $\CorruptedMsg = \msg$. Compute, 
	\begin{align}
		{\textstyle \expecc{\sum_{i \in \indiciesparty_\Party} X_i \cond S \land C}} 
		& = \texpec{\msg \getsr \tMsg}{\sum_{i \in \Idx_\Party(\msg)} \expecc{X_i \cond \CorruptedMsgg{< i} = \msgg{< i} \land C}} \nonumber \\
		& \ge \texpec{\msg \getsr \tMsg}{\ind{H}(\msg) \cdot \sum_{i \in \Idx_\Party(\msg)} \expecc{\biased{\sqrt{\numparties}}{}{\restr{Y_i}{\CorruptedMsgg{< i} = \msgg{< i}}}}} \label{eq:smalljumps:2ex1} \\
		& = \expecc{\ind{H} \cdot \textstyle \sum_{i \in \indiciesparty_\Party} \sqrt{\numparties} \cdot \ycondvar{i} \cond S \land C}\label{eq:smalljumps:2ex1:5}\\
		& \ge \expecc{\ind{H} \cdot \sqrt{\numparties} \cdot \ind{L} \cdot \nfrac{1}{8\lambda\numparties} \cond S \land C} \label{eq:smalljumps:2ex2} \\
		& = \nfrac{1}{8\lambda\sqrt{\numparties}} \cdot \pr{H \land L \cond S \land C} \ge \nfrac{1}{8\lambda\sqrt{\numparties}} \cdot \nfrac{1}{32} = \nfrac{1}{256 \lambda \sqrt{\numparties}}. \nonumber
	\end{align}
	\Inqref{eq:smalljumps:2ex1} follows by the definition of $\Ac$ (see \stepref{alg:smallparty:alter} of \cref{alg:adversary_robust_normal}) and \cref{clm:biased_pairing} ($Y_i$ is independent of $C$ conditioned on $\CorruptedMsgg{< i}$). \Eqref{eq:smalljumps:2ex1:5} follows from \cref{claim:properties_biased}(1). \Inqref{eq:smalljumps:2ex2} follows by a point-wise inequality, and the last inequality follows by \cref{eq:smalljumps:good_event_highprob}.
\end{proof}
\subsubsection[Bounding KL-Divergence between Attacked and Honest Executions ]{Bounding KL-Divergence between Attacked and Honest Executions, Proving \cref{clm:smallkl}}\label{sec:NormalRobustCF:2}

\begin{claim} [Restatement of \cref{clm:smallkl}]
	 \claimsmallkl
\end{claim}

The core of the proof relies on \cref{claim:properties_biased}(3) that states that corrupting some party with probability $p$ and then biasing its message according to $\Biased_{\alpha}^f$, is equivalent to biasing this message according to $\Biased_{p\alpha}^f$. This fact yields the following observation:

\begin{claim} \label{claim:smallkl:single}
	For any $i \in [\lenproto]$ and $\msgg{< i} \in \Supp(\CorruptedMsgg{< i})$, it holds that 
	\begin{align*}
		\kld{\restr{\CorruptedMsgg{i}}{\CorruptedMsgg{< i} = \msgg{< i}}}{\restr{\HonestMsgg{i}}{\HonestMsgg{< i} = \msgg{< i}}} \le 16^3 \lambda^4 \cdot \varian{\protincc{\Pi}{\HonestMsgg{\le i}} \cond \HonestMsgg{< i} = \msgg{< i}}.
	\end{align*}
\end{claim}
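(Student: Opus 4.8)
The plan is to reduce the left-hand side to a single application of \cref{claim:properties_biased}, after identifying the conditional attacked message distribution as a $\Biased$ variant of the honest one. Fix $i\in[\lenproto]$ and $\msgg{<i}\in\Supp(\CorruptedMsgg{<i})$, and set $\partition\eqdef\party(\msgg{<i})$, $Q_i\eqdef\restr{\Msggg{i}{\Pi}}{\Msggg{<i}{\Pi}=\msgg{<i}}$, $\diff_i\eqdef\protincc{\Pi}{\msgg{<i},\cdot}$ and $v_i\eqdef\varian{\diff_i(Q_i)}$, exactly as in \cref{alg:adversary_normal}. Note that $\expecc{\diff_i(Q_i)}=0$ and $\abs{\diff_i}\le 1$ pointwise (a difference of two values in $[0,1]$), so $v_i\le 1$, and that the right-hand side of the claim equals $16^3\lambda^4 v_i$ (since $\varian{\protincc{\Pi}{\Msggg{\le i}{\Pi}}\cond\Msggg{<i}{\Pi}=\msgg{<i}}=v_i$). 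I would then carry out a three-way case analysis according to which branch of \cref{alg:adversary_normal} governs the $i$-th message.

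First, if $\partition=\BadParty$, or $\partition$ is a small-jumps party with $p\eqdef\pr{\partition\in\CorruptedParties\cond\CorruptedMsgg{<i}=\msgg{<i}}>\nfrac{16\lambda^2}{\sqrt{\numparties}}$, the adversary leaves the $i$-th message honestly distributed (it never intervenes for $\BadParty$; it reaches \stepref{alg:adversary:samplehonestly} when $\partition$ is a corrupted small-jumps party in that sub-case, and a non-corrupted party is honest in any case), so $\restr{\CorruptedMsgg{i}}{\CorruptedMsgg{<i}=\msgg{<i}}\equiv Q_i$ and the divergence is $0$. Second, if $v_i\ge\variancethreshold$ then $\numparties$-normality forces $\partition$ to send only this message, so it is corrupted with probability exactly $\lambda^2\sqrt{v_i}$ (\stepref{alg:largeparty:corruption}) and, when corrupted, plays $\biased{\sqrt{v_i}}{\diff_i}{Q_i}$ (\stepref{alg:largeparty:alter}); hence by \cref{claim:properties_biased}(3), $\restr{\CorruptedMsgg{i}}{\CorruptedMsgg{<i}=\msgg{<i}}\equiv\lambda^2\sqrt{v_i}\cdot\biased{\sqrt{v_i}}{\diff_i}{Q_i}+(1-\lambda^2\sqrt{v_i})\cdot Q_i\equiv\biased{\lambda^2 v_i}{\diff_i}{Q_i}$, and by \cref{claim:properties_biased}(2) the divergence is at most $2(\lambda^2 v_i)^2 v_i=2\lambda^4 v_i^3\le 2\lambda^4 v_i$. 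Third, if $v_i<\variancethreshold$ and $p\le\nfrac{16\lambda^2}{\sqrt{\numparties}}$, then whenever $\partition$ is corrupted it plays $\biased{\sqrt{\numparties}}{\diff_i}{Q_i}$ (\stepref{alg:smallparty:alter}), so again by \cref{claim:properties_biased}(3), $\restr{\CorruptedMsgg{i}}{\CorruptedMsgg{<i}=\msgg{<i}}\equiv p\cdot\biased{\sqrt{\numparties}}{\diff_i}{Q_i}+(1-p)\cdot Q_i\equiv\biased{p\sqrt{\numparties}}{\diff_i}{Q_i}$ with $p\sqrt{\numparties}\le 16\lambda^2$, and \cref{claim:properties_biased}(2) bounds the divergence by $2(16\lambda^2)^2 v_i=512\lambda^4 v_i$. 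Since $2\le 16^3$ and $512\le 16^3$, the claimed bound $16^3\lambda^4 v_i$ holds in all three cases.

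The bookkeeping that needs care is checking the hypothesis of \cref{claim:properties_biased} in the last two cases: it requires $\diff_i\ge-\nfrac{1}{2\alpha}$ on $\supp{Q_i}$ with $\alpha\in\{\lambda^2 v_i,\,p\sqrt{\numparties}\}$. This follows from $\numparties$-normality, which for $\partition\ne\BadParty$ forces $\diff_i>-\supportthreshold$ on $\supp{Q_i}$; since $\lambda=\nfrac{100}{\eps^5}$ is poly-logarithmic in $\numparties$ and $v_i\le 1$, for large enough $\numparties$ one has $\supportthreshold\le\nfrac{1}{2\lambda^2 v_i}$ and $\supportthreshold\le\nfrac{1}{32\lambda^2}\le\nfrac{1}{2p\sqrt{\numparties}}$, as needed (the degenerate sub-cases $v_i=0$ or $p=0$ are immediate, since then $\biased{\alpha}{\diff_i}{Q_i}\equiv Q_i$). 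I do not anticipate a genuine obstacle; the one point to get exactly right is reading off from \cref{alg:adversary_normal} that, conditioned on $\CorruptedMsgg{<i}=\msgg{<i}$, the $i$-th message is a $p$-mixture of a $\Biased$ distribution with the honest $Q_i$, where $p$ is the \emph{posterior} corruption probability of $\partition$ --- precisely the quantity tested against $\nfrac{16\lambda^2}{\sqrt{\numparties}}$ in \stepref{alg:smallparty:alter} --- and then using the mixing identity \cref{claim:properties_biased}(3) to collapse that mixture into a single $\Biased$ distribution before applying the KL bound \cref{claim:properties_biased}(2).
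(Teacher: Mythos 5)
Your proposal is correct and takes essentially the same route as the paper's proof: a case analysis on the type of the party sending the \ith message, collapsing the corruption-probability mixture into a single $\Biased$ distribution via \cref{claim:properties_biased}(3), and then applying the KL bound of \cref{claim:properties_biased}(2) (your extra zero-divergence case and the hypothesis checks on $\diff_i \ge -\supportthreshold$ are handled implicitly in the paper). The only cosmetic difference is that in the large-jump case you read the bias parameter $\sqrt{v_i}$ literally from \stepref{alg:largeparty:alter}, whereas the paper's proof (consistently with \cref{sec:Technique}) uses $\nfrac{1}{\sqrt{v_i}}$; either reading gives a bound of at most $2\lambda^4 v_i \le 16^3\lambda^4 v_i$, so the claim is unaffected.
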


\begin{proof} 
	Let $\Party \eqdef \party(\msgg{< i})$ (\ie be the party sending the \ith message).
	If \Party is $\BadParty$, we are done since its messages are unchanged (never corrupted). Otherwise, we separately deal with the case that \Party is a small-jumps party and a large-jump party (as determined by the partial transcript $\msgg{< i}$).
	
	\begin{description}
		\item[\Party is a small-jumps party.]
		Conditioned on ${\CorruptedMsgg{< i} = \msgg{< i}}$, the \ith message is altered from its honest (conditional) distribution according to $\Pi$, with probability $p \le \nfrac{16\lambda^2}{\sqrt{\numparties}}$ (see \stepref{alg:smallparty:alter} of \cref{alg:adversary_robust_normal}).
		If the \ith message is altered, it is sampled according to $\biased{\sqrt{\numparties}}{\protincc{\Pi}{\msgg{< i}, \; \cdot}}{\restr{\HonestMsgg{i}}{\HonestMsgg{< i} = \msgg{< i}}}$. By \cref{claim:properties_biased}(3), $\restr{\CorruptedMsgg{i}}{\CorruptedMsgg{< i} = \msgg{< i}}$ is distributed like $\biased{p \sqrt{\numparties}}{\protincc{\Pi}{\msgg{< i}, \; \cdot}}{\restr{\HonestMsgg{i}}{\HonestMsgg{< i} = \msgg{< i}}}$. Hence, by \cref{claim:properties_biased}(2) 
		\begin{align*}
			&\kld{\restr{\CorruptedMsgg{i}}{\CorruptedMsgg{< i} = \msgg{< i}}}{\restr{\HonestMsgg{i}}{\HonestMsgg{< i} = \msgg{< i}}} \\
			&\le 2 \cdot \brackets{p \sqrt{\numparties}}^2 \cdot \varian{\protincc{\Pi}{\HonestMsgg{\le i}} \cond \HonestMsgg{< i} = \msgg{< i}} \\
			&\le 2 \cdot 16^2\lambda^4 \cdot \varian{\protincc{\Pi}{\HonestMsgg{\le i}} \cond \HonestMsgg{< i} = \msgg{< i}}.
		\end{align*}
		
		\item[\Party is a large-jump party.]
		Conditioned on ${\CorruptedMsgg{< i} =\msgg{< i}}$, the \ith message is altered from its honest (conditional) distribution according to $\Pi$, with probability $\lambda^2 \sqrt{v}$ where $v \eqdef \varian{\protincc{\Pi}{\HonestMsgg{\le i}} \cond \HonestMsgg{< i} = \msgg{< i}}$. If the \ith message is altered, it is sampled according to $\biased{\nfrac{1}{\sqrt{v}}}{\protincc{\Pi}{\msgg{< i}, \; \cdot}}{\restr{\HonestMsgg{i}}{\HonestMsgg{< i} = \msgg{< i}}}$. Hence, by \cref{claim:properties_biased}(3), $\restr{\CorruptedMsgg{i}}{\CorruptedMsgg{< i} = \msgg{< i}}$ is distributed like $\biased{\lambda^2}{\protincc{\Pi}{\msgg{< i}, \; \cdot}}{\restr{\HonestMsgg{i}}{\HonestMsgg{< i} = \msgg{< i}}}$. By \cref{claim:properties_biased}(2), we conclude that
		\begin{equation*}
			\kld{\restr{\CorruptedMsgg{i}}{\CorruptedMsgg{< i} = \msgg{< i}}}{\restr{\HonestMsgg{i}}{\HonestMsgg{< i} = \msgg{< i}}} 
			\le 2 \cdot \lambda^4 \cdot \varian{\protincc{\Pi}{\HonestMsgg{\le i}} \cond \HonestMsgg{< i} = \msgg{< i}}.
		\end{equation*}
	\end{description}
\end{proof}

\begin{proof} [Proof of \cref{clm:smallkl}]
	Let the set $\GoodJumps(\msg)$ denote the {value} of $\GoodJumps$ determined by $\CorruptedMsg = \msg$. Compute,
	\begin{align}
		&\kld{\CorruptedMsgg{\le \lenproto}}{\HonestMsgg{\le \lenproto}} \nonumber \\
		&= \sum_{i=1}^\lenproto \texpec{\msg \getsr \CorruptedMsg}{\kld{\restr{\CorruptedMsgg{i}}{\CorruptedMsgg{< i} = \msgg{< i}}}{\restr{\HonestMsgg{i}}{\HonestMsgg{< i} = \msgg{< i}}}} \label{eq:smallkl1} \\
		&= \texpec{\msg \getsr \CorruptedMsg}{\sum_{i \in \GoodJumps(\msg)} \kld{\restr{\CorruptedMsgg{i}}{\CorruptedMsgg{< i} = \msgg{< i}}}{\restr{\HonestMsgg{i}}{\HonestMsgg{< i} = \msgg{< i}}}} \label{eq:smallkl2} \\
		&\le \texpec{\msg \getsr \CorruptedMsg}{\sum_{i \in \GoodJumps(\msg)} 16^3\lambda^4 \cdot \varian{\protincc{\Pi}{\HonestMsgg{\le i}} \cond \HonestMsgg{< i} = \msgg{< i}}} \label{eq:smallkl3} \\
		&= 16^3\lambda^4 \cdot \expecc{\textstyle \sum_{i \in \GoodJumps} \varian{\protincc{\Pi}{\HonestMsgg{\le i}} \cond \HonestMsgg{< i} = \CorruptedMsgg{< i}}} \nonumber \\
		&= 16^3\lambda^4 \cdot \expecc{\textstyle \sum_{i \in \GoodJumps} \ycondvar{i}} \label{eq:smallkl4} \\
		&\le 16^3 \lambda^3. \label{eq:smallkl5}
	\end{align}
	\Eqref{eq:smallkl1} follows by chain rule of KL Divergence (see \cref{fact:kl_chain_rule}). \Eqref{eq:smallkl2} follows since non-$\GoodJumps$ are not corrupted. \Inqref{eq:smallkl3} follows by \cref{claim:smallkl:single}. \Inqref{eq:smallkl4} follows by definition of $Y_i$ (\ie its conditional distribution is $\protincc{\Pi}{\HonestMsgg{\le i}}$). And finally, \Inqref{eq:smallkl5} follows by \cref{clm:smallvar}.
\end{proof}
\newcommand{\smallP}{{\mathsf{small}}}
\newcommand{\largeP}{{\mathsf{large}}}
\subsection{Biasing Robust Coin Flip}\label{sec:NormalRobustCF:4}
In this section, we use the attack on normal robust protocols proved to exist in \cref{sec:NormalRobustCF}, for attacking \emph{arbitrary} robust protocols. We do that by transforming an arbitrary robust protocol into a related normal coin-flipping protocol and proving that the attack on the latter normal protocol stated in \cref{thm:biasing_normal_robust_coinflips}, yields an attack of essentially the same quality on the original (non-normal) protocol, thus proving \cref{thm:biasing_zero_robust_protocols}.

We start by defining the normal form variant of a coin-flipping protocol. Let $\Pi$ be an $\numparties$-party, $\lenproto$-round, full-information coin-flipping protocol. Letting $\altNP = 2 \lenproto \numparties + 1$, the $\altNP$-party, $\lenproto$-round, $\numparties$-normal variant of $\Pi$, is defined as follows:

\begin{protocol}[$\numparties$-normal protocol $\tPi$]\label{proto:NormalTrans}~
	\begin{enumerate}
		\item For each party $\Pc$ of the protocol $\Pi$, the protocol $\tPi$ has $2\lenproto$ parties $\Pc^\smallP_1, \ldots, \Pc^\smallP_\lenproto$ and $\Pc^\largeP_1, \ldots, \Pc^\largeP_\lenproto$. In addition, $\tPi$ has a special party named $\BadParty$.
		
		\item For each party $\Pc$ of $\Pi$, start three counters $L_\Pc = S_\Pc =1$, and $A_\Pc=0$.
		
		\item In rounds $i=1$ to $\lenproto$, the protocol is defined as follows.
		\begin{enumerate}
			\item Let $\msgg{<i}$ denote the messages sent in the previous rounds, and let $\Pc$ be the party that would have sent the \ith message in $\Pi$ given this transcript.
			
			\item Let $\condMsgg{i}$ be the distribution $\restr{\HonestMsgg{i}}{\HonestMsgg{< i} = \msgg{< i}}$ and let $v_i \eqdef \varian{\protincc{\Pi}{\msgg{< i}, \condMsgg{i}}}$.
			
			\item Set $\Party'$ (the ``active'' party) as follows:
			
			\begin{enumerate}
				\item {\bf If} $\minnn{\supp{\protincc{\Pi}{\msgg{< i,} \condMsgg{i}}}} \le -\supportthresholdnt$, set $\Party'$ to $\BadParty$. \label{alg:badparty}
				
				\item {\bf Else, If} $v_i \ge \variancethresholdnt$, set $\Party'$ to $\Pc^\largeP_{L_\Pc}$, and update $L_\Pc = L_\Pc+1$. \label{alg:bigjumps}
				
				\item {\bf Else, If} $v_i < \variancethresholdnt$: \label{alg:smalljumps}
				
				\quad Set $\Party'$ to $\Pc^\smallP_{S_\Pc}$ and update $A_\Pc = A_\Pc + v_i$
				
				\quad {\bf If} $A_\Pc > \variancethresholdnt$:
				
				\begin{itemize}[leftmargin=1.3cm]
					\item Set $S_\Pc = S_\Pc+1$.
					\item Set $A_\Pc = 0$.
				\end{itemize}
			\end{enumerate}
			
			\item $\Party'$ sends the \ith message, as $\Pc$ would in $\Pi$ given the partial transcript $\msgg{<i}$.
		\end{enumerate}
	\end{enumerate}
	
\end{protocol}

\begin{claim}\label{claim:NormalTrans}
	Assume $\Pi$ is a $\numparties$-party full-information coin-flipping protocol, then $\tPi$ is a $\numparties$-normal full-information coin-flipping protocol.
\end{claim}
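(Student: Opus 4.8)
The plan is to verify, one at a time, the four defining properties of an $\numparties$-normal protocol (\cref{def:normal_protocol}) for $\tPi$; the entire argument is bookkeeping on the three counters $L_\Pc,S_\Pc,A_\Pc$. The key preliminary observation, which I would state up front, is that \cref{proto:NormalTrans} merely \emph{renames} the party that sends each message: the message distributions and the deterministic Boolean output function are copied verbatim from $\Pi$, and the identity of the active party $\partition'$ in round $i$, together with all three counters, is a deterministic function of the public partial transcript $\msgg{<i}$ (it is recomputed from scratch every round). Hence $\tPi$ is stateless with a single speaker per round and the same output function as $\Pi$, so it is a legitimate full-information coin-flipping protocol, and---crucially---$\Msggg{\le i}{\tPi}\equiv\Msggg{\le i}{\Pi}$ for every $i$. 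Consequently $\protincc{\tPi}{\msgg{\le i}}=\protincc{\Pi}{\msgg{\le i}}$, the conditional jump supports and conditional variances of $\tPi$ coincide with those of $\Pi$, the quantity $v_i$ computed inside \cref{proto:NormalTrans} equals $\varian{\protincc{\tPi}{\Msggg{\le i}{\tPi}}\cond\Msggg{<i}{\tPi}=\msgg{<i}}$, and ``round $i$ is a large (resp.\ small) jump of $\tPi$'' is exactly ``$v_i\ge\variancethreshold$'' (resp.\ ``$v_i<\variancethreshold$''). This is what lets each condition of \cref{def:normal_protocol}, though phrased in terms of $\tPi$, be discharged using the $\Pi$-quantities that the transformation explicitly manipulates.

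With that in hand I would record three bookkeeping facts, all read off a fixed transcript $\msgg{}$ (writing ``round-$i$ jump support'' for $\supp{\protincc{\Pi}{\Msggg{\le i}{\Pi}}\cond\Msggg{<i}{\Pi}=\msgg{<i}}$): (i) $\BadParty$ sends the $i$-th message iff the round-$i$ jump support meets $(-\infty,-\supportthreshold]$, since routing to $\BadParty$ occurs exactly in \stepref{alg:badparty}; (ii) for each original party $\Pc$, the pseudo-party $\Pc^\largeP_j$ speaks at most once in $\msgg{}$ (because $L_\Pc$ is incremented whenever \stepref{alg:bigjumps} fires), and whenever it speaks $v_i\ge\variancethreshold$, i.e.\ it has a large jump; (iii) by \stepref{alg:smalljumps}, the pseudo-party $\Pc^\smallP_j$ speaks only in rounds with $v_i<\variancethreshold$, so it never has a large jump, and the rounds it owns are precisely those during which $S_\Pc=j$, over which $A_\Pc$ grows from $0$ by $v_i$ each round, with $\Pc^\smallP_j$ ``closed'' (and $S_\Pc$ advanced) the first time the running $A_\Pc$ exceeds $\variancethreshold$.

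Then the four conditions follow. \textbf{Single non-robust party:} take the special party $\BadParty$; by (i) and the transcript-equivalence above, $i\in\Idx_\BadParty(\msgg{})$ iff the round-$i$ jump support of $\tPi$ meets $(-\infty,-\supportthreshold]$, which is precisely the required equivalence. \textbf{Large-jump party sends a single message:} a large-jump party other than $\BadParty$ cannot be any $\Pc^\smallP_j$ by (iii), so it is some $\Pc^\largeP_j$, which by (ii) speaks exactly once (it participates, having a jump). \textbf{Small-jumps party has bounded overall variance:} a small-jumps party cannot be any $\Pc^\largeP_j$ (those have a large jump whenever they participate, by (ii)), so it is some $\Pc^\smallP_j$; by (iii), $\sum_{i\in\Idx_\partition(\msgg{})}v_i$ equals the final value of $A_\Pc$ for this $j$, which is at most $\variancethreshold$ (its value just before the last addition) plus the last $v_i<\variancethreshold$, hence at most $2\variancethreshold$, and each $v_i$ is the conditional variance of the corresponding jump of $\tPi$. \textbf{At most $\numparties$ unfulfilled parties:} by (iii), $\Pc^\smallP_j$ is closed iff its accumulated variance strictly exceeds $\variancethreshold$, so an unfulfilled $\Pc^\smallP_j$ is never closed, i.e.\ $S_\Pc$ never passes $j$; thus at most one pseudo-party per original party can be unfulfilled in $\msgg{}$, giving at most $\altNP\le\numparties$ unfulfilled parties in total (in the regime where \cref{proto:NormalTrans} is applied, $\altNP=\numparties$).

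The argument uses no inequalities beyond the ``add-then-test'' arithmetic on the accumulator $A_\Pc$, and the one place that needs care is exactly that accounting: getting the $2\variancethreshold$ bound (rather than $\variancethreshold$) and the ``at most one unfulfilled pseudo-party per original party'' statement right, together with being explicit that it is the transcript-equivalence of the first paragraph that allows the $\tPi$-phrased conditions of \cref{def:normal_protocol} to be checked via the $\Pi$-side quantities $v_i$ and jump supports that \cref{proto:NormalTrans} computes.
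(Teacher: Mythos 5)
Your proof is correct and takes essentially the same approach as the paper's: verifying the four conditions of \cref{def:normal_protocol} one by one via bookkeeping on the counters $L_\Pc$, $S_\Pc$, $A_\Pc$, with the same $2\cdot\variancethreshold$ accounting and the same ``only the last $\Pc^\smallP_{S_\Pc}$ can be unfulfilled'' argument. The only difference is that you spell out explicitly the transcript-equivalence between $\tPi$ and $\Pi$ (that the transformation merely renames speakers, so jumps and conditional variances coincide), which the paper leaves implicit.
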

\begin{proof}
	We handle each of the conditions independently,
	\begin{description}
		\item[Single non-robust party:] \stepref{alg:badparty} properly handles jumps that should belong to $\BadParty$.
		
		\item[Large-jump party sends a single message:] Clearly  \stepref{alg:bigjumps} associates parties of the form $\Party_k^\largeP$ with at most one jump, and it is clear that only parties of this form might have large jumps.
		
		\item[Small-jumps party has bounded overall variance:] \stepref{alg:smalljumps} assures that once $A_\Pc > \variancethresholdnt$, namely the active party has a sum of conditional variances which is larger than $\variancethresholdnt$, it is never associated with another jump further along the execution. Thus, since $A_\Pc$ increases by at most $\variancethresholdnt$ at a time, it never surpasses $2 \cdot \variancethresholdnt$.
		
		\item[At most $\numparties$ unfulfilled parties:] Note that parties which have a sum of conditional variances which is at most $\variancethresholdnt$ must be parties of the form $\Party_k^\smallP$, and the only parties of this form that participate in the protocol (namely, unfulfilled parties) are $\Party_{S^f_\Party}^\smallP$ where \Party is some party (at most $\numparties$) and $S^f_\Party$ is the final value of $S_\Party$. Therefore, at most $\numparties$ unfulfilled parties exist for any transcript.
	\end{description}
\end{proof}

\newcommand{\tA}{\MathAlgX{\widetilde{\Ac}}}

\paragraph{Proving \cref{thm:biasing_zero_robust_protocols}.} 
Given the above tool and \cref{thm:biasing_normal_robust_coinflips}, the proof of \cref{thm:biasing_zero_robust_protocols} is immediate.
\begin{proof}[Proof of \cref{thm:biasing_zero_robust_protocols}]
	Let $\tPi$ be the $\numparties$-normal variant of $\Pi$ defined by \cref{proto:NormalTrans}. By \cref{thm:biasing_normal_robust_coinflips}, there exists a $\amtcorruptions$-adaptive adversary \tA for $\tPi$ such that $\expecc{\aprot{\tPi}{\tA}} \ge 1 - \epsnt$. Consider the adversary \Ac on $\Pi$ that emulates \tA while transforming corruptions of the parties of $\tPi$ to parties of $\Pi$ according to the mapping implicitly defined in \cref{proto:NormalTrans}.
	It is clear that $\expecc{\aprot{\Pi}{\Ac}} = \expecc{\aprot{\tPi}{\tA}} \ge 1 - \epsnt$. In addition, corrupting $k$ parties in $\tPi$ is translated to corrupting at most $k$ parties of $\Pi$, since by construction the parties in $\tPi$ are refinements of the parties in $\Pi$. We conclude that \Ac is the desired $\amtcorruptions$-adaptive 
\end{proof}

\subsection{Proving \cref{claim:properties_biased}}\label{sec:properties_biased}
In this section, we prove \cref{claim:properties_biased}.

\begin{lemma}[Restatement of \cref{claim:properties_biased}]
	\BiasedLemma
\end{lemma}

\begin{proof}[Proof of \cref{claim:properties_biased}]~
	\begin{description}
		\item[Item 1:]
		\begin{align*}
			& \expecc{\bracketss{f}{\biased{\alpha}{f}{P}}} = \sum_{e \in \supp{P}} f(e) \cdot \pr{\biased{\alpha}{f}{P} = e} \\
			& = \sum_{e \in \supp{P}} f(e) \cdot \pr{P = e} \cdot (1 + \alpha f(e)) \\
			& = \expecc{f(P) \cdot (1 + \alpha f(P))} = \expecc{f(P)} + \alpha \cdot \expecc{f^2(P)} = \alpha \cdot \varian{f(P)}.
		\end{align*}
		
		\item[Item 2:]
		\begin{align*}
			& \kld{\biased{\alpha}{f}{P}}{P} = \sum_{e \in \supp{P}} \pr{\biased{\alpha}{f}{P} = e} \cdot \logg{\frac{\pr{\biased{\alpha}{f}{P} = e}}{\pr{P = e}}} \\
			& = \sum_{e \in \supp{P}} \pr{P = e} \cdot (1 + \alpha f(e)) \cdot \logg{1 + \alpha f(e)} \\
			& = \expecc{(1 + \alpha f(P)) \cdot \logg{1 + \alpha f(P)}} = \expecc{\logg{1 + \alpha f(P)}} + \expecc{\alpha f(P) \cdot \logg{1 + \alpha f(P)}} \nonumber \\
			& \le \logg{1 + \expecc{\alpha f(P)}} + \expecc{2\alpha^2 f^2(P)} = 2\alpha^2 \cdot \varian{f(P)}.
		\end{align*}
		The last inequality follows by Jensen's inequality and \cref{lem:xlogopx_bound}.
		
		\item[Item 3:]
		\begin{align*}
			& \pr{\brackets{p \cdot \biased{\alpha}{f}{P} + \brackets{1 - p} \cdot P} = e} \\
			& = p \cdot \pr{\biased{\alpha}{f}{P} = e} + \brackets{1 - p} \cdot \pr{P = e} \\
			& = p \cdot \pr{P = e} \cdot (1 + \alpha f(P)) + \brackets{1-p} \cdot \pr{P = e} \\
			& = \pr{P = e} \cdot \brackets{1 + p\alpha f(P)}.
		\end{align*}
		
		\item[Item 4:] 
		Consider the following random process: Sample $a \gets P$. If $f(a) \ge 0$, set $b = a$. If $f(a) < 0$ with probability $1 + \alpha f(a)$ set $b = a$, otherwise sample $b \gets P^+_f$ for
		\begin{align*}
			P^+_f \equiv \begin{cases}
				e \text{ with probability } \frac{\pr{P=e} \cdot f(e)}{\expecc{\abs{f(P)}}} & \quad \text{ for } e \in \supp{P} $ with $ f(e) > 0
			\end{cases}
		\end{align*}
		By construction $f(b) \ge f(a)$. In addition, it is not hard to verify that the marginal distributions of $a$ and $b$ are that of $P$ and $\biased{\alpha}{f}{P}$, respectively.
	\end{description}
\end{proof}

\newcommand{\advrep}{\Ac^\mathrm{R}_0}
\def\corruptionsrequired{\nfrac{\sqrt{\numparties}}{\lambda\delta}}

\def\restatementepslamdel{
	\begin{notation} [Restatement of \cref{notation:epslamdel}]
		\epslamdelDef
	\end{notation}
}

\section{Biasing Arbitrary Coin Flip}\label{sec:arbitrary_coinflip}
In this section, we use the attack on robust protocols, described in \cref{sec:AttackingRobustProtocols}, to prove our main result: an adaptive attack on any full-information coin-flipping protocols. The main result of our paper is given below. Recalling our notations,

\restatementepslamdel

\begin{theorem} [Biasing full-information coin flips] \label{thm:main_result}
    For any $\numparties$-party, full-information coin-flipping protocol $\Pi$, there exists a $\bigO{\sqrt{\numparties} \cdot \log^3 \numparties}$-adaptive adversary \Ac, such that $\expecc{\aprot{\Pi}{\Ac}} \le \epsnt$ or $\expecc{\aprot{\Pi}{\Ac}} \ge 1 - \epsnt$.
\end{theorem}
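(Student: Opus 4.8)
The plan is to reduce an arbitrary protocol to the robust case already handled by \cref{thm:biasing_zero_robust_protocols}, via an iterative ``robustification'' of $\Pi$ driven by a deterministic one-shot attacker that exploits large negative jumps. Set $\Pi_0 \eqdef \Pi$, and given $\Pi_t$ do the following: if $\expecc{\Pi_t} < \eps$, stop; if $\Pi_t$ is $\numparties$-robust in the sense of \cref{def:robustProtocols}, stop; otherwise let $\Ac_t$ be the adversary that, before each round, corrupts the current party and forces a message value $m^-$ with $\bracketss{\Pi_t}{\msgg{<i}, m^-} < \bracketss{\Pi_t}{\msgg{<i}} - \supportthreshold$ whenever such a value is available \emph{and} no party has been corrupted yet, and set $\Pi_{t+1} \eqdef \aprot{\Pi_t}{\Ac_t}$. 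Note that each $\Ac_t$ is deterministic, corrupts at most one party, and that $\Pi_{t+1}$ is again an $\numparties$-party full-information coin-flipping protocol.

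First I would bound the number of iterations. Since $\Ac_t$ leaves the execution honest up to the first round $\tau$ at which a suitable $m^-$ is available, acts there, and is inactive afterwards, and since $(\bracketss{\Pi_t}{\Msggg{\le i}{\Pi_t}})_{i}$ is a martingale (the Doob martingale of the output), optional stopping at $\min(\tau,\lenproto)$ gives $\expecc{\Pi_{t+1}} \le \expecc{\Pi_t} - \supportthreshold \cdot \pr{\tau \le \lenproto}$. Because $\Pi_t$ is not $\numparties$-robust we have $\pr{\tau \le \lenproto} > \delta$, so each iteration decreases the expected outcome by more than $\delta \cdot \supportthreshold = \nfrac{\delta}{\lambda\sqrt{\numparties}}$. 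Since the expected outcome always lies in $[0,1]$, the procedure halts after $t^{\star} \le \nfrac{\lambda\sqrt{\numparties}}{\delta} = \bigO{\sqrt{\numparties}\cdot\log^3\numparties}$ iterations (using $\delta = \nfrac{1}{\log^2\numparties}$ and $\lambda = \bigO{(\log\log\numparties)^{\nfrac{1}{10}}}$), having used at most $t^{\star}$ corruptions in total.

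Next I would split on the reason for halting. If $\expecc{\Pi_{t^{\star}}} < \eps$, then iterating the composition lemma (\cref{lem:adversary_composition}) over the deterministic adversaries $\Ac_0,\ldots,\Ac_{t^{\star}-1}$ produces a $\bigO{\sqrt{\numparties}\log^3\numparties}$-adaptive adversary $\Ac \eqdef \Ac_{t^{\star}-1}\circ\cdots\circ\Ac_0$ on $\Pi$ with $\expecc{\aprot{\Pi}{\Ac}} = \expecc{\Pi_{t^{\star}}} < \eps$. Otherwise $\Pi_{t^{\star}}$ is $\numparties$-robust with $\expecc{\Pi_{t^{\star}}} \ge \eps$, so \cref{thm:biasing_zero_robust_protocols} supplies an $\amtcorruptions$-adaptive adversary $\Bc$ (deterministic, by \cref{prop:determinization}) with $\expecc{\aprot{\Pi_{t^{\star}}}{\Bc}} \ge 1 - \bigO{\eps}$; composing again gives $\Ac \eqdef \Bc\circ\Ac_{t^{\star}-1}\circ\cdots\circ\Ac_0$, which is $(t^{\star} + \amtcorruptions) = \bigO{\sqrt{\numparties}\log^3\numparties}$-adaptive and satisfies $\expecc{\aprot{\Pi}{\Ac}} = \expecc{\aprot{\Pi_{t^{\star}}}{\Bc}} \ge 1 - \bigO{\eps}$ by \cref{lem:adversary_composition}. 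Either way we obtain the adversary claimed in \cref{thm:main_result}.

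The heavy lifting is already done: it is \cref{thm:biasing_zero_robust_protocols} (the content of \cref{sec:AttackingRobustProtocols}). What remains here is essentially the iteration bound of the second paragraph — the main point to get right being that each robustification step makes a \emph{fixed} amount of progress, which is where the optional-stopping argument and the \emph{quantitative} definition of $\numparties$-robustness enter; in particular \cref{def:robustProtocols} only forbids large negative jumps with probability $>\delta$ rather than entirely, so the mild imperfection of $\Pi_{t^{\star}}$ is precisely what \cref{thm:biasing_zero_robust_protocols} is built to absorb (through the normal-form transformation, which routes the rare non-robust jumps to the party $\BadParty$ that the attacker of \cref{thm:biasing_normal_coinflips} never corrupts). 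A minor point to verify along the way is that $\Pi_t$ remains an $\numparties$-party stateless full-information coin-flipping protocol at each stage, so that both \cref{def:robustProtocols} and \cref{thm:biasing_zero_robust_protocols} genuinely apply to $\Pi_{t^{\star}}$.
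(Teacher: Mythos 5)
Your proposal is correct and follows essentially the same route as the paper: iterate the deterministic one-shot negative-jump attacker until the expectation drops below $\eps$ or the protocol becomes $\numparties$-robust, bound the number of iterations by the per-step decrease of $\supportthreshold\cdot\delta$, and finish either by composing the one-shot adversaries or by composing them with the adversary of \cref{thm:biasing_zero_robust_protocols} via \cref{lem:adversary_composition} and \cref{prop:determinization}. The only differences are organizational — you stop adaptively and count iterations directly (with an explicit optional-stopping justification of the per-step progress, which the paper states as the immediate \cref{claim:properties_adv0}), whereas the paper fixes the iteration count in advance and pigeonholes to find a robust intermediate protocol.
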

Our proof makes use of the following deterministic one-shot (modifies at most a single message) adversary attacking an $\numparties$-party full-information coin-flipping protocol $\Gamma$. The adversary takes advantage of large negative jumps in order to bias the $\Gamma$'s output towards $0$, 

\begin{algorithm}[One-shot adaptive adversary $\Bc$ on $\Gamma$] \label{alg:adversary_badjumps}	
    \item {\bf For} $i =1$ {\bf to} $\Length{\Gamma}$:
    \begin{enumerate}
        \item Let $\msgg{<i}$ be the messages sent in the previous rounds, and let $\partition$ be the party about to send the \ith message.

        \item Denote $\cM_i \eqdef \supp{\Msggg{\le i}{\Gamma} \cond \Msggg{< i}{\Gamma} = \msgg{< i}}$,
        
        {\bf If} no message was corrupted before and $\exists m \in \cM_i \suchthat \protincc{\Gamma}{\msgg{< i}, m} \le -\supportthresholdnt$,
        corrupt and instruct $\partition$ to broadcast such a message $m$ as it next message.
    \end{enumerate}
\end{algorithm}

\smallskip
\smallskip
The proof of the following fact is immediate.
\begin{claim}\label{claim:arbitrarycoinflip:expec_oneshot}
    Let $\Gamma$ be an $\numparties$-party full-information coin-flipping protocol. Then:
    \begin{align*}
        \expecc{\aprot{\Gamma}{\Bc}} \ge \expecc{\Gamma} + \supportthresholdnt \cdot \pr{\exists i \suchthat \minnn{{\supp{\protincc{\Gamma}{\Msggg{\le i}{\Gamma}} \cond \Msggg{< i}{\Gamma}}}} \le -\supportthresholdnt}.
    \end{align*}
\end{claim}

Equipped with the above tool and the attack presented in \cref{sec:AttackingRobustProtocols}, we are ready to prove our main result.
\begin{proof} [Proof of \cref{thm:main_result}]
    Denote $t = \corruptionsrequired = \bigO{\sqrt{\numparties} \cdot \log^3 \numparties}$, consider the protocols $\Pi^0,\ldots,\Pi^t$ recursively defined by $ \Pi^0 \eqdef \Pi$ and $\Pi^{i+1} \eqdef \aprot{\Pi^i}{\Bc}$. If $\expecc{\Pi^t} < \epsnt$, then by \cref{lem:adaptive_adversary_composition} there exists a $t$-adaptive adversary that biases $\Pi$'s output to less than $\epsnt$ (the composition of all intermediate adversaries), and we are done. Else, by \cref{claim:arbitrarycoinflip:expec_oneshot} there exists $k \in [t]$ such that for $\Psi = \Pi^k$ it holds that
    \[
        \pr{\exists i \suchthat \minnn{\supp{\protincc{\Psi}{\Msggg{\le i}{\Psi}} \cond \Msggg{< i}{\Psi}}} \le -\supportthresholdnt} \le \delta.
    \]

    Hence, by \cref{thm:biasing_zero_robust_protocols}, there exists an $\amtcorruptions$-adaptive adversary \Ac
    such that
    \[
        \expecc{\aprot{\Psi}{\Ac}} \ge 1 - \epsnt.
    \]

    Denote by $\Cc$ the 
    $t$-adaptive adversary according to \cref{def:adaptive_adversary_composition} (the composition of all intermediate adversaries) such that $\Pi_{\Cc} \equiv \Pi^t$. Let $\Ac \circ \Cc$ be the $\bigO{\sqrt{\numparties} \cdot \log^3 \numparties}$-adaptive adversary according to \cref{def:adaptive_adversary_composition}, by \cref{lem:adaptive_adversary_composition} it holds that $\expecc{\aprot{\Pi}{\Ac \circ \Cc}} = \expecc{\aprot{\Psi}{\Ac}} \ge 1 - \epsnt$, concluding the proof.
\end{proof}

\section{Strongly Adaptive, Bidirectional Adversaries}\label{sec:StronglyAdaptive}
In this section, we use strongly adaptive adversaries to make the attacker described in the previous sections bidirectional (able to bias the protocol's outcome to both zero and one).
Formally (using the notations of \cref{sec:arbitrary_coinflip}), we prove the following result.
\restatementepslamdel

\begin{theorem} [Forcing full-information coin-flipping protocols] \label{thm:StronglyAdaptive}
    For any $\numparties$-party, full-information coin-flipping protocol $\Pi$ such that $\expecc{\Pi} \ge \epsnt$, there exists a $\bigO{\sqrt{\numparties} \cdot \log^3 \numparties}$-strongly adaptive adversary \Ac, such that $\expecc{\aprot{\Pi}{\Ac}} \ge 1 - \epsnt$.
\end{theorem}

Note that this is indeed a bidirectional attack capable of biasing protocols in which both values are significant enough. If one wants to bias a protocol towards 0, simply apply the attack on the flipped protocol (exactly the same protocol, besides the output function, which returns the opposite from the original output function).

\paragraph{The attack.}
The attack follows the same lines as the one described in \cref{sec:arbitrary_coinflip}, except in the immunization phase (\cref{alg:adversary_badjumps}) that turns the protocol to be robust. Using strongly adaptive corruptions, the adversary can always immunize the protocol so that the (non-strong) adaptive attack described in \cref{sec:AttackingRobustProtocols} is applicable. 
The strongly adaptive immunization deals with \emph{non-robust jumps} much better; instead of preparing for the worst-case scenario and corrupting every such jump, the attacker deals with them only if the unfavorable outcome is taken. 

\begin{algorithm}[One-shot strongly adaptive adversary $\Bc$ on $\Gamma$] \label{alg:strongadversary_badjumps}
    \item {\bf For} $i =1$ {\bf to} $\Length{\Gamma}$:
    \begin{enumerate}
        \item[] Let $\msgg{\le i}$ be the messages sent in the previous (and current) rounds, and let $\partition$ be the party that sent the \ith message. Let $\cM_i \eqdef \supp{\Msggg{\le i}{\Gamma} \cond \Msggg{< i}{\Gamma} = \msgg{< i}}$.
        
       \item[] {If} no message was corrupted before and $\protincc{\Gamma}{\msgg{\le i}} \le -\supportthresholdnt$, corrupt $\partition$ and instruct it send a message $m \in \cM_i$ with $\protincc{\Gamma}{\msgg{< i}, m} \ge 0$.
    \end{enumerate}
\end{algorithm}

\smallskip
\smallskip
Similarly to \cref{sec:arbitrary_coinflip}, consider the following (immediate) claim.
\begin{claim}\label{claim:properties_sadv0}
    Let $\Gamma$ be an $\numparties$-party, full-information coin-flipping protocol $\Gamma$. Then
    \begin{align*}
        \expecc{\aprot{\Gamma}{\Bc}} \ge \expecc{\Gamma} + \expecc{\mathrm{\#Corruptions}} \cdot \supportthresholdnt.
    \end{align*}
\end{claim}
\begin{proof}
	Immediate.
\end{proof}
Similarly to the immunization phase presented in \cref{sec:arbitrary_coinflip}, we iteratively apply $\Bc$ until
\begin{align}\label{eq:StronglyAdaptive:stop}
\pr{\exists i \suchthat \minnn{\supp{\protincc{\Gamma}{\Msggg{\le i}{\Gamma}} \cond \Msggg{< i}{\Gamma}}} \le -\supportthresholdnt} < \deltant / 2	
\end{align}
Denote the number of such applications by $t$. Let $\Pi^0 \eqdef \Pi$ and $\Pi^{i+1} \eqdef \aprot{\Pi^i}{\Bc}$. By \cref{claim:properties_sadv0}, it holds that $\expecc{\Pi^{i + 1}} \ge \expecc{\Pi^i} + \supportthresholdnt \cdot \expecc{\mathrm{\text{\#Corruptions in }} {\Pi^i}_{\Bc}}$. Reorganizing the terms, $\expecc{\mathrm{\text{\#Corruptions in }} {\Pi^{i}}_{\Bc}} \le (\expecc{\Pi^{i + 1}} - \expecc{\Pi^{i}}) \cdot \oneoversupportthresholdnt$. In total,
\begin{align}
	&\\
    &\sum_{i=0}^{t - 1} \expecc{\mathrm{\text{\#Corruptions in }} {\Pi^{i}}_{\Bc}} = \oneoversupportthresholdnt \cdot \sum_{i=0}^{t - 1} (\expecc{\Pi^{i}} - \expecc{\Pi^{i - 1}}) = \oneoversupportthresholdnt \cdot (\expecc{\Pi^t} - \expecc{\Pi^0}) \le \oneoversupportthresholdnt \nonumber
\end{align}

Let $\Cc$ be the composition of all intermediate adversaries, according to \cref{def:strongly_adaptive_adversary_composition}, such that $\Pi_{\Cc} \equiv \Pi^t$. By the above inequality, the expected amount of strongly adaptive corruptions $\Cc$ performs is at most $\oneoversupportthresholdnt$. Let $\Cc'$ be the variant of \Cc that aborts once it reaches $2\oneoversupportthresholdnt / \deltant$ corruptions, and denote $\Psi = \Pi_{\Cc'}$. By Markov's inequality, $\Cc'$ aborts with probability at most $\deltant / 2$. Combined with our stopping condition  (\ie \cref{eq:StronglyAdaptive:stop}) for applying \Bc, it follows that that 
\begin{align*}
\pr{\exists i \suchthat \minnn{\supp{\protincc{\Psi}{\Msggg{\le i}{\Psi}} \cond \Msggg{< i}{\Psi}}} \le -\supportthresholdnt} \le \deltant.
\end{align*}

In addition, $\expecc{\Psi} \ge \expecc{\Pi} \ge \epsnt$. Hence, by \cref{thm:biasing_zero_robust_protocols}, there exists an $\amtcorruptions$-strongly adaptive adversary \Ac such that \[\expecc{\aprot{\Psi}{\Ac}} \ge 1 - \epsnt.\]

Consider the attacker $\Ac \circ \Cc'$, the composed $\bigO{\sqrt{\numparties} \cdot \log^3 \numparties}$-strongly-adaptive adversary. By \cref{lem:strongly_adaptive_adversary_composition}, $\expecc{\aprot{\Pi}{\Ac \circ \Cc'}} = \expecc{\aprot{\Psi}{\Ac}} \ge 1 - \epsnt$, concluding the proof.
\subsection*{Acknowledgment}
We are grateful to Raz Landau, Nikolaos Makriyannis, Eran Omri, and Eliad Tsfadia for very helpful discussions. The first author is thankful to Michael Ben-Or for encouraging him to tackle this beautiful question.


\bibliographystyle{abbrvnat}
\bibliography{bib/crypto}

\end{document}